\newcommand\myshade{85}
\colorlet{mylinkcolor}{YellowOrange}
\colorlet{mycitecolor}{Aquamarine}
\colorlet{myurlcolor}{violet}
\renewcommand{\hat}{\widehat}
\renewcommand{\tilde}{\widetilde}
\newcommand{\bfm}[1]{\ensuremath{\boldsymbol{#1}}} 
\def\bzero{\bfm 0} 
\def\bbone{\mathbbm{1}} 
\def\ba{\bfm a}   \def\bA{\bfm A}  
\def\bb{\bfm b}   \def\bB{\bfm B}  
   \def\bC{\bfm C}  
\def\be{\bfm e}   \def\bE{\bfm E}  
\def\bff{\bfm f}    
\def\bg{\bfm g}     
   \def\bH{\bfm H}  
   \def\bI{\bfm I}
   \def\bM{\bfm M}  
   \def\bN{\bfm N}  
   \def\bO{\bfm O}  
   \def\bP{\bfm P}  
\def\bq{\bfm q}   \def\bQ{\bfm Q}  
     \def\RR{\mathbb{R}}
\def\bs{\bfm s}     
\def\bu{\bfm u}   \def\bU{\bfm U}  
\def\bv{\bfm v}   \def\bV{\bfm V}  
\def\bx{\bfm x}   \def\bX{\bfm X}  
\def\by{\bfm y}   \def\bY{\bfm Y}  
\def\bz{\bfm z}   \def\bZ{\bfm Z}
\def\calA{{\cal  A}} 
\def\calC{{\cal  C}} 
\def\calD{{\cal  D}}
\def\calM{{\cal  M}} 
\def\calN{{\cal  N}}
\def\calR{{\cal  R}} 
\def\calS{{\cal  S}} 
\def\calU{{\cal  U}}
\newcommand{\bfsym}[1]{\ensuremath{\boldsymbol{#1}}}
 \def\bepsilon{\bfsym \varepsilon}
              \def\bSigma{\bfsym \Sigma}
           \def\bOmega {\bfsym {\Omega}}
 \def\bxi{\bfsym {\xi}}          \def\bXi{\bfsym {\Xi}}
          \def\bPhi{\bfsym {\Phi}}
          \def\bPsi{\bfsym {\Psi}}
\providecommand{\abs}[1]{\left\lvert#1\right\rvert}
\providecommand{\norm}[1]{\left\lVert#1\right\rVert}
\providecommand{\angles}[1]{\left\langle #1 \right\rangle}
\providecommand{\paran}[1]{\left( #1 \right)}
\providecommand{\brackets}[1]{\left[ #1 \right]}
\providecommand{\braces}[1]{\left\{ #1 \right\}}
\providecommand{\defeq}{\triangleq}
\DeclarePairedDelimiterX{\infdivx}[2]{(}{)}{%
  #1 \; \delimsize\| \; #2%
}
\DeclareMathOperator{\Cov}{Cov}
\newcommand{\E}[1]{{\mathbb{E}} \left[ #1 \right]} 
\DeclareMathOperator{\rank}{rank}
\DeclareMathOperator{\Var}{Var}
\newcommand{\vect}[1]{{\textsc{Vec}} \left( #1 \right)}
\newcommand{\mat}[1]{{\textsc{Mat}} \left( #1 \right)}
\newtheorem{definition}{Definition}[section]
\newtheorem{assumption}[definition]{Assumption}
\newtheorem{lemma}[definition]{Lemma}
\newtheorem{theorem}[definition]{Theorem}
\newcommand{\smlo}[1]{{\rm o} \left( #1 \right) }
\newcommand{\Op}[1]{{\mathcal{O}_p} \left( #1 \right) }
\definecolor{royalpurple}{rgb}{0.47, 0.32, 0.66}
\def\beq{\begin{equation}}
\def\eeq{\end{equation}}
\newcommand{\mybibsty}{chicago}
\newcommand{\mybib}{0-main}
\begin{document}
%
%

\newcommand{\blind}{0}

\if0\blind
{
	\title{Modeling Multivariate Spatial-Temporal Data with \\ Latent Low-Dimensional Dynamics}
    
    \author[1]{Elynn Y. Chen 
    \thanks{Supported in part by NSF Grants DMS-1803241.}}
    \author[2]{Xin Yun \thanks{Equal contribution.}
    }
    \author[3]{Rong Chen \thanks{Supported in part by NSF Grants DMS-1503409, DMS-1737857, DMS-1803241 and IIS-1741390.}}
    \author[4]{Qiwei Yao \thanks{Corresponding author. Email: \url{Q.Yao@lse.ac.uk}}}
    
    \affil[2]{University of California, Berkeley}
    \affil[1]{School of Data Science, Fudan University}
    \affil[3]{Rutgers University}
    \affil[4]{London School of Economics and Political Science}
	\date{\today}
	\maketitle
} \fi

\if1\blind
{
	\title{Modeling Multivariate Spatial-Temporal Data with \\ Latent Low-Dimensional Dynamics}
	\author[1]{}
	\date{\vspace{-1ex}}
	\maketitle
	\vspace{-10ex}
} \fi

\begin{abstract}
High-dimensional multivariate spatial-temporal data arise frequently in a wide range of applications; however, there are relatively few statistical methods that can simultaneously deal with spatial, temporal and variable-wise dependencies in large data sets. 
In this paper, we propose a new approach to utilize the correlations in variable, space and time to achieve dimension reduction and to facilitate spatial/temporal predictions in the high-dimensional settings.
The multivariate spatial-temporal process is represented as a linear transformation of a lower-dimensional latent factor process. 
The spatial dependence structure of the factor process is further represented non-parametrically in terms of latent empirical orthogonal functions.
The low-dimensional structure is completely unknown in our setting and is learned entirely from data collected irregularly over space but regularly over time. 
We propose innovative estimation and prediction methods based on the latent low-rank structures. 
Asymptotic properties of the estimators and predictors are established.
Extensive experiments on synthetic and real data sets show that, while the dimensions are reduced significantly, the spatial, temporal and variable-wise covariance structures are largely preserved.
The efficacy of our method is further confirmed by the prediction performances on both synthetic and real data sets. 

\vspace{3em}

\noindent\textit{Keywords:} High-dimensional data; Multivariate spatial temporal process; Factor analysis; Latent empirical orthogonal function. 
\end{abstract}

%
%

\section{Introduction}  \label{sec:intro}


The increasing availability of multivariate data collected over geographic regions and time in various applications has created unique opportunities and challenges for practitioners seeking to capitalize on its full utility. 
For example, United States Environmental Protection Agency publishes daily, from more than 20,000 monitoring stations, a collection of environmental and meteorological measurements such as temperature, pressure, wind speed/direction and levels of various pollutants. 
Such data naturally constitute a tensor (multi-dimensional array) with three modes (dimensions) representing $n$ spacial locations, $T$ time points and $p$ variables, respectively. 
Since physical processes rarely occur in isolation but rather influence and interact with one another, simultaneously modeling the dependencies among different variables, regions, and time points is of great potential to reduce dimensions, produce more accurate estimation/prediction and further provide a deeper understanding of real world phenomena. 
At the same time, methodological issues arise because these data exhibit complex multivariate spatial-temporal co-variances that may involve potential dependencies between spatial locations, time points and different processes. 

Traditionally, researchers mainly restrict their analysis to only two dimensions while fixing the third: 
time series analysis applied to a slice of such data at one location focuses on temporal modeling and prediction \citep{tsay2018nonlinear,box2015time,tsay2014multivariate,brockwell2013time,fanyao2005nonlinear}; 
spatial statistical models for a slice of such data at one time point address spatial dependence and prediction over unobserved locations \citep{cressie2015statisticsS}; 
and uni-variate spatial-temporal statistics concentrate on only one variable observed over space and time \citep{huang1996spatio,cressie2015statisticsST,lopes2008spatial,cressie2008fixed}. 

In this paper, we propose a new class of multivariate spatial-temporal models that characterize spatial, temporal and variable dependence simultaneously. 
This is made possible by an innovative combination of multivariate factor models \citep{fan2018robust, chang2015high, lam2012factor, lam2011estimation, bai2003inferential, bai2002determining} and the method of latent empirical orthogonal functions \citep{monahan2009empirical,hannachi2007empirical,von2001statistical,wilks1995statistical}.
Specifically, the $p$-dimensional spatial-temporal process is represented as a linear combination of a $r$-dimensional latent common factor process ($r\ll p$), which captures the correlations among $p$ variables.
The factor spatial-temporal processes are further represented in terms of latent empirical orthogonal functions (EOFs), which captures the spatial dependencies. 
As we shall see later, the EOFs in our setting have a close relationship with the loading matrix in factor analysis. 
We refer to the EOFs in our setting as the spatial loading functions. 
The coefficients of spatial loading functions are time-varying random variables and thus capture the temporal dependence. 
We provide a detailed analysis of the covariance structure of the proposed model across variables, space and time in Section \ref{sec:cov-spacetimevariable}. 
It shows that the proposed model is a generalization of several low-rank models in the literature \citep{higdon2002space,wikle1999dimension,kammann2003geoadditive,cressie2010fixed,banerjee2008gaussian,finley2009improving,tzeng2018resolution}.
In addition, the low-dimensional structure and the the spatial loading functions are completely unknown in our setting and is learned entirely from data collected irregularly over space but regularly over time. 

The estimation builds upon the idea in \cite{wang2019factor} and further incorporates non-parametric estimation for the spatial loading functions.
Particularly, we assembled the observations from $n$ discrete spatial locations as a time series of $n \times p$ matrices whose rows and columns correspond to $n$ sampling sites and $p$ variables, respectively.
As a result, the model on the discrete sampling locations can be reformulated in a similar form as the matrix factor model and it is estimated with a variant procedure based on the whiteness of spacial nugget effects. 
We also proposed prediction method for new locations and time points. 
Thanks to the innovative combination of reduced-rank models of two aspects, our method is able to efficiently handle multivariate spatial-temporal data sets with large $n$ (space), $p$ (variable) and $T$ (time points). 

\subsection{Related works} \label{sec:related}

To overcome the computational burden with large spatial or spatial-temporal data sets, researchers have developed reduced-rank approximations for univariate processes.
\cite{higdon2002space} uses kernel convolution, \cite{wikle1999dimension,kammann2003geoadditive,cressie2008fixed} successfully reduces the computational cost of kriging by using a flexible family of non-stationary covariance functions constructed from low rank basis functions.
\cite{banerjee2008gaussian} and \cite{finley2009improving} uses predictive process, and \cite{tzeng2018resolution} uses thin-plate splines.
See also reviews of low-rank representations for spatial processes in \cite{wikle2010low-rank,cressie2015statisticsS,cressie2015statisticsST}. 
Our method applies to multivariate processes and incorporates two aspects of dimension reductions. 
The first aspect is the variable-wise dimension reduction where the observed $p$-dimensional process is represented as a linear combination of $r$-dimensional latent factor process. 
Further, the latent factor process assumes a reduced-rank representation whose formulation is similar to the aforementioned reduced rank approximation methods.
However, the spatial loading functions is completely unknown.
Moreover, we don't impose any distributional assumptions on the underlying process, nor any parametric forms on its covariance function.


For multivariate spatial data, \cite{cook1994some} introduced the concept of a spatially shifted factor and a single-factor shifted-lag model and \cite{majure1997dynamic} discussed graphical methods for identifying shifts. 
Following the ideas of multiple-lag dynamic factor models that generalize static factor models in the time series setting, \cite{christensen2001generalized, christensen2002latent, christensen2003modeling} extended the shifted-lag model to a generalized shifted-factor model by adding multiple shifted-lags and developed a systematic statistical estimation, inference, and prediction procedure. 
However, they do not include the time dimension and their method is an analogy of the multiple-lag dynamic factor models applied in the spatial setting.
Thus, their definition of factors is very different from ours.
Moreover, the assumption that spatial processes are second-order stationary is required for the moment-based estimation procedure and the theoretical development.

Various multivariate spatial-temporal conditional auto-regressive models have also been proposed by \cite{carlin2003hierarchical, congdon2004multivariate,pettitt2002conditional,zhu2005generalized,daniels2006conditionally,tzala2008bayesian}, among others. 
Most of these papers, however, focus on empirical applications and do not offer any theoretical guarantees.
Also, their estimation methods necessitate assumptions on the distribution of the observations. 
\cite{bradley2015multivariate} introduced a multivariate spatial-temporal mixed effects model to analyze high-dimensional multivariate data sets that vary over different geographic regions and time points. 
They adopt a reduced rank spatial structure \citep{wikle2010low-rank} and model temporal behavior via vector auto-regressive components. 
Their method only applies to low-dimensional multivariate observations because they model each variable separately. 
The cross-dependence structures of multiple processes are modeled jointly by \cite{genton2015cross,bourotte2016flexible}.
These approaches impose separability and various independence assumptions, which are not appropriate for many settings, as these models fails to capture important interactions and dependencies between different variables, regions, and times \citep{stein2005space}. 
In addition, they assume the random effect term is common across all processes which is unrealistic especially in the case with a large number of variables.
Our method can effectively deal with data sets with large $n$, $p$, and $T$ by simultaneously modeling the variable-wise and spatial low-rankness. 
Besides, our modeling of the spatial dependence though latent factor processes is different from the aforementioned methods in that we impose no assumptions about the stationarity over space, nor the distribution of data, nor any restrictive form of spatial covariance functions. 

\subsection{Contribution}

We propose a new class of models for large-scale multivariate spatial-temporal processes. 
The model characterizes spatial, temporal and variable-wise dependencies simultaneously.
The spatial dimension $n$, the variable dimension $p$ and the time dimension $T$ can be very large at the same time. 
To our best knowledge, our model is the first to deal with spatial, temporal and variable-wise covariance simultaneously, while allowing large $n$, $p$ and $T$. 
It provides a flexible and rich cross-covariance structure for these dimensions simultaneously.

We develop efficient estimation and prediction procedures and establish theoretical properties of the estimators and predictors. 
The estimation procedure is based on a novel reformulation of the discrete observations of the $p$-dimensional spatial-temporal process. 
We believe this formulation is quite general and flexible to be extended to enable more sophisticated analysis along space, time or variable dimensions.  

\subsection{Notation and Organization}

When $\bA$ is a square matrix, we denote by $tr(\bA)$, $\lambda_{max}(\bA)$ and $\lambda_{max}(\bA)$ the trace, maximum and minimum eigenvalues of the matrix $\bA$, respectively. 
We use $\norm{\bA}_2 \defeq \sqrt{\lambda_{max}(\bA'\bA)}$ and $\norm{\bA}_F \defeq \sqrt{tr(\bA'\bA)}$ to denote the spectral and Frobenius norms of the matrix $\bA$, respectively. $\norm{\bA}_{min}$ denotes the positive square root of the minimal eigenvalue of $\bA'\bA$ or $\bA\bA'$, whichever is a smaller matrix. 
For two sequences $a_N$ and $b_N$, we write $a_N \asymp b_N$ if $a_N = O(b_N)$ and $b_N = O(a_N)$. 

The remainder of the article is outlined as follows. Section 2 introduces the model settings. Section 3 discusses estimation procedures for loading matrix and loading functions. Section 4 discuss the procedures for kriging and forecasting over space and time, respectively. Section 5 presents the asymptotic properties of the estimators. Section 6 illustrates the proposed model and estimation scheme on a synthetic data set; and finally Section 7 applies the proposed method to a real data set. Technique proofs are relegated to the Appendix.

\section{Model} \label{sec:model}

Consider a multivariate spatial-temporal process $\tilde\by_t(\bs) \in \mathbb{R}^p$:
\begin{equation} \label{eqn:stvp-1}
\tilde\by_t(\bs) = \bC^\top(\bs) \bz_t(\bs) + \bxi_t(\bs) + \bepsilon_t(\bs), \quad t = 0, \pm 1, \pm 2, \cdots, \; \bs \in \calS \subset \calR^2.
\end{equation}
The first mean process term with observable covariates $\bz_t(\bs)\in\RR^m$ and unknown coefficient matrix $\bC(\bs)\in\RR^{m \times p}$ captures the large-scale correlations.
The second term $\bxi_t(\bs)\in\RR^p$ is the zero-mean latent spatial-temporal vector process that captures the medium or small-scale correlation structure. It satisfies the conditions
\begin{equation} \label{eqn:xi-cond}
\E{\bxi_t(\bs)} = \bzero, \quad \Cov\brackets{\bxi_{t_{1}}(\bu), \bxi_{t_{2}}(\bv)} = \bSigma_{ \xi, \abs{t_{1} - t_{2}} }(\bu, \bv).
\end{equation}
The additive error vector $\bepsilon_t(\bs)$ is the unknown spatial nugget effects which are spatially uncorrelated but are allowed to be temporally correlated.
It is also uncorrelated with the signal process. 
That is, 
\begin{eqnarray}
& \E{\bepsilon_t(\bs)} = \bzero, \quad \Var\brackets{\bepsilon_t(\bs)} = \bSigma_\epsilon(\bs), \quad \Cov\brackets{\bepsilon_{t_1}(\bu), \bepsilon_{t_2}(\bv)} = \bzero \; \forall \; t_1, t_2, \;\bu \ne \bv, \label{eqn:eps-cond} \\
& \Cov\brackets{\bxi_{t_{1}}(\bu), \bepsilon_{t_2}(\bv)} = \bzero \; \forall \; t_1, t_2, \bu, \bv.
\end{eqnarray}
Given the observable covariates $\bz_t(\bs)\in\RR^m$, the coefficients $\bC(\bs)$ can be calculated by least square regression. 
To make the main idea clear, we focus on the zero-mean process $\by_t(\bs) = \tilde\by_t(\bs) - \bC^\top(\bs) \bz_t(\bs)$  with out loss of generality. 
That is,
\begin{equation} \label{eqn:stvp}
\by_t(\bs) = \bxi_t(\bs) + \bepsilon_t(\bs), \quad t = 0, \pm 1, \pm 2, \cdots, \; \bs \in \calS \subset \calR^2.
\end{equation}
Under the condition \eqref{eqn:xi-cond} and \eqref{eqn:eps-cond}, $\by_t(\bs)$ is second-order stationary in time $t$.
We have $\E{\by_t(\bs)}=\bzero$ and 
\begin{equation*}
\Cov\brackets{\by_{t_1}(\bu), \by_{t_2}(\bv)} =  \bSigma_{\xi, \abs{t_1-t_2}}(\bu, \bv) +\bSigma_{\epsilon, \abs{t_1 - t_2}}(\bu) \cdot \bbone\paran{\bu=\bv},
\end{equation*}
where the covariance $\bSigma_{\xi, t}(\bu, \bv)$ is assumed to be continuous in $\bu$ and $\bv$.

Model \eqref{eqn:stvp} does not impose any stationary conditions over space.
However, it requires that $\by_t(\bs)$ is second order stationary in time $t$ to enable the learning of the dependence across different locations and times.
In practice the data often show some trends and seasonal patterns in time. The existing de-trend and de-seasonality methods in time series analysis \citep{tsay2018nonlinear,tsay2014multivariate,fanyao2005nonlinear} can be applied to make each time series temporally stationary, including the inclusion of time trends in the mean term $\bC'(\bs)\bz_t(\bs)$.

\subsection{The covariance structures across variables, space and time}
\label{sec:cov-spacetimevariable}

To capture the correlation between the multiple processes, we assume that the latent spatial-temporal vector process are driven by a lower-dimension latent spatial-temporal factor process linearly in the form:
\begin{equation} \label{eqn:xi_fac}
\bxi_t(\bs) = \bB \bff_t(\bs),
\end{equation}
where $\bff_t(\bs) \in \mathbb{R}^r$ is the latent factor process ($r \ll p$) and $\bB$ is the $p \times r$ loading matrix that characterized the correlation between multiple processes. Equation (\ref{eqn:xi_fac}) is a generalization of the widely-used statistical factor models for high-dimensional data sets \citep{fan2018robust, chang2015high, lam2012factor, lam2011estimation, bai2003inferential, bai2002determining} to the spatial-temporal process.

To capture the spatial temporal correlations, we further assume a finite dimensional representation for $\bff_t(\bs)$, that is, the latent $r \times 1$ factor process $\bff_t(\bs)$ admits a finite functional structure,
\begin{equation} \label{eqn:fac_finstr}
\bff_t(\bs) = \sum_{j=1}^{d} a_j(\bs) \bx_{tj} ,
\end{equation}
where $a_j(\bs)$, $j \in [d]$ are deterministic and linearly independent functions (i.e. none of them can be written as a linear combination of the others) in the Hilbert space $L_2(\calS)$, and random vector $\bx_{tj} \in \mathbb{R}^r$. Equation (\ref{eqn:fac_finstr}) models the latent factor process as the linear combination of random vectors with weight $a_j(\bs)$.

Functions $a_1(\cdot), \cdots, a_d(\cdot)$ are not uniquely defined by (\ref{eqn:fac_finstr}) even with known $\bff_t$. 
Particularly, we can rewrite $\bff_t(\bs) = \sum_{j=1}^{d} a^*_j(\bs) \bx^*_{tj}$ where $a^*_j(\bs) = c a_j(\bs)$ and $\bx^*_{tj} = c^{-1} \bx_{tj}$ for any scalar $c \ne 0$.
There is no loss of generality in assuming that $a_1(\cdot), \cdots, a_d(\cdot)$ are orthonormal in the sense that
\[
\angles{a_i, a_j} = \bbone(i=j),
\]
as any set of linear independent functions in a Hilbert space can be standardized to this effect.
The above identification condition is defined on the whole space. 
We will elaborate more on the model identification in the next section. 
Combining \eqref{eqn:xi_fac} and \eqref{eqn:fac_finstr}, we have
\begin{equation} \label{eqn:xi_fac_finstr}
\bxi_t(\bs) = \bB \sum_{j=1}^{d} a_j(\bs) \bx_{tj} = \bB \bX'_t \ba(\bs),
\end{equation}
where $\bX_t=\left( \bx_{t1}, \cdots, \bx_{td} \right)'$ and $\ba(\bs) = \left( a_1(\bs), \cdots, a_d(\bs) \right)'$.
Therefore, the latent spatial-temporal covariance of vector process $\bxi_{t_1}(\bu)$ and $\bxi_{t_2}(\bv)$ can be written as
\begin{equation} \label{eqn:spatial_cov}
\bSigma_{ \xi, \abs{t_{1} - t_{2}} }(\bu, \bv)=\Cov\brackets{\bB \bX'_{t_1} \ba(\bu), \bB \bX'_{t_2} \ba(\bv)}=\bB \bSigma_{f,|t_1-t_2|}(\bu,\bv)\bB',
\end{equation}
where 
\begin{equation}  \label{eqn:fac_cov}
\bSigma_{f,|t_1-t_2|}(\bu,\bv)=\sum_{i=1}^d\sum_{j=1}^d a_i(\bu)a_j(\bv)\bSigma_{x,ij,|t_1-t_2|},
\end{equation}
and $\bSigma_{x,ij,\abs{t_1-t_2}}=\Cov\brackets{\bx_{t_1i}, \bx_{t_2j}} \in \RR^{r\times r}$.
Equation (\ref{eqn:spatial_cov}) captures the spatial-temporal dependence structure via the finite dimensional representation of latent factors in (\ref{eqn:fac_finstr}). Specifically, the covariance of factor $\bSigma_{f,\abs{t_1-t_2}}$ is the linear combination of $\bSigma_{x,ij,\abs{t_1-t_2}}$, which captures the time-dependence structure between $t_1$ and $t_2$. The weight $a_i(\bu)a_j(\bv)$ captures the spatial dependence between location $\bu$ and $\bv$. 

\paragraph{Relation to the univariate reduced-rank models.}
In the special case where $\bff_t(\bs)$ is a scalar, i.e. $r=1$, the covariance of latent factor assumes the following structure
\begin{equation}   \label{eqn:fac_cov-1}
\sigma_{f,\abs{t_1-t_2}}(\bu,\bv)=\ba(\bu)^\top\bSigma_{x,\abs{t_1-t_2}}\ba(\bv), 
\end{equation}
where $\bSigma_{x,\abs{t_1-t_2}}$ is a $d\times d$ matrix consisting of $\bSigma_{x,ij,\abs{t_1-t_2}}$ (which is a scalar when $r=1$) for all $i,j\in[d]$.
Spatial-temporal structure \eqref{eqn:fac_cov-1} corresponds to the low-rank empirical orthogonal function method in the literature of univariate geostatistics \citep{wikle1999dimension,kammann2003geoadditive,cressie2010fixed,banerjee2008gaussian,finley2009improving,tzeng2018resolution}.

\paragraph{Relation to the multivariate reduced-rank models.}
In the case of known low-dimensional factor process $\bff_t(\bs)$, the covariance of any pair of variables in $\bff$ assumes the structure in \eqref{eqn:fac_cov-1}. 
This corresponds to the low-rank approximation in the literature of multivariate geostatistics.  
In our setting, the latent factor process $\bff_t(\bs)$ is unknown and needs to be estimated from an observed high-dimensional process $\by_t(\bs)$.

\subsection{Discrete sample observations}
Since we only observe discrete observations, we assume that we have a $n \times p$ matrix $\bXi_t \defeq \brackets{ \bxi_t(\bs_1), \cdots, \bxi_t(\bs_n) }^\top$ where $\bxi_t(\bs_i)\in\RR^p$ consists of values of $\bxi_t(\bs)$ from the $i$-th sampling location. 
It follows from (\ref{eqn:xi_fac_finstr}) that
\begin{equation} \label{eqn:xi_fac_finstr_mat}
\bXi_t = \bA \bX_t \bB',
\end{equation}
where  $\bA = [a_j(\bs_i)]_{ij}$, $i \in [n]$ and $j \in [d]$. We are interested in estimating the loading matrix $\bB$, random matrix $\bX_t$, the spatial loading function matrix $\bA$, and the spatial loading functions $a_j(\bs)$ for $j\in[d]$.

Matrices $\bA$ and $\bB$ are not uniquely defined by \eqref{eqn:xi_fac}.
Specifically, we can rewrite $\bXi_t = \bA^* \bX^*_t \bB^{*'}$ where $\bA^* = \bA \bO_1$, $\bB^* = \bB \bO_2$, and $\bX^*_t = \bO_1^{-1} \bX_t \bO_2^{-1}$ for any invertible matrices $\bO_1$ and $\bO_2$.
To address this identification problem, we assume that columns of $\bA$ ($\bB$) are orthogonal.

Under the orthogonal assumption, the vector space spanned by the columns of $\bA(\bs)$ and $\bB$, denoted as $\calM(\bA(\bs))$ and $\calM(\bB)$, are uniquely defined.
In this article, we estimate matrix representations $\bQ_A$ and $\bQ_B$ of $\calM(\bA(\bs))$, $\calM(\bB)$ instead of $\bA(\bs)$ and $\bB$ under the assumption that
\begin{equation}  \label{eqn:Q-constraint}
\bQ_A' \bQ_A = \bI_d, \quad \text{and} \quad \bQ_B' \bQ_B = \bI_r,
\end{equation}
and the corresponding $\bZ_t$ such that (\ref{eqn:xi_fac_finstr_mat}) can be rewritten as
\begin{equation}\label{eqn:xi_fac_finstr_mat-1}
\bXi_t = \bA \bX_t \bB' = \bQ_A \bZ_t \bQ_B'.
\end{equation}
Given $\bQ_A$, the kernel reproducing Hilbert space (KRHS) spanned by $a_1(\cdot), \cdots, a_d(\cdot)$ is also uniquely defined and we estimate a set of representative functions $q_{a,1}(\cdot), \cdots, a_{q,d}(\cdot)$.
Therefore, the estimation of $\bA$, $\bB$, and $\bX_t$ in the multivariate spatial-temporal model can be converted to the estimation of $\bQ_A$, $\bQ_B$, and $\bZ_t$.
Further we use the estimators to estimate the latent spatial-temporal covariance and make spatial-temporal predictions for large scale multi-variate spatial temporal data set.
More details are discussed in the sequel.

\section{Estimation} \label{sec:est}

Let $\braces{ \tilde\by_t(\bs_i), \bz_t(\bs_i) }$, $i \in [n]$, $t \in [T]$ be the available observations over space and time, where $\tilde\by_t(\bs_i) \in \mathbb{R}^p$ and $\bz_t(\bs_i) \in \mathbb{R}^m$ is a vector of covariates  observed at location $\bs_i$ at time $t$. In this article, we restrict attention to the case where all variables have been measured at the same sample locations $\bs_i$, $i \in [n]$.

In general cases where $\bC(\bs) \ne \bzero$, we can estimate $\hat \bC(\bs)$ by least square regression from the observations $\braces{ \tilde\by_t(\bs_i), \bz_t(\bs_i) }$.
The following procedure can be applied to the residuals $\hat\by_t(\bs_i) \defeq \tilde\by_t(\bs_i) - \hat \bC^\top(\bs_i) \bz_t(\bs_i)$.
With out loss of generality, we consider a special case where $\bC(\bs) \equiv \bzero$ in \eqref{eqn:stvp}.
Now the observations are generated from the process
\begin{equation} \label{eqn:stvp_c0}
\by_t(\bs) = \bxi_t(\bs) + \bepsilon_t(\bs) = \bB \bX^\top_t \ba(\bs) + \bepsilon_t(\bs).
\end{equation}
From \eqref{eqn:xi_fac}, \eqref{eqn:fac_finstr}, and \eqref{eqn:xi_fac_finstr_mat}, we stack $\by_t(\bs_i)$, $i \in [n]$ together as rows and get
\begin{equation} \label{eqn:stvp_c0_mat}
\bY_t =  \bXi_t + \bE_t  = \bA \bX_t \bB^\top + \bE_t = \bQ_A \bZ_t \bQ_B^\top + \bE_t,
\end{equation}
where $\bY_t = \left( \by_t(\bs_1), \cdots, \by_t(\bs_n) \right)$ and $\bE_t = \left( \bepsilon_t(\bs_1), \cdots, \bepsilon_t(\bs_n) \right)^\top$.

Note that $\bA$ (or $\bB$) has the same column space as $\bQ_A$ (or $\bQ_B$).
They are different only up to a scalar factor or a rotation such that $\bA$ satisfies Condition \ref{cond:A_factor_strength} in Section \ref{sec:theory} while $\bQ_A$ satisfies $\bQ_A^\top \bQ_A = \bI_d$, and $\bB$ satisfies Condition \ref{cond:B_factor_strength} while $\bQ_B$ satisfies $\bQ_B^\top \bQ_B = \bI_r$.
In the following, we use the triplets $\paran{\bQ_A, \bZ_t, \bQ_B}$ and $\paran{\bA, \bX_t, \bB}$ interchangeably.

\subsection{Partitioned spatial loading spaces \texorpdfstring{$\calM (\bA_1)$}{MA1} and \texorpdfstring{$\calM (\bA_2)$}{MA2}} \label{subsec:est:A1A2}

Note that the nugget effect $\bepsilon_t(\bs)$ are uncorrelated over space.
We exploit this fact to exclude the covariance term incurred by the nugget effect.
Particularly, we divide $n$ locations $\bs_1, \ldots, \bs_n$ into two sets $\calS_1$ and $\calS_2$ with $n_1$ and $n_2$ elements respectively.
Preferably, we set $n_1 \asymp n_2 \asymp n/2$ according to Theorem \ref{thm:A_sep_err_bnd}.
Let $\bY_{lt}$ be a matrix consisting of $\by_t(\bs)$, $\bs \in \calS_l$, $l=1,2$ as rows. Then $\bY_{1t}$ and $\bY_{2t}$ are two matrices of dimension $n_1 \times p$ and $n_2 \times p$ respectively. It follows from \eqref{eqn:stvp_c0} that
\begin{equation} \label{eqn:stvp_mat_div}
\bY_{1t} =  \bXi_{1t} + \bE_{1t}  = \bA_1 \bX_t \bB^\top + \bE_{1t}, \qquad \bY_{2t} =  \bXi_{2t} + \bE_{2t}  = \bA_2 \bX_t \bB^\top + \bE_{2t},
\end{equation}
where $\bA_l$ is a $n_l \times d$ matrix, its rows are $\left( a_1(\bs), \ldots, a_d(\bs) \right)$ at different locations $\bs \in \calS_l$ and $\bE_{t,l}$ consists of $\bepsilon_t(\bs)$ as rows with $\bs \in \calS_l$, $l=1,2$.

For model identification, we assume that the columns of $\bA_l$, $l=1,2$ are orthogonal.
Under this assumption, $\calM(\bA_1)$ and $\calM(\bA_2)$, which are the column spaces of $\bA_1$ and $\bA_2$, are uniquely defined.
This however implies that $\bX_t$ in the second equation in (\ref{eqn:stvp_mat_div}) will be different from that in the first equation. Thus, we may rewrite (\ref{eqn:stvp_mat_div}) as
\begin{equation} \label{eqn:stvp_mat_div_1}
\bY_{1t} =  \bXi_{1t} + \bE_{1t}  = \bA_1 \bX_t \bB^\top + \bE_{1t}, \qquad \bY_{2t} =  \bXi_{2t} + \bE_{2t}  = \bA_2 \bX^*_t \bB^\top + \bE_{2t},
\end{equation}
where $\bX^*_t = \bO \bX_t$ and $\bO$ is an invertible $d \times d$ matrix.

Let $\by_{lt, \cdot j}$, $\be_{lt, \cdot j}$, and $\bb_{j \cdot}$ be the $j$-th column of $\bY_{lt}$,  $\bE_{lt}$, and $\bB$, $l=1,2$, $j \in [p]$, respectively.
Define spatial-cross-covariance matrix between the $i$-th and $j$-th variables as
\begin{equation} \label{eqn:Omega_ij}
\bOmega_{A,ij} = \Cov\brackets{\by_{1t, \cdot i}, \by_{2t, \cdot j}} = \bA_1 \Cov\brackets{ \bX_t \bb'_{i \cdot}, \bX^*_t \bb'_{j \cdot} } \bA_2.
\end{equation}
The covariance related to $\be_{1t,\cdot i}$ and $\be_{2t,\cdot j }$ are all zeros because they are spatial white noises and also uncorrelated with the signals.
When $d \ll n$, it is reasonable to assume that $\rank\brackets{\bOmega_{A,ij}} = d$.
Define
\begin{equation}
\bM_{A_1} = \sum_{i=1}^{p} \sum_{j=1}^{p} \bOmega_{A,ij} \bOmega^\top_{A,ij}, 
\quad\text{and}\quad
\bM_{A_2} = \sum_{i=1}^{p} \sum_{j=1}^{p} \bOmega^\top_{A,ij} \bOmega_{A,ij} \nonumber
\end{equation}

$\bM_{A_1}$ and $\bM_{A_2}$ share the same $d$ positive eigenvalues and $\bM_{A_l} \bq = \bzero$ for any vector $\bq$ perpendicular to $\calM(\bA_l)$, $l=1,2$. Therefore, the columns of a matrix representation of $\calM(\bA_l)$, $l=1,2$, can be estimated as the $d$ orthonormal eigenvectors of matrix $\bM_{A_l}$ corresponding to largest $d$ positive eigenvalues in the descending order.

Now we define the sample version of these quantities and introduce the estimation procedure. Suppose we have centered our observations $\bY_{1t}$ and $\bY_{2t}$, let $\hat{\bOmega}_{A,ij}$ be the sample cross-space covariance of $i$-th and $j$-th variables and $\hat{\bM}_{A_l}$ be the sample version of $\bM_{A_l}$, $l=1,2$, that is
\begin{equation} \label{eqn:M_hat}
\hat{\bOmega}_{A,ij} = \frac{1}{T} \sum_{t=1}^{T} \bY_{1t, \cdot i} \bY^\top_{2t, \cdot j}, \quad \hat{\bM}_{A_1} = \sum_{i=1}^{p} \sum_{j=1}^{p} \hat{\bOmega}_{A,ij} \hat{\bOmega}^\top_{A,ij}, \quad \hat{\bM}_{A_2} = \sum_{i=1}^{p} \sum_{j=1}^{p} \hat{\bOmega}^\top_{A,ij} \hat{\bOmega}_{A,ij}.
\end{equation}
A natural estimator for a matrix representation of $\calM(\bA_l)$, under the constraint that $\bQ_{A,l}^\top \bQ_{A,l} = \bI_d$, is defined as
\begin{equation} \label{eqn:Q_A_l}
\hat{\bQ}_{A,l} = \{ \hat{\bq}_{A, l1}, \cdots, \hat{\bq}_{A, ld} \}, \quad l = 1, 2,
\end{equation}
where $\hat{\bq}_{A, lj}$ is the eigenvector of $\hat{\bM}_{A_l}$ corresponding to its $j$-th largest eigenvalue.
Matrix $\hat{\bQ}_{A,l}$ estimates $\bA_{l}(\bs)$ up to a scalar factor while sharing the same column space.
However such an estimator ignores the fact that $\bxi_t(\bs)$ is continuous over the set $\calS$.
Section \ref{sec:re-estimate-A} estimates a refined spatial loading matrix $\hat{\bQ}_A$ and further estimates the loading function $\hat{\bQ}_A(\bs)$, which estimates $A(\bs)$ up to a scalar factor.

\subsection{Variable loading space \texorpdfstring{$\calM \paran{\bB}$}{MB}} \label{subsec:est:B}

To estimate the $p \times r$ variable loading matrix $\bB$, we again utilize the spatial whiteness properties of the nugget effect.
Recall that in Section \ref{subsec:est:A1A2}, the entire set of $n$ sampled locations are divided into two sets $\calS_1$ and $\calS_2$ of size $n_1$ and $n_2$, where $n_1 \asymp n_2 \asymp \frac{n}{2}$.
We keep only $m = \lfloor\frac{n}{2}\rfloor$ in each of $\calS_1$ and $\calS_2$ to calculate $\bB$.
When $n$ is even, we make use of all sampled locations, while when $n$ is odd, one of the sampled locations is dropped randomly.

We reuse the notation in equation \eqref{eqn:stvp_mat_div} for the observations in  $\calS_1$ and $\calS_2$ and rewrite it as ($\ref{eqn:stvp_mat_div_1}$) for model identification, except for now $\bY_{1t}$ and $\bY_{2t}$ are two matrices of same dimension $m\times p$.
Let $\by_{lt,i\cdot}$, $\be_{lt,i\cdot}$, and $\ba_{l,i\cdot}$ be the $i$-th row of $\bY_{lt}$, $\bE_{lt}$, and $\bA_l$, $l = 1,2$, respectively. Define the covariance matrix of $p$ variables sampled at the $i$-th location in $\calS_1$ and $j$-th location in $\calS_2$ as
\begin{equation*} \label{eqn:omega_B}
\bOmega_{B,ij} = \Cov\brackets{\by_{1t, i\cdot}, \by_{2t, j\cdot}} = \bB \Cov\brackets{ \bX_t^\top\ba_{1, i\cdot}, \bX_t^{*^\top}\ba_{2, j\cdot} } \bB^\top.
\end{equation*}
When $r \ll p$, it is reasonable to assume that $\rank\brackets{\bOmega_{B,ij}} = r$.
Let
\begin{equation}  \label{eqn:M_B}
\bM_B = \sum_{i=1}^{m} \sum_{j=1}^{m} \bOmega_{B,ij}\bOmega^\top_{B,ij}.
\end{equation}
Then, $\bM_B$ has $r$ positive eigenvalues and $\bM_B \bq = \bzero$ for any vector $\bq$ perpendicular to $\calM(\bB)$. Therefore, the columns of a matrix representation of $\calM(\bB)$ can be estimated as the $r$ orthonormal eigenvectors of matrix $\bM_B$ corresponding to the largest $r$ positive eigenvalues in the descending order.

Define the sample version of $\bOmega_{B,ij}$ and $\bM_B$ for centered observation $\bY_{t}$ as
\begin{equation} \label{eqn:M_B_hat}
\hat{\bOmega}_{B,ij} = \frac{1}{T} \sum_{t=1}^{T} \by_{1t, i\cdot} \by^\top_{2t, j\cdot}, \quad \hat{\bM}_B = \sum_{i=1}^m \sum_{j=1}^{m} \hat{\bOmega}_{B,ij}\hat{\bOmega}^\top_{B,ij}.
\end{equation}
A natural estimator for a matrix representation of $\calM(\bB)$ under constraint (\ref{eqn:Q-constraint}) is defined as
\begin{equation*}
\hat \bQ_B = \{ \hat{\bq}_{B,1}, \cdots, \hat{\bq}_{B,r} \},
\end{equation*}
where $\hat{\bq}_{B,i}$ is the eigenvector of $\hat{\bM}_B$ corresponding to its $i$-th largest eigenvalue.
Matrix $\hat{\bQ}_{B}$ estimates $\bB$ up to a scalar factor while sharing the same column space.

The above estimation procedure assumes that the latent dimensions $d \times r$ are known.
However, in practice we need to estimate $d$ and $r$ as well.
Two methods of estimating the latent dimension are (a) the eigenvalue ratio-based estimator, similar to those defined in \cite{lam2012factor, wang2019factor}; (b) the Scree plot which is standard in principal component analysis.
Let $\hat{\lambda}_1 \ge \hat{\lambda}_2 \ge \cdots \ge \hat{\lambda}_{r} \ge 0$ be the ordered eigenvalues of $\hat \bM_{B}$. The ratio-based estimator for $r$ is defined as
\begin{equation}  \label{eqn:eigen-ratio}
\hat{r} = \underset{{1 \le j \le r_{\max}}}{\arg \max} \frac{\hat{\lambda}_{j}}{\hat{\lambda}_{j+1}},
\end{equation}
where $r \le r_{\max} \le p$ is an integer.
In practice we may take $r_{\max} = \lceil p/2 \rceil$ or $r_{\max} = \lceil p/3 \rceil$.
Ratio estimators $\hat d_1$ and $\hat d_2$ is defined similarly with respect to $\hat \bM_{A_1}$ and $\hat \bM_{A_2}$, respectively.
We set $\hat d = \max \{ \hat d_1, \hat d_2 \}$.
\cite{chen2019statistical} shows that eigen-ratio estimators $\hat{d}$ and $\hat{r}$ are consistent under a similar setting.

\subsection{Signal matrix \texorpdfstring{$\bXi_t$}{Xit}} \label{subsec:est:latentfacsignal}

By (\ref{eqn:stvp_mat_div}), the estimators of two representations of the rotated latent matrix factor $\bZ_t$, $t \in [T]$, are defined as
\begin{equation} \label{eqn:fac_mat_est_diff}
\hat \bZ_{1t} = \hat \bQ_{A,1}^\top \bY_{1t} \hat \bQ_B, \qquad \hat \bZ_{2t} = \hat \bQ_{A,2}^\top \bY_{2t} \hat \bQ_B.
\end{equation}
The latent signal process are estimated by
\begin{equation} \label{eqn:signal_mat_est_sep}
\hat{\bXi}_t = \begin{bmatrix}\hat{\bXi}_{1 t} \\ \hat{\bXi}_{2 t} \end{bmatrix},
\end{equation}
where
\begin{equation*}
\hat{\bXi}_{1 t} = \hat \bQ_{A,1} \hat \bZ_{1t} \hat \bQ^\top_B = \hat \bQ_{A,1} \hat \bQ_{A,1}^\top \bY_{1t} \hat \bQ_B \hat \bQ^\top_B, \qquad  \hat{\bXi}_{2 t} = \hat \bQ_{A,2} \hat \bZ_{2t} \hat \bQ_B^\top = \hat \bQ_{A,2} \hat \bQ_{A,2}^\top \bY_{2t} \hat \bQ_B \hat \bQ_B^\top.
\end{equation*}
Equation (\ref{eqn:fac_mat_est_diff}) provides two estimates of $\bZ_t$ based on two partitioned sets of locations. Section \ref{sec:re-estimate-A} will re-estimate a unified version of latent factor matrix $\bZ_t$ from all sampling locations. Estimator of the latent signal process will also be re-estimated from all sampling locations. 

To mitigate the estimation error associated with the random partition of the location set, one could again carry out the estimation procedure with multiple random partitions and return the average estimates, similar to those done in \cite{huang2016krigings}.
To keep the core idea clear, we do not consider random partitions in this paper.
The results for the average estimates from random partitions can be derived similarly to \cite{huang2016krigings} based on the results of the present paper.

\subsection{Spatial loading space \texorpdfstring{$\calM\paran{\bA}$}{MA} and loading function \texorpdfstring{$\bQ_A(\bs)$}{QA}} \label{sec:re-estimate-A}

The procedure in Section \ref{subsec:est:A1A2} only estimates the spatial loading matrices $\hat \bQ_{A,1}$ and $\hat \bQ_{A,2}$ on two partitioned set of sampling locations.
Estimate loading functions from $\hat \bQ_{A,1}$ and $\hat \bQ_{A,2}$ separately will result in inefficient use of sampling locations.
In addition, equation \eqref{eqn:fac_mat_est_diff} gives estimators for two different representations of the latent matrix factor $\bZ_t$.
To get estimators of the $n \times d$ spatial loading matrix $\bQ_A$ for all sampling locations and $\bZ_t$, we use the estimated $\hat{\bXi}_t$ to re-estimate $\hat \bQ_A$ and $\hat{\bZ}_t$.

Recall that the population signals process is $\bxi_t(\bs) = \bB \bX^\top_t \bQ_A(\bs) = \bQ_B \bZ^\top_t \bq_a(\bs) $ and the $n \times p$ matrix $\bXi_t = \bA \bX_t \bB^\top = \bQ_A \bZ_t \bQ_B^\top$ is the signal matrix at discretized sampling locations at each time $t$.
To reduce dimension, we use $\bPsi_t = \bQ_A \bZ_t \in \mathbb{R}^{n \times r}$, rather than $\bXi_t \in \mathbb{R}^{n \times p}$.
Define
\[
\bM_A = \sum_{j=1}^{r} \Cov\brackets{ \bPsi_{t, \cdot j}, \bPsi_{t, \cdot j} } = \bQ_A \sum_{j=1}^{r}  \Cov\brackets{ \bZ_{t, \cdot j}, \bZ_{t, \cdot j} } \bQ_A^\top.
\]
However, true $\bXi_t$ or $\bPsi_t$ are not observable.
We estimate $\hat \bXi_t$ from \eqref{eqn:signal_mat_est_sep} and obtain
\[
\hat \bPsi_t = \hat \bXi_t \hat \bQ_B.
\]
From estimated values, we defined the estimated version of $\bM_A$ as
\[
\hat \bM_A = \frac{1}{T} \sum_{t=1}^T \hat \bPsi_t \hat \bPsi_t^\top,
\]
where $\hat{\bPsi}$ is chosen over $\hat{\bXi}$ because $\hat{\bPsi}$ has the same estimation error bound but is of lower dimension.

A natural estimator of a matrix representation of $\calM(\bA)$ under constraint \eqref{eqn:Q-constraint} is defined as
\[
\hat \bQ_A = \{ \hat{\bq}_{A,1}, \cdots, \hat{\bq}_{A,n} \},
\]
where $\hat{\bq}_{A,i}$ is the eigenvector of $\hat{\bM}_A$ corresponding to its $i$-th largest eigenvalue.
Matrix $\hat{\bQ}_{A}$ estimates $\bA$ up to a scalar factor while sharing the same column space.

The estimator of the rotated latent factor matrix $\bZ_t$ is obtained as
\begin{equation}
\hat \bZ_t = \hat \bQ_A^\top \hat \bPsi_t.
\end{equation}

Once $\hat \bQ_A$ is estimated, we estimate loading functions $q_{a,j}(\bs)$ from the estimated $n$ observations in column $\hat \bQ_{A, \cdot j}$ by the sieve approximation. Any set of bivariate basis functions can be chosen. In our procedure, we consider the tensor product linear sieve space $\Theta_n$, which is constructed as a tensor product space of some commonly used univariate linear approximating spaces, such as B-spline, orthogonal wavelets and polynomial series. Then for each $j \le d$,
\[
q_{a,j}(\bs) = \sum_{i=1}^{J_n} \beta_{i,j} u_i(\bs) + r_j(\bs).
\]
Here $\beta_{i,j}$ are the sieve coefficients of $i$ basis function $u_i(\bs)$ corresponding to the $j$-th factor loading function;  $r_j(\bs)$ is the sieve approximation error; $J_n$ represents the number of sieve terms which grows slowly as $n$ goes to infinity. We estimate $\hat{\beta}_{i,j}$ and the loading functions are approximated by $\hat q_{a,j}(\bs) = \sum_{i=1}^{J_n} \hat{\beta}_{i,j} u_i(\bs)$.

\section{Prediction} \label{sec:prediction}

\subsection{Spatial Prediction}

A major focus of spatial-temporal data analysis is the prediction of variable of interest over new locations.
For some new location $\bs_0 \in \calS$ and $\bs_0 \ne \bs_i$, $i \in [n]$, we aim to predict the unobserved value $\by_t(\bs_0)$ observations $\bY_t$, $t=[T]$.
By \eqref{eqn:stvp_c0}, we have $\by_t(\bs_0) = \bxi_t(\bs_0) + \bepsilon_t(\bs_0) = \bQ_B \bZ'_t \bq_a(\bs_0) + \bepsilon_t(\bs_0)$.
As recommended by \cite{cressie2015statisticsST}, we predict $\bxi_t(\bs_0)=\bQ_B \bZ'_t \bq_a(\bs_0)$ instead of $\by_t(\bs_0)$ directly.
Thus, a natural estimator is
\begin{equation} \label{eqn:pred_xi_s0}
	\hat{\bxi}_t(\bs_0) = \hat \bQ_B \hat \bZ'_t \hat \bq_a(\bs_0),
\end{equation}
where $\hat \bQ_B$, $\hat \bZ_t$ and $\hat \bq_a(\bs)$ are estimated following procedures in Section \ref{sec:est}.

For univariate spatial temporal process, \cite{huang2016krigings} propose the kriging with kernel smoothing for spatial prediction.
This method can be extended to our case by applying kriging with kernel smoothing for each one of the multivariate spatial temporal process.
We implement both our spatial prediction based on \eqref{eqn:pred_xi_s0} and kriging with kernel smoothing for each one of the multivariate spatial temporal process. Empirical results on synthetic as well as real data show that our method performance better than the kriging with kernel smoothing method.

\subsection{Temporal Prediction}

Temporal prediction focuses on predict the future values $\by_{t+h}(\bs_1), \ldots, \by_{t+h}(\bs_n)$ for some $h \ge 1$. By \eqref{eqn:stvp_c0}, we have $\by_{t+h}(\bs) = \bxi_{t+h}(\bs) + \bepsilon_{t+h}(\bs) = \bQ_B \bZ'_{t+h} \bq_a(\bs) + \bepsilon_{t+h}(\bs)$. 
Since $\bepsilon_{t+h}(\bs)$ is unpredictable white noise, the ideal predictor for $\by_{t+h}(\bs)$ is that for $\bxi_{t+h}(\bs)$. 
Thus, we focus on predict $\bxi_{t+h}(\bs) = \bQ_B \bZ'_{t+h} \bq_a(\bs)$. 
The temporal dynamics of the $\bxi_{t+h}(\bs)$ present in a lower dimensional matrix factor $\bZ_{t+h}$, thus a more effective approach is to predict $\bZ_{t+h}$ based on $\bZ_{t-l}, \ldots, \bZ_t$ where $l$ is a prescribed integer. 
Time series analysis \citep{tsay2014multivariate, tsay2018nonlinear} can be applied to $\bZ_t$ under general settings.
We use the auto-regression of order one (AR(1)) and take $l = 1$ to illustrate the idea.

Since the latent factor matrix time series $\bZ_t \in \mathbb{R}^{d \times r}$ is of low-dimension,
a straight forward method for predicting $\bZ_{t+h}$ is applying the multivariate time series analysis techniques to $\vect{\bZ_t}$.
Under vector auto-regressive model of order 1 -- VAR(1), we have
\begin{equation}
\vect{\bZ_t} = \bPhi \, \vect{\bZ_{t-1}} + \bu_t,  \nonumber
\end{equation}
where $\bPhi \in \mathbb{R}^{ d r \times d r }$ is the coefficient matrix of the VAR(1). Following the vector time series analysis \citep{tsay2014multivariate, tsay2018nonlinear}, we obtain estimators $\hat{\bPhi}$. A $h$-step forward prediction is given by
\begin{equation} \label{eqn:vec1}
\hat \bZ_{t+h}^{VAR} = \mat{ \hat \bPhi^h \vect{ \hat \bZ_t}  }.
\end{equation}

To preserve the matrix structure intrinsic to $\bZ_t$, we model $\{\bZ_t\}_{1:T}$ as the matrix auto-regressive model of order 1 -- MAR(1) \citep{yang2017autoregressive}.
 Mathematically,
\begin{equation}
\bZ_{t} = \bPhi_R \, \bZ_{t-1} \, \bPhi_C + \bU_t, \nonumber
\end{equation}
where $\bPhi_R \in \mathbb{R}^{d \times d}$ and $\bPhi_C \in \mathbb{R}^{r \times r}$ are row and column coefficient matrices, respectively.
The covariance structure of the matrix white noise $\bU_t$ is not restricted.
Thus, $\mathbf{vec}({\bU_t}) \sim \calN(\bzero, \bSigma_U)$ where $\bSigma_U$ is an arbitrary covariance matrix.
Matrix $\bPhi_R$ captures the auto-correlations between the spatial latent factors and $\bPhi_C$ captures the auto-correlations between the variable latent factors.
Following the generalized iterative method proposed in \cite{yang2017autoregressive}, we obtain estimators $\hat{\bPhi}_R$ and $\hat{\bPhi}_C$. A $h$-step forward prediction is given by
\begin{equation} \label{eqn:mar1}
\hat \bZ_{t+h}^{MAR} = \hat{\bPhi}^h_R \, \hat{\bZ}_{t} \, \hat{\bPhi}_C^h.
\end{equation}

Having an estimator $\hat \bZ_{t+h}$ from either vector AR(1) \eqref{eqn:vec1} or matrix AR(1) \eqref{eqn:mar1}, we obtain the prediction for $\by_{t+h}(\bs)$ by
\begin{equation}
\hat{\bxi}_{t+h}(\bs)= \hat \bQ_B \, \hat \bZ'_{t+h} \, \hat \bq_a(\bs),
\end{equation}
where $\hat \bQ_B$, $\hat \bZ_t$ and $\hat \bq_a(\bs)$ are estimated following procedures in Section \ref{sec:est}.

The advantage of MAR(1) over VAR(1) is that the number of unknowns in $\bPhi_R \in \mathbb{R}^{d \times d}$ and $\bPhi_C \in \mathbb{R}^{r \times r}$ is smaller than that in $\bPhi \in \mathbb{R}^{ d r \times d r }$.
This is especially important in high-dimensional setting.
Since the latent matrix factor $\bZ_t$ is of low-dimension in our case, they have similar performance as shown in the simulation.
\section{Asymptotic properties}  \label{sec:theory}

In this section, we investigate the rates of convergence for the estimators under the setting that $n$, $p$ and $T$ all go to infinity while $d$ and $r$ are fixed and the factor structure does not change over time.

\begin{assumption} \label{cond:vecXt_alpha_mixing}
\textbf{Alpha-mixing.} $\{ \Vec\brackets{\bX_t}, t=0,\pm 1, \pm 2, \cdots \}$ is $\alpha$-mixing. Specifically, for some $\gamma > 2$, the mixing coefficients satisfy the condition that
\[
\sum_{h=1}^{\infty} \alpha(h)^{1-2/\gamma} < \infty,
\]
where $ \alpha(h)=\underset{\tau}{\sup} \underset{A \in \mathcal{F}_{-\infty}^{\tau}, B \in \mathcal{F}_{\tau+h}^{\infty}}{\sup} \left| P(A \cap B) - P(A)P(B) \right| $ and $\mathcal{F}_{\tau}^s$ is the $\sigma$-field generated by $\{vec(\bX_t): {\tau} \le t \le s \}$.
\end{assumption}

\begin{assumption}  \label{cond:Xt_cov_fullrank_bounded}
Let $X_{t, ij}$ be the $ij$-th entry of $\bX_t$. Then,
$ E(\left| X_{t, ij} \right|^{2}) \le C$
for any $i = 1, \ldots, d$, $j = 1, \ldots, r$ and $t = 1, \ldots, T$, where $C$ is a positive constant and $\gamma$ is given in Condition \ref{cond:vecXt_alpha_mixing}.
\end{assumption}

Assumption \ref{cond:vecXt_alpha_mixing} requires the random vector $\Vec\brackets{\bX_t}$ be $\alpha$-mixing -- weaker than stationarity. 
Each entry of covariance matrix $\Var{\left[\Vec\brackets{\bX_t}\right] }$ is bounded according to Assumption \ref{cond:Xt_cov_fullrank_bounded}. There is no further requirement on the temporal dependence structure on $\bX_t$, i.e., $\Cov{[\Vec(\bX_{t_1}),\Vec(\bX_{t_2})']}$, $t_1\neq t_2$. This is weaker than that required in \cite{wang2019factor}. 
The following three assumptions control the signal noise ratio.
Matrix $\bXi_t$ can be seen as the signal of the observation $\bY_t$ and $\bE_t$ as the noise. 
Assumption \ref{cond:eigenval_cov_Et_bounded} control the noise strength by bounding each entry of spatial covariance matrix of noise $\bE_t$. 
The signal strength is measured jointly by the $L_2$-norm $\norm{\bA}_2^2$ and $\norm{\bB}_2^2$, which correspond to the spatial and variable strengthes, respectively. 

\begin{assumption}  \label{cond:eigenval_cov_Et_bounded}
\textbf{Noise strength.} Each entry of $\Var\brackets{\Vec\brackets{\bE_t}}$ remains bounded as $n$ and $p$ increase to infinity.
\end{assumption}

\begin{assumption}  \label{cond:B_factor_strength} 
\textbf{Variable factor strength.} There exists a constant $\gamma \in [0,1]$ such that $\norm{\bB}^2_{min} \asymp p^{1-\gamma} \asymp \norm{\bB}^2_2$ as $p$ goes to infinity and $r$ is fixed.
\end{assumption}

\begin{assumption} \label{cond:A_factor_strength}
\textbf{Spatial factor strength.}
For any partition $\{ \calS_1, \calS_2 \}$ of locations $\calS = \{ \bs_1, \ldots, \bs_n\}$, we have  $\norm{\bA_1}^2_{min} \asymp n_1 \asymp \norm{\bA_1}^2_2$ and $\norm{\bA_2}^2_{min} \asymp n_2 \asymp \norm{\bA_2}^2_2$ for any $\bs\in\mathcal{S}$, where $n_1$ and $n_2$ are the number of locations in sets $\calS_1$ and $\calS_2$ respectively. 
\end{assumption}
This assumption is satisfied automatically under Assumption \ref{cond:Holder} with randomly sampled $\calS_1$ and $\calS_2$. Assumption \ref{cond:Holder} further guarantee the accuracy of sieve approximation of loading function $\ba_j(\bs)$, $j=1,2,\ldots,d$.
\begin{assumption} \label{cond:Holder}
\textbf{Loading functions belongs to H{\"o}lder class.} For $j=1, \ldots, d$, the loading functions $\ba_j(\bs)$, $\bs \in \calS \in \mathbb{R}^2$ belongs to a H{\"o}lder class $\calA^{\kappa}_c(\calS)$ ($\kappa$-smooth) defined by
\[
\calA^{\kappa}_c(\calS) = \left \{a \in \calC^m(\calS): \underset{[\eta]\le m}{\sup} \; \underset{\bs \in \calS}{\sup} \left| D^{\eta}\, a(\bs) \right| \le c, \text{ and }
\underset{[\eta]=m}{\sup} \; \underset{\bu, \bv \in \calS}{\sup} \frac{\left| D^{\eta}\, a(\bu) - D^{\eta}\, a(\bv) \right|}{\norm{\bu - \bv}^{\alpha}_2}  \le c   \right \},
\]
for some positive number $c$. Here, $\calC^m(\calS)$ is the space of all $m$-times continuously differentiable real-value functions on $\calS$. The differential operator $D^{\eta}$ is defined as $D^{\eta} = \frac{\partial^{[\eta]}}{\partial s_1^{\eta_1} \partial s_2^{\eta_2}}$ and $[\eta] = \eta_1 + \eta_2$ for non-negative integers $\eta_1$ and $\eta_2$.
\end{assumption}

Theorem \ref{thm:A_sep_err_bnd} and \ref{thm:B_sep_err_bnd} present the error bounds for the estimated spatial loading spaces $\calM \paran{\bA_l}$, $l=1,2$, on partitioned sampling locations and for estimated variable loading space $\calM \paran{\hat \bB}$, respectively.
Asymptotically, the bounds are the similar to those derived under the time series settings in \cite{wang2019factor} and \cite{chen2019constrained}. 
Indeed, when we only consider the samples from discrete locations with spatial white noises, the estimation of model \eqref{eqn:stvp_c0_mat} is similar to that of the matrix-variate time series with temporal white noise. 

\begin{theorem} \label{thm:A_sep_err_bnd}
Under Assumption \ref{cond:vecXt_alpha_mixing}-\ref{cond:Holder} and $p^{\gamma} T^{-1/2} = \smlo{1}$, we have
\begin{equation*}
\calD \left( \calM(\hat{\bA}_i), \calM(\bA_i) \right) = \norm{\hat \bQ_{A,l} - \bQ_{A,l}} = \Op{ \sqrt{ n_1 n_2^{-1} p^{\gamma} + n_1^{-1} n_2 p^{\gamma} +  p^{2\gamma} } \, T^{-1/2} }.
\end{equation*}
If $n_1 \asymp n_2 \asymp n$, we have
\begin{equation*}
\calD \left( \calM(\hat{\bA}_i), \calM(\bA_i) \right) = \norm{\hat \bQ_{A,l} -  \bQ_{A,l}} = \Op{ p^{\gamma} T^{-1/2} }.
\end{equation*}
\end{theorem}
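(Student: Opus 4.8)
The plan is to cast this as an eigenspace perturbation problem for the symmetric positive semidefinite matrices $\bM_{A_l}$, $l=1,2$, whose leading $d$ eigenvectors are exactly $\bQ_{A,l}$ and whose sample analogues $\hat\bM_{A_l}$ yield $\hat\bQ_{A,l}$. Because the rank assumption on $\bOmega_{A,ij}$ forces $\bM_{A_l}$ to have precisely $d$ nonzero eigenvalues, the relevant eigengap is simply its $d$-th eigenvalue $\lambda_d(\bM_{A_l})$ (the $(d+1)$-th being $0$), and a Davis--Kahan $\sin\Theta$ bound gives
\[
\calD\big(\calM(\hat{\bA}_l(\bs)),\calM(\bA_l(\bs))\big)=\norm{\hat\bQ_{A,l}-\bQ_{A,l}}\le C\,\frac{\norm{\hat\bM_{A_l}-\bM_{A_l}}_2}{\lambda_d(\bM_{A_l})},
\]
once the eigenvectors are aligned by the orthogonal rotation implicit in the definition of $\calD$. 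The argument then splits into a lower bound on the eigengap and an upper bound on the spectral-norm perturbation.

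For the eigengap I would show $\lambda_d(\bM_{A_l})\asymp n_l^{1-\delta}p^{2(1-\gamma)}$. In the factorization of $\bM_{A_l}$ as $\bA_l(\bs)$ times the bracketed inner double sum in (\ref{eqn:M1})--(\ref{eqn:M2}) times $\bA_l'(\bs)$, Condition \ref{cond:A_factor_strength} pins the nonzero eigenvalues of $\bA_l'(\bs)\bA_l(\bs)$ at order $n_l^{1-\delta}$; expanding the inner sum in the rows of $\bB$, using orthogonality of its columns together with Condition \ref{cond:B_factor_strength} (eigenvalues of $\bB'\bB$ of order $p^{1-\gamma}$), extracts the factor $p^{2(1-\gamma)}$; and Condition \ref{cond:Xt_cov_fullrank_bounded} (the well-conditioned, rank-$k$ $\bSigma_f(h)$) ensures the inner matrix is of full rank $d$ and bounded away from $0$, so that $\lambda_d(\bM_{A_l})$ does not degenerate.

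For the perturbation I would expand
\[
\hat\bM_{A_l}-\bM_{A_l}=\sum_{i,j}\Big[(\hat\bOmega_{A,ij}-\bOmega_{A,ij})\bOmega_{A,ij}'+\bOmega_{A,ij}(\hat\bOmega_{A,ij}-\bOmega_{A,ij})'+(\hat\bOmega_{A,ij}-\bOmega_{A,ij})(\hat\bOmega_{A,ij}-\bOmega_{A,ij})'\Big],
\]
and bound each group of $p^2$ terms separately. The basic building block is the concentration of $\hat\bOmega_{A,ij}-\bOmega_{A,ij}$, which I would decompose into signal--signal, signal--noise (and noise--signal), and noise--noise pieces: under the $\alpha$-mixing Condition \ref{cond:vecXt_alpha_mixing} and the $2\gamma$-th moment bound in Condition \ref{cond:Xt_cov_fullrank_bounded} each averaged cross-product concentrates at rate $T^{-1/2}$, while Condition \ref{cond:eigenval_cov_Et_bounded} and the spatial whiteness of the nugget handle the noise pieces---the noise--noise block, which has no population counterpart, requiring a spectral-norm bound for an $n_1\times n_2$ random matrix. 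Propagating the scalings $\norm{\bA_l(\bs)}_2^2\asymp n_l^{1-\delta}$ and $\norm{\bB_{i\cdot}}^2\asymp p^{-\gamma}$ through the double sum, the bound on $\norm{\hat\bM_{A_l}-\bM_{A_l}}_2$ emerges as $T^{-1/2}$ times the square root of a variance proxy with three contributions; dividing by $\lambda_d(\bM_{A_l})$ turns these into the three summands $n_1n_2^{\delta-1}p^\gamma$, $n_1^{\delta-1}n_2p^\gamma$ and $n_1^\delta n_2^\delta p^{2\gamma}$ under the square root, and setting $n_1\asymp n_2\asymp n$ collapses them to $n^\delta p^\gamma T^{-1/2}$.

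The main obstacle I anticipate is precisely this last bookkeeping step: correctly tracking how the spatial factor strength $n_l^{1-\delta}$ and the per-row variable factor strength $p^{-\gamma}$ accumulate over the $p^2$-fold double sum, while simultaneously accommodating the temporal dependence through the mixing coefficients and the three qualitatively different contributions to $\hat\bOmega_{A,ij}-\bOmega_{A,ij}$. Ensuring that the noise--noise term---pure estimation error with no signal to stabilize it---does not dominate, and checking that the assumption $n^\delta p^\gamma T^{-1/2}=o(1)$ indeed forces $\norm{\hat\bM_{A_l}-\bM_{A_l}}_2<\tfrac12\lambda_d(\bM_{A_l})$ so that the Davis--Kahan bound is in force, are where the care lies; the overall template parallels the eigen-analysis perturbation arguments of \cite{lam2012factor} and \cite{wang2019factor}.
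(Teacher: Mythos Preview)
Your overall strategy is exactly the paper's: a Davis--Kahan style bound $\norm{\hat\bQ_{A,l}-\bQ_{A,l}}_2\lesssim \norm{\hat\bM_{A_l}-\bM_{A_l}}_2/\lambda_d(\bM_{A_l})$, with the denominator controlled via the factor-strength conditions and the numerator via a signal--signal / signal--noise / noise--noise decomposition of $\hat\bOmega_{A,ij}-\bOmega_{A,ij}$ under the $\alpha$-mixing assumption (this is precisely the content of the paper's Lemmas \ref{lemma:4_conv_rates}--\ref{conv_rate_sigval_M_1}).

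There is, however, one concrete slip in your eigengap. You claim $\lambda_d(\bM_{A_l})\asymp n_l^{1-\delta}p^{2(1-\gamma)}$, accounting only for the outer sandwich $\bA_l(\cdot)\bA_l'$. But $\bOmega_{A,ij}$ is $n_1\times n_2$, of the form $\bA_1(\bs)\,C_{ij}\,\bA_2(\bs)'$, so
\[
\bOmega_{A,ij}\bOmega_{A,ij}'=\bA_1(\bs)\,C_{ij}\,\bA_2(\bs)'\bA_2(\bs)\,C_{ij}'\,\bA_1(\bs)',
\]
and the inner $\bA_2'\bA_2$ contributes an additional factor of order $n_2^{1-\delta}$ (symmetrically for $\bM_{A_2}$). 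The bracketed expression in (\ref{eqn:M1}) obscures this because the $\bA_2'\bA_2$ has been absorbed into the inner sum, but the paper's Lemma~\ref{conv_rate_sigval_M_1} explicitly uses both $\norm{\bA_1(\bs)}_{\min}^2$ and $\norm{\bA_2(\bs)}_{\min}^2$ to obtain $\lambda_d(\bM_{A_l})\asymp n_1^{1-\delta}n_2^{1-\delta}p^{2-2\gamma}$. With this corrected eigengap your ratio reproduces the three summands $n_1 n_2^{\delta-1}p^{\gamma}$, $n_1^{\delta-1}n_2 p^{\gamma}$, $n_1^{\delta}n_2^{\delta}p^{2\gamma}$; with your stated rate it would not, since the $n_2$-dependence of the first and third terms has no source.
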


\begin{theorem} \label{thm:B_sep_err_bnd}
Under Assumption \ref{cond:vecXt_alpha_mixing}-\ref{cond:Holder} and $p^{\gamma} T^{-1/2} = o(1)$, we have
\begin{equation*}
\calD \left( \calM(\hat \bB), \calM(\bB) \right) = \norm{\hat \bQ_B - \bQ_B} = \Op{  p^{\gamma} T^{-1/2} }.
\end{equation*}
\end{theorem}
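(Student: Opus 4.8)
The plan is to treat Theorem \ref{thm:B_sep_err_bnd} as the ``column'' analogue of Theorem \ref{thm:A_sep_err_bnd}: the population matrix $\bM_B = \bB \bG \bB'$ (where $\bG$ is the inner bracket in (\ref{eqn:M_B})) plays exactly the role that $\bM_{A_l}$ plays for the spatial loadings, with the roles of rows/columns and of $\bA$/$\bB$ interchanged and the double sum running over the $m^2$ location pairs instead of the $p^2$ variable pairs. The whole argument reduces, via a standard eigenspace perturbation inequality (the $\sin\Theta$/Davis--Kahan bound, or the variant already used for $\hat\bQ_{A,l}$), to two quantitative ingredients: a lower bound on the $r$-th eigenvalue of $\bM_B$ (the eigengap, since $\bM_B$ has exactly $r$ nonzero eigenvalues) and an upper bound on the spectral perturbation $\norm{\hat\bM_B - \bM_B}_2$. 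Concretely I would invoke
\[
\norm{\hat\bQ_B - \bQ_B} \;\le\; \frac{c\,\norm{\hat\bM_B - \bM_B}_2}{\lambda_r(\bM_B)}
\]
and then estimate numerator and denominator separately.

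For the eigengap I would use the factor-strength Conditions \ref{cond:A_factor_strength} and \ref{cond:B_factor_strength} together with the nondegeneracy of the factor autocovariances in Condition \ref{cond:Xt_cov_fullrank_bounded}. Writing $\bH_{ij}$ for the inner $r\times r$ covariance $\Cov{\bX_t'A_{1,i\cdot}'(\bs),A_{2,j\cdot}(\bs)\bX_t^*}$ from (\ref{eqn:omega_B}), the inner matrix $\bG = \sum_{i,j}\bH_{ij}(\bB'\bB)\bH_{ij}'$ carries two row-sums of spatial loadings, each contributing $\bA_l'\bA_l \asymp m^{1-\delta}\bI_d$, and one factor $\bB'\bB \asymp p^{1-\gamma}$; hence $\bG \asymp m^{2(1-\delta)}p^{1-\gamma}$. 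Conjugating by the outer $\bB(\cdot)\bB'$ multiplies by a further $\norm{\bB}_{min}^2 \asymp p^{1-\gamma}$, so that $\lambda_r(\bM_B)\asymp m^{2(1-\delta)}p^{2(1-\gamma)} \asymp n^{2(1-\delta)}p^{2(1-\gamma)}$ since $m\asymp n/2$. The nondegeneracy clauses of Condition \ref{cond:Xt_cov_fullrank_bounded} are what guarantee this order is attained by the smallest nonzero eigenvalue and not merely by the largest.

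The hard part is the perturbation bound, whose target is $\norm{\hat\bM_B - \bM_B}_2 = \Op{n^{2-\delta}p^{2-\gamma}T^{-1/2}}$, since dividing by the eigengap then yields exactly $n^{\delta}p^{\gamma}T^{-1/2}$. I would expand
\[
\hat\bM_B - \bM_B = \sum_{i=1}^{m}\sum_{j=1}^{m}\Big\{\bDelta_{ij}\bOmega_{B,ij}' + \bOmega_{B,ij}\bDelta_{ij}' + \bDelta_{ij}\bDelta_{ij}'\Big\},\quad \bDelta_{ij}:=\hat\bOmega_{B,ij}-\bOmega_{B,ij},
\]
and control each sample cross-covariance deviation $\bDelta_{ij}$ at rate $T^{-1/2}$ using the $\alpha$-mixing Condition \ref{cond:vecXt_alpha_mixing} and the moment/boundedness Conditions \ref{cond:Xt_cov_fullrank_bounded}--\ref{cond:eigenval_cov_Et_bounded}. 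The delicate bookkeeping is that each term must be aggregated over the $m^2$ location pairs while simultaneously tracking the spatial factor strength $m^{1-\delta}$, the variable factor strength $p^{1-\gamma}$ carried by the two $\bB$'s inside $\bOmega_{B,ij}$, and the contributions of the spatially white nugget $\bE_{lt}$. The nugget, although uncorrelated across locations (so that it drops out of the population $\bOmega_{B,ij}$), still inflates the empirical cross-covariances, and one must verify that its cross terms with the signal and its pure quadratic term are of order at most $n^{2-\delta}p^{2-\gamma}T^{-1/2}$ and do not overwhelm the leading rate; Condition \ref{cond:eigenval_cov_Et_bounded} is precisely what keeps these noise contributions in check. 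This is the step I expect to require the most care, and it is where the $\Op{}$ rate is actually produced.

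Finally I would assemble the pieces: combining $\lambda_r(\bM_B)\asymp n^{2(1-\delta)}p^{2(1-\gamma)}$ with $\norm{\hat\bM_B-\bM_B}_2 = \Op{n^{2-\delta}p^{2-\gamma}T^{-1/2}}$ in the perturbation inequality gives $\norm{\hat\bQ_B-\bQ_B}=\Op{n^{\delta}p^{\gamma}T^{-1/2}}$, and the assumption $n^{\delta}p^{\gamma}T^{-1/2}=o(1)$ guarantees the perturbation is asymptotically dominated by the gap, so that the Davis--Kahan step is legitimate and the alignment of the columns of $\hat\bQ_B$ and $\bQ_B$ is well defined. Since $\calD(\calM(\hat\bB),\calM(\bB))$ equals $\norm{\hat\bQ_B-\bQ_B}$ under the orthonormality constraint (\ref{eqn:Q-constraint}), this completes the argument. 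Because the structure is entirely parallel to $\bM_{A_l}$, most of the moment estimates can be imported from the proof of Theorem \ref{thm:A_sep_err_bnd} with $m^2$ replacing $p^2$ in the outer summation, which is why both estimators share the same rate when $n_1\asymp n_2\asymp n$.
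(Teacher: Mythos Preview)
Your proposal is correct and follows essentially the same route as the paper: apply the Davis--Kahan/perturbation inequality $\norm{\hat\bQ_B-\bQ_B}_2 \le c\,\norm{\hat\bM_B-\bM_B}_2/\lambda_r(\bM_B)$, then feed in the eigengap lower bound (the paper's Lemma \ref{conv_rate_sigval_M_1}) and the spectral perturbation bound for $\hat\bM_B$ (Lemma \ref{conver_rate_M_B_hat}, built from the rates in Lemma \ref{lemma:4_conv_rates} via the decomposition you wrote). Your bookkeeping of the $\delta$-exponent is in fact tidier than what the paper records in those lemma statements (which drop $\delta$ in places, writing e.g.\ $\lambda_i(\bM_B)\asymp m^2p^{2-2\gamma}$ and concluding $O_p(p^\gamma T^{-1/2})$ in the proof), so your version aligns better with the theorem's stated rate $n^\delta p^\gamma T^{-1/2}$.
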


When $p$ is fixed, the convergence rate of $\bA_i$ and $\bB$ are $\sqrt{T}$, $i=1,2$. 
If dimension $p$ increases, the estimations of $\bA_i$ and $\bB$ become more difficult. 
The noise term is of order $np$. 
The signal contribute the accuracy of $\hat{\bA}_i$ and $\bB$ with $np^{1-\gamma}$, which is affected by the variable strength $\gamma$. 
If $\gamma$ is small (strong variable factor strength), the convergence speed of $\bA_i$ and $\bB$ is faster. 
Specifically, the convergence rate of $\bA_i$ and $\bB$ are not affected by $n$. 
The noise term and the signal contribution both have order $n$.

\begin{theorem} \label{thm:signal_err_bnd}
Under Assumption \ref{cond:vecXt_alpha_mixing}-\ref{cond:Holder} and $ p^{\gamma} T^{-1/2}=\smlo{1}$, if $n_1 \asymp n2 \asymp n$, then
\begin{equation*}
\frac{1}{\sqrt{np} }\norm{\hat{\bXi}_{it} - \bXi_{it}}_2  = \Op{ p^{\gamma/2} T^{-1/2} + n^{-1/2} p^{-1/2}},
\end{equation*}
for $i = 1, 2$, and
\begin{equation*}
\frac{1}{\sqrt{np}} \norm{\hat \bXi_{t} - \bXi_{t}}_2  = \Op{ p^{\gamma/2} T^{-1/2} + n^{-1/2} p^{-1/2}}
\end{equation*}
\end{theorem}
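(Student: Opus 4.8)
The plan is to work directly from the closed forms of the estimators. Using $\hat{\bXi}_{it} = \hat{\bQ}_{A,i}\hat{\bQ}_{A,i}'\bY_{it}\hat{\bQ}_B\hat{\bQ}_B'$ and $\bY_{it} = \bXi_{it} + \bE_{it}$, together with the fact that the true signal already lies in the true subspaces, i.e. $\bXi_{it} = \bQ_{A,i}\bQ_{A,i}'\bXi_{it}\bQ_B\bQ_B'$ (since $\bXi_{it} = \bQ_{A,i}\bZ_t\bQ_B'$, $\bQ_{A,i}'\bQ_{A,i} = \bI_d$ and $\bQ_B'\bQ_B = \bI_r$), I would subtract and telescope on the two sides to obtain
\begin{align*}
\hat{\bXi}_{it} - \bXi_{it} = {}& (\hat{\bQ}_{A,i}\hat{\bQ}_{A,i}' - \bQ_{A,i}\bQ_{A,i}')\,\bXi_{it}\,\hat{\bQ}_B\hat{\bQ}_B' + \bQ_{A,i}\bQ_{A,i}'\,\bXi_{it}\,(\hat{\bQ}_B\hat{\bQ}_B' - \bQ_B\bQ_B') \\
& + \hat{\bQ}_{A,i}\hat{\bQ}_{A,i}'\,\bE_{it}\,\hat{\bQ}_B\hat{\bQ}_B'.
\end{align*}
The first two are ``signal $\times$ subspace-error'' terms and the last is a ``projected-noise'' term, and I would bound each in spectral norm separately.

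For the signal-error terms, I first bound the signal size through the factor-strength conditions: since $\bXi_{it} = \bA_i(\bs)\bX_t\bB'$ with $\norm{\bA_i(\bs)}_2 \asymp m^{(1-\delta)/2} \asymp n^{(1-\delta)/2}$ (Condition \ref{cond:A_factor_strength}) and $\norm{\bB}_2 \asymp p^{(1-\gamma)/2}$ (Condition \ref{cond:B_factor_strength}), while $\norm{\bX_t}_2 = \Op{1}$ because $\bX_t$ is of fixed size $d\times r$ with bounded moments, I get $\norm{\bXi_{it}}_2 = \Op{n^{(1-\delta)/2}p^{(1-\gamma)/2}}$. The projection gaps are of the same order as the subspace distances, $\norm{\hat{\bQ}_{A,i} - \bQ_{A,i}}_2 = \Op{n^\delta p^\gamma T^{-1/2}}$ and $\norm{\hat{\bQ}_B - \bQ_B}_2 = \Op{n^\delta p^\gamma T^{-1/2}}$, by Theorems \ref{thm:A_sep_err_bnd} and \ref{thm:B_sep_err_bnd}. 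Since the remaining orthogonal projections have unit spectral norm, submultiplicativity gives that each signal-error term is $\Op{n^\delta p^\gamma T^{-1/2}\cdot n^{(1-\delta)/2}p^{(1-\gamma)/2}} = \Op{n^{(1+\delta)/2}p^{(1+\gamma)/2}T^{-1/2}}$, which after dividing by $n^{1/2}p^{1/2}$ produces the leading rate $n^{\delta/2}p^{\gamma/2}T^{-1/2}$.

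For the projected-noise term I would use $\norm{\hat{\bQ}_{A,i}\hat{\bQ}_{A,i}'\bE_{it}\hat{\bQ}_B\hat{\bQ}_B'}_2 = \norm{\hat{\bQ}_{A,i}'\bE_{it}\hat{\bQ}_B}_2$, a matrix of fixed dimension $d\times r$. Replacing the estimated loadings by the true ones, $\bQ_{A,i}'\bE_{it}\bQ_B$ has entries that are orthonormally-weighted sums of the spatially-white, bounded-variance nugget noise (Condition \ref{cond:eigenval_cov_Et_bounded}), so each entry is $\Op{1}$ and, the dimension being fixed, $\norm{\bQ_{A,i}'\bE_{it}\bQ_B}_2 = \Op{1}$; this yields the additive $n^{-1/2}p^{-1/2}$ term after normalization. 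Combining the three pieces gives the first display, and the second (stacked) display then follows immediately from the block-row inequality $\norm{\hat{\bXi}_t - \bXi_t}_2^2 \le \norm{\hat{\bXi}_{1t} - \bXi_{1t}}_2^2 + \norm{\hat{\bXi}_{2t} - \bXi_{2t}}_2^2$ together with $(a+b)^2 \asymp a^2 + b^2$.

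The hard part will be the projected-noise term, specifically the cross terms produced when the estimated loadings are split off from the true ones. Because $\hat{\bQ}_{A,i}$ and $\hat{\bQ}_B$ are built from the same data and are therefore correlated with $\bE_{it}$, a naive submultiplicative bound on a piece such as $(\hat{\bQ}_{A,i} - \bQ_{A,i})'\bE_{it}\bQ_B$ only gives $\Op{n^\delta p^\gamma T^{-1/2}}\cdot\norm{\bE_{it}\bQ_B}_2$ with $\norm{\bE_{it}\bQ_B}_2 = \Op{n^{1/2}}$, and this is not self-evidently dominated by the leading rate $n^{(1+\delta)/2}p^{(1+\gamma)/2}T^{-1/2}$ for all configurations of $(n,p)$. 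Controlling these remainders rigorously requires either invoking the explicit first-order eigenvector perturbation expansions from the proofs of Theorems \ref{thm:A_sep_err_bnd}--\ref{thm:B_sep_err_bnd} (so that $\hat{\bQ} - \bQ$ is not treated as a black box but through its structure relative to $\bE_{it}$) or exploiting the assumed scaling $n^\delta p^\gamma T^{-1} = \oc{1}$ to absorb them; this bookkeeping is the delicate step, whereas the signal-error analysis is routine.
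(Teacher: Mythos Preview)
Your proposal follows essentially the same route as the paper: the identical three-term telescoping decomposition into two signal--subspace-error pieces plus one projected-noise piece, with the signal pieces bounded via $\norm{\bXi_{it}}_2 \asymp n^{(1-\delta)/2}p^{(1-\gamma)/2}$ times the subspace-error rates from Theorems~\ref{thm:A_sep_err_bnd}--\ref{thm:B_sep_err_bnd}, and the stacked bound obtained from $\lambda_{\max}(\bM_1'\bM_1+\bM_2'\bM_2)\le \lambda_{\max}(\bM_1'\bM_1)+\lambda_{\max}(\bM_2'\bM_2)$.

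The one point of difference is the noise term. The paper does \emph{not} split $\hat\bQ$ into $\bQ$ plus remainder as you propose; it simply writes
\[
\bI_3 \le \norm{\hat\bQ_{A,1}'\bE_t\hat\bQ_B}_2 = \norm{(\hat\bQ_B'\otimes\hat\bQ_{A,1}')\vect{\bE_t}}_2 \le rd\,\norm{\bSigma_e}_2 = O_p(1),
\]
treating the estimated projections as if they were fixed orthonormal matrices and appealing directly to the bounded nugget covariance (Condition~\ref{cond:eigenval_cov_Et_bounded}). In other words, the correlation issue you flag as the ``hard part'' is not addressed in the paper's own proof; it is absorbed into this one-line assertion. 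Your plan to decompose and then worry about pieces like $(\hat\bQ_{A,i}-\bQ_{A,i})'\bE_{it}\bQ_B$ is more scrupulous than what the paper actually does, but it is not the paper's argument---if you want to match the paper, just invoke the $O_p(1)$ bound on the $d\times r$ projected-noise matrix directly and move on.
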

Theorem \ref{thm:signal_err_bnd} presents the error bound for estimated signal $\hat \bXi_{it}$ as in (\ref{eqn:signal_mat_est_sep}) for each partition and $\hat\bXi_t$ for all sampling locations. The error of estimated signal $\hat\bXi_{it}$ is contributed by the noise $\bE_t$, and the estimation error for $\bQ_{A}$ and $\bQ_{B}$. In the proof of this theorem, we show that $p^{\gamma/2}T^{-1/2}$ comes from the estimation error for $\bQ_{A}$ and $\bQ_{B}$ in Theorem \ref{thm:A_sep_err_bnd} and \ref{thm:B_sep_err_bnd}. Since we use the sample $\bY_t$ instead of $\bXi_t$, $n^{-1/2} p^{-1/2}$ comes from the noise $\bE_t$, which is a $p\times n$ matrix.
Theorem \ref{thm:reest_Q_A} presents the error bond for re-estimated spatial loading space $\calM \paran{\hat \bA}$ from estimated $\hat \bXi_{t}$ and $\hat \bQ_B$ of the first step.

\begin{theorem} \label{thm:reest_Q_A}
Under Assumption \ref{cond:vecXt_alpha_mixing}-\ref{cond:Holder} and $ p^{\gamma} T^{-1/2}=\smlo{1}$, if $n_1 \asymp n2 \asymp n$, then
\[
\norm{\hat\bQ_A - \bQ_A}_2 = \Op{ p^{\gamma} T^{-1/2} + n^{-1/2} p^{\gamma/2-1/2}}.
\]
\end{theorem}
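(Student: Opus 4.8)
The plan is to derive the bound from a Davis--Kahan type perturbation inequality applied to the symmetric matrices $\hat\bM_A$ and $\bM_A$, since $\hat\bQ_A$ and $\bQ_A$ collect the leading $d$ eigenvectors of $\hat\bM_A$ and $\bM_A$, respectively. Measuring the subspace distance through the eigenprojections, the governing inequality is
\begin{equation*}
\norm{\hat\bQ_A - \bQ_A}_2 \;\le\; \frac{C\,\norm{\hat\bM_A - \bM_A}_2}{\lambda_d(\bM_A)},
\end{equation*}
up to an orthogonal alignment of the two bases, where $\lambda_d(\bM_A)$ is the smallest nonzero eigenvalue of $\bM_A$ and is also the eigengap, because the remaining $n-d$ eigenvalues vanish. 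The proof therefore reduces to (i) a lower bound on $\lambda_d(\bM_A)$ and (ii) an upper bound on the perturbation $\norm{\hat\bM_A - \bM_A}_2$.

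For (i), I would combine the factor-strength Conditions \ref{cond:A_factor_strength}--\ref{cond:B_factor_strength} with the rank/scale condition on $\bSigma_f$ in Condition \ref{cond:Xt_cov_fullrank_bounded}. Since $\bQ_A$ has orthonormal columns, the $d$ nonzero eigenvalues of $\bM_A = \bQ_A\big(\sum_{j}\Cov{\bZ_{t,\cdot j},\bZ_{t,\cdot j}}\big)\bQ_A'$ coincide with those of $\sum_{j}\Cov{\bZ_{t,\cdot j},\bZ_{t,\cdot j}}$. Writing $\bA(\bs) = \bQ_A\bO_A$ and $\bB = \bQ_B\bO_B$ with $\norm{\bO_A}^2_{min}\asymp\norm{\bO_A}^2_2\asymp n^{1-\delta}$ and $\norm{\bO_B}^2_{min}\asymp\norm{\bO_B}^2_2\asymp p^{1-\gamma}$ (Conditions \ref{cond:A_factor_strength}--\ref{cond:B_factor_strength} under $n_1\asymp n_2\asymp n$), the rotated factor satisfies $\bZ_t = \bO_A\bX_t\bO_B'$; propagating the two factor strengths through this identity and invoking the nondegeneracy of the factor covariance then yields $\lambda_d(\bM_A)\asymp n^{1-\delta}p^{1-\gamma}$.

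For (ii), I would split the perturbation into a sampling-error part and a plug-in part. With $\bPsi_t = \bQ_A\bZ_t = \bXi_t\bQ_B$ and $\hat\bPsi_t = \hat\bXi_t\hat\bQ_B$,
\begin{equation*}
\hat\bM_A - \bM_A = \left(\frac{1}{T}\sum_{t=1}^{T}\bPsi_t\bPsi_t' - \bM_A\right) + \frac{1}{T}\sum_{t=1}^{T}\left(\hat\bPsi_t\hat\bPsi_t' - \bPsi_t\bPsi_t'\right).
\end{equation*}
The first bracket is controlled by the $\alpha$-mixing and moment Conditions \ref{cond:vecXt_alpha_mixing}--\ref{cond:Xt_cov_fullrank_bounded}, giving $\Op{\norm{\bM_A}_2 T^{-1/2}} = \Op{n^{1-\delta}p^{1-\gamma}T^{-1/2}}$, which is dominated. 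For the plug-in part I would expand via $\hat\bPsi_t - \bPsi_t = (\hat\bXi_t - \bXi_t)\hat\bQ_B + \bXi_t(\hat\bQ_B - \bQ_B)$, obtaining $\norm{\hat\bPsi_t - \bPsi_t}_2 = \Op{n^{(1+\delta)/2}p^{(1+\gamma)/2}T^{-1/2}+1}$ from Theorem \ref{thm:signal_err_bnd} for $\hat\bXi_t$ and Theorem \ref{thm:B_sep_err_bnd} for $\hat\bQ_B$, using $\norm{\bXi_t}_2\asymp n^{(1-\delta)/2}p^{(1-\gamma)/2}$. The cross term $\frac{1}{T}\sum_{t}(\hat\bPsi_t-\bPsi_t)\bPsi_t'$, bounded by Cauchy--Schwarz against $\norm{\bPsi_t}_2\asymp n^{(1-\delta)/2}p^{(1-\gamma)/2}$, contributes $\Op{npT^{-1/2} + n^{(1-\delta)/2}p^{(1-\gamma)/2}}$, while the quadratic term is of smaller order under $n^\delta p^\gamma T^{-1/2}=\oc{1}$. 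Dividing $\norm{\hat\bM_A-\bM_A}_2 = \Op{npT^{-1/2}+n^{(1-\delta)/2}p^{(1-\gamma)/2}}$ by $\lambda_d(\bM_A)\asymp n^{1-\delta}p^{1-\gamma}$ reproduces the stated rate $n^\delta p^\gamma T^{-1/2} + n^{\delta/2-1/2}p^{\gamma/2-1/2}$.

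The main obstacle is the plug-in part (ii): because $\hat\bQ_A$ is a \emph{second-step} estimator built on the already-estimated signal $\hat\bXi_t$ and loading $\hat\bQ_B$, the first-step errors must be propagated without being allowed to dominate. The delicate bookkeeping is the interaction of the baseline $n^{-1/2}p^{-1/2}$ (relative) term in Theorem \ref{thm:signal_err_bnd} with the large signal scale $\norm{\bPsi_t}_2\asymp n^{(1-\delta)/2}p^{(1-\gamma)/2}$ inside the cross term, as this is exactly what generates the second summand $n^{\delta/2-1/2}p^{\gamma/2-1/2}$; retaining only averaged (rather than uniform-in-$t$) control of $\norm{\hat\bPsi_t-\bPsi_t}_2$, so that the mixing-based concentration for the sampling part still applies, is the step that requires the most care.
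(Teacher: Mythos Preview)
Your Davis--Kahan route is correct and reaches the stated rate, but it is \emph{not} the argument the paper uses. The paper does not bound $\norm{\hat\bM_A-\bM_A}_2$ and divide by an eigengap. Instead it follows the Bai--Ng / Fan et al.\ projected-PCA template: it introduces the data-dependent rotation
\[
\bH \;=\; \Bigl(\tfrac{1}{T}\sum_{t=1}^T\bZ_t\bZ_t'\Bigr)\,\bQ_A'\hat\bQ_A\,\bV_{npT}^{-1},
\]
writes $\hat\bQ_A-\bQ_A\bH=(\bN_1+\bN_2+\bN_3)\bV_{npT}^{-1}$ with $\bN_1,\bN_2$ the cross terms $\bQ_A\bZ_t\bU_t'$, $\bU_t\bZ_t'\bQ_A'$ and $\bN_3=\bU_t\bU_t'$ where $\bU_t=\hat\bPsi_t-\bPsi_t$, bounds each $\bN_i$ and the eigenvalue matrix $\bV_{npT}$ separately, and then proves $\norm{\bH-\bI_d}_2$ is of the same order via the argument of Proposition~C.3 in Fan et al.\ (2016). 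The final step is the triangle inequality $\norm{\hat\bQ_A-\bQ_A}_2\le\norm{\hat\bQ_A-\bQ_A\bH}_2+\norm{\bQ_A}_2\norm{\bH-\bI_d}_2$.

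The two routes are equivalent in outcome but buy different things. Your Davis--Kahan argument is shorter and more transparent: once $\lambda_d(\bM_A)\asymp n^{1-\delta}p^{1-\gamma}$ and the cross/quadratic terms in $\hat\bM_A-\bM_A$ are bounded (exactly your $npT^{-1/2}+n^{(1-\delta)/2}p^{(1-\gamma)/2}$), the result drops out in one line, and the first-step error $\Delta_{npT}$ enters only through $\norm{\hat\bPsi_t-\bPsi_t}_2$. The paper's $\bH$-decomposition is longer but yields a concrete near-identity rotation linking $\hat\bQ_A$ to $\bQ_A$; this is the object one actually needs downstream to control $\hat\bZ_t-\bZ_t$ in Theorem~\ref{thm:Zt-bound}, which is why the factor-model literature favors it. Your concern about uniform-in-$t$ versus averaged control of $\norm{\hat\bPsi_t-\bPsi_t}_2$ is well placed; note that the paper sidesteps it in the same way you would, since its per-$t$ bound on $\bU_t$ comes from the $t$-free quantities $\norm{\hat\bQ_{A,l}-\bQ_{A,l}}_2$ and $\norm{\hat\bQ_B-\bQ_B}_2$ together with $\norm{\bZ_t}_2$, $\norm{\bE_t}_2$ whose orders are stationary in $t$.
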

Re-estimation introduces the noise error from $\bE_t$. Comparing to the result in Theorem \ref{thm:A_sep_err_bnd}, the re-estimated loading matrix $\hat\bQ_A$ has an extra error term $n^{-1/2} p^{\gamma/2-1/2}$, which results from the noise error $n^{-1} p^{-1}$ of $\hat\bXi_t$ that appears in Theorem \ref{thm:signal_err_bnd}. 
Simulations in Section \ref{sec:simu} show that the differences between the re-estimator and first estimator of $\bA$ are also negligible with finite $n$, $p$, $T$.

Let $\Delta_{npT} = p^{\gamma} T^{-1} + n^{-1} p^{-1}$ represent the estimation error from the first-step estimation. 
Note that under the identification constraint that $\bQ_A$ and $\bQ_B$ are orthonormal matrices, $\norm{\bZ_t}$ is of order $np^{1-\gamma}$.
Theorem \ref{thm:Zt-bound} shows the normalized error bound of $\bZ_t$. 

\begin{theorem} \label{thm:Zt-bound}
Under Assumption \ref{cond:vecXt_alpha_mixing}-\ref{cond:Holder}, the estimator of rotated latent factor matrix $\bZ_t$ satisfies
\begin{equation}
\frac{1}{np}\norm{\hat \bZ_t - \bZ_t}^2_2 = \Op{\Delta_{npT} +  p^{\gamma} \Delta_{npT}^2}.  \nonumber
\end{equation}
\end{theorem}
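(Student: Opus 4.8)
The plan is to start from the closed forms $\hat{\bZ}_t = \hat{\bQ}_A'\hat{\bXi}_t\hat{\bQ}_B$ and $\bZ_t = \bQ_A'\bXi_t\bQ_B$ (the latter from (\ref{eqn:xi_fac_finstr_mat-1}) and $\bQ_A'\bQ_A=\bI_d$, $\bQ_B'\bQ_B=\bI_r$), and to telescope the difference one factor at a time. Concretely, I would write
\begin{equation*}
\hat{\bZ}_t - \bZ_t = (\hat{\bQ}_A - \bQ_A)'\hat{\bXi}_t\hat{\bQ}_B + \bQ_A'(\hat{\bXi}_t - \bXi_t)\hat{\bQ}_B + \bQ_A'\bXi_t(\hat{\bQ}_B - \bQ_B) =: \mathrm{I} + \mathrm{II} + \mathrm{III},
\end{equation*}
bound each piece in spectral norm by submultiplicativity and the triangle inequality, and then square. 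Since the earlier theorems are stated for the subspace-aligned versions of $\hat{\bQ}_A$ and $\hat{\bQ}_B$, I read $\bZ_t$ as the correspondingly rotated factor; the implicit rotations are $o_p(1)$-close to the identity and do not affect the rates.

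The ingredients I would collect first are the following. Orthonormality of the columns gives $\norm{\bQ_A}_2 = \norm{\hat{\bQ}_B}_2 = 1$. For the size of the signal, note that because $\bXi_t = \bQ_A\bZ_t\bQ_B'$ with $\bQ_A,\bQ_B$ column-orthonormal, $\norm{\bXi_t}_2 = \norm{\bZ_t}_2$; then from $\bZ_t = \bQ_A'\bA\bX_t\bB'\bQ_B$ together with Conditions \ref{cond:A_factor_strength}--\ref{cond:B_factor_strength} (which give $\norm{\bA}_2 \asymp n^{(1-\delta)/2}$ and $\norm{\bB}_2 \asymp p^{(1-\gamma)/2}$) and $\norm{\bX_t}_2 = \Op{1}$ from Condition \ref{cond:Xt_cov_fullrank_bounded}, I obtain $\norm{\bXi_t}_2 = \Op{n^{(1-\delta)/2}p^{(1-\gamma)/2}}$. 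Finally I would re-express the three first-step error bounds through $\Delta_{npT}$: Theorem \ref{thm:reest_Q_A} gives $\norm{\hat{\bQ}_A - \bQ_A}_2^2 = \Op{n^\delta p^\gamma \Delta_{npT}}$, Theorem \ref{thm:B_sep_err_bnd} gives $\norm{\hat{\bQ}_B - \bQ_B}_2 = \Op{n^\delta p^\gamma T^{-1/2}}$, and Theorem \ref{thm:signal_err_bnd} gives $\norm{\hat{\bXi}_t - \bXi_t}_2^2 = \Op{np\,\Delta_{npT}}$.

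For term $\mathrm{I}$ I would split $\hat{\bXi}_t = \bXi_t + (\hat{\bXi}_t-\bXi_t)$, so that $\norm{\mathrm{I}}_2 \le \norm{\hat{\bQ}_A-\bQ_A}_2(\norm{\bXi_t}_2 + \norm{\hat{\bXi}_t-\bXi_t}_2)$. The first product equals $\Op{(n^\delta p^\gamma\Delta_{npT})^{1/2}\,n^{(1-\delta)/2}p^{(1-\gamma)/2}} = \Op{(np\,\Delta_{npT})^{1/2}}$, and the second equals $\Op{(n^\delta p^\gamma\Delta_{npT})^{1/2}(np\,\Delta_{npT})^{1/2}} = \Op{n^{(1+\delta)/2}p^{(1+\gamma)/2}\Delta_{npT}}$; these are precisely the two stated rates. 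Term $\mathrm{II}$ is immediate: $\norm{\mathrm{II}}_2 \le \norm{\hat{\bXi}_t-\bXi_t}_2 = \Op{(np\,\Delta_{npT})^{1/2}}$. For term $\mathrm{III}$, $\norm{\mathrm{III}}_2 \le \norm{\bXi_t}_2\norm{\hat{\bQ}_B-\bQ_B}_2 = \Op{n^{(1+\delta)/2}p^{(1+\gamma)/2}T^{-1/2}}$, and since $n^\delta p^\gamma T^{-1}\le \Delta_{npT}$ forces $T^{-1/2}\le n^{-\delta/2}p^{-\gamma/2}\Delta_{npT}^{1/2}$, this collapses to $\Op{(np\,\Delta_{npT})^{1/2}}$ and is absorbed by the first rate. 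Adding the three bounds and squaring then yields $\norm{\hat{\bZ}_t-\bZ_t}_2^2 = \Op{np\,\Delta_{npT} + n^{1+\delta}p^{1+\gamma}\Delta_{npT}^2}$.

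The hard part will be the ingredient step rather than the telescoping: pinning down $\norm{\bXi_t}_2 \asymp n^{(1-\delta)/2}p^{(1-\gamma)/2}$ from the factor-strength conditions, together with $\norm{\bX_t}_2 = \Op{1}$, since every term's rate is calibrated against this signal magnitude, and a sloppy bound here would inflate all three pieces. The remaining bookkeeping is driven entirely by the single inequality $n^\delta p^\gamma T^{-1}\le \Delta_{npT}$, which is what lets the mixed contributions (term $\mathrm{III}$ and the $\norm{\hat{\bQ}_A-\bQ_A}_2\norm{\hat{\bXi}_t-\bXi_t}_2$ cross product) fold cleanly onto the two target rates. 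A secondary point to handle with care is making the rotation alignment between $(\hat{\bQ}_A,\hat{\bQ}_B)$ and $(\bQ_A,\bQ_B)$ explicit, so that comparing $\hat{\bZ}_t$ against the unrotated $\bZ_t$ is legitimate and the rotation factors can be shown not to degrade the rate.
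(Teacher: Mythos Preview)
Your proposal is correct and follows essentially the same strategy as the paper: telescope the difference and bound each piece via the previously established rates for $\hat\bQ_A$, $\hat\bQ_B$, and the signal estimate, together with $\norm{\bZ_t}_2=\norm{\bXi_t}_2\asymp n^{(1-\delta)/2}p^{(1-\gamma)/2}$. The only organizational difference is that the paper works with $\hat\bPsi_t=\hat\bXi_t\hat\bQ_B$ (for which a separate lemma gives the same $np\,\Delta_{npT}$ rate as $\hat\bXi_t$), so it writes $\hat\bZ_t-\bZ_t=\hat\bQ_A'(\hat\bPsi_t-\bPsi_t)+(\hat\bQ_A-\bQ_A)'\bQ_A\bZ_t$ as a two-term split rather than your three-term split through $\hat\bXi_t$ and $\hat\bQ_B$ separately; your extra term $\mathrm{III}$ is exactly the piece the paper absorbs into $\hat\bPsi_t-\bPsi_t$, and you correctly show it folds into the leading rate via $n^\delta p^\gamma T^{-1}\le\Delta_{npT}$.
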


Theorem \ref{thm:space-krig-bound} presents the space kriging error bound based on sieve approximated function $\hat{\bA}(\bs)$. In the proof of Theorem \ref{thm:space-krig-bound}, we decompose the error of $\hat \bxi_t(\bs_0)$ and show that it is dominated by three parts. $J_n^{-2\kappa}  p^{-\gamma} $ is roughly the error of $\bq_a(\bs_0)$, which includes the sieve approximation error and estimation error. $ \Delta_{npT}  p^{\gamma} + \Delta_{npT}^2$ comes from the error of $\hat{\bZ_t}$, and $ p^{\gamma} T^{-1}$  is the error of $\hat \bQ_B$.

\begin{theorem} \label{thm:space-krig-bound}
Under Assumption \ref{cond:vecXt_alpha_mixing}-\ref{cond:Holder}, for a new site $\bs_0\in\mathcal{S}$
\begin{eqnarray}
\frac{1}{p} \norm{\hat \bxi_t(\bs_0) - \bxi_t(\bs_0)}_2^2 & = &  \Op{ J_n^{-2\kappa}  p^{-\gamma} + \Delta_{npT}   + p^{\gamma}\Delta_{npT}^2 +  p^{\gamma} T^{-1} }. \nonumber
\end{eqnarray}
\end{theorem}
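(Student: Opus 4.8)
The plan is to control the normalized prediction error through an exact telescoping decomposition that isolates the three estimated ingredients $\hat\bQ_B$, $\hat\bZ_t$ and $\hat\bq_a(\bs_0)$, and then to invoke the rates already established in Theorems \ref{thm:B_sep_err_bnd}, \ref{thm:reest_Q_A} and \ref{thm:Zt-bound} together with standard tensor-product sieve approximation bounds. Writing $\bxi_t(\bs_0) = \bQ_B \bZ_t' \bq_a(\bs_0)$ and $\hat\bxi_t(\bs_0) = \hat\bQ_B \hat\bZ_t' \hat\bq_a(\bs_0)$, the identity
\begin{align*}
\hat\bxi_t(\bs_0) - \bxi_t(\bs_0) &= \underbrace{(\hat\bQ_B - \bQ_B)\,\hat\bZ_t'\,\hat\bq_a(\bs_0)}_{\mathrm{I}} + \underbrace{\bQ_B\,(\hat\bZ_t - \bZ_t)'\,\hat\bq_a(\bs_0)}_{\mathrm{II}} \\
&\quad + \underbrace{\bQ_B\,\bZ_t'\,(\hat\bq_a(\bs_0) - \bq_a(\bs_0))}_{\mathrm{III}}
\end{align*}
leaves no cross remainder, so it suffices to bound $\frac{1}{p}\norm{\cdot}_2^2$ for each of Terms I, II, III. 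Before doing so I would record two magnitude facts that follow from the identification constraint (\ref{eqn:Q-constraint}) and the factor-strength conditions: since $\bXi_t = \bQ_A\bZ_t\bQ_B'$ with $\bQ_A,\bQ_B$ column-orthonormal, $\norm{\bZ_t}_2 = \norm{\bXi_t}_2 \le \norm{\bA(\bs)}_2\norm{\bX_t}_2\norm{\bB}_2 = \Op{n^{(1-\delta)/2}p^{(1-\gamma)/2}}$ by Conditions \ref{cond:A_factor_strength}–\ref{cond:B_factor_strength}; and because each column of $\bQ_A$ has unit Euclidean norm spread over $n$ entries, $\norm{\bq_a(\bs_0)}_2 \asymp n^{-1/2}$ (likewise for $\hat\bq_a(\bs_0)$).

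For Term I, submultiplicativity gives $\norm{\mathrm{I}}_2 \le \norm{\hat\bQ_B - \bQ_B}_2\,\norm{\hat\bZ_t}_2\,\norm{\hat\bq_a(\bs_0)}_2$; inserting $\norm{\hat\bQ_B - \bQ_B}_2 = \Op{n^\delta p^\gamma T^{-1/2}}$ from Theorem \ref{thm:B_sep_err_bnd} and the two magnitude facts yields $\frac{1}{p}\norm{\mathrm{I}}_2^2 = \Op{n^\delta p^\gamma T^{-1}}$, exactly the last term of the claimed bound. For Term II, $\norm{\mathrm{II}}_2 \le \norm{\hat\bZ_t - \bZ_t}_2\,\norm{\hat\bq_a(\bs_0)}_2$, and substituting Theorem \ref{thm:Zt-bound} gives $\frac{1}{p}\norm{\mathrm{II}}_2^2 = \Op{\Delta_{npT} + n^\delta p^\gamma\Delta_{npT}^2}$; under the rate assumption $n^\delta p^\gamma T^{-1/2}=o(1)$ one checks $n^\delta p^\gamma\Delta_{npT}^2 \le \Delta_{npT}\,n^\delta p^\gamma$, so this term is absorbed into $\Delta_{npT}n^\delta p^\gamma + \Delta_{npT}^2$.

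The crux is Term III, which carries both the sieve bias and the propagated error of $\hat\bQ_A$, so I would split $\hat\bq_a(\bs_0) - \bq_a(\bs_0)$ into (a) the deterministic bias of approximating each $\kappa$-smooth, $n^{-1/2}$-scaled loading function $q_{a,j}$ by $J_n$ tensor-product terms, and (b) the stochastic error from fitting the estimated column $\hat\bQ_{A,\cdot j}$ in place of $\bQ_{A,\cdot j}$. By the H\"older-class condition and standard sieve approximation theory, part (a) contributes a squared bias of order $J_n^{-2\kappa}$ at the $n^{-1/2}$ scale of $q_{a,j}$, and multiplying by $\norm{\bZ_t}_2^2/p = n^{1-\delta}p^{-\gamma}$ gives precisely $J_n^{-2\kappa}n^{-\delta}p^{-\gamma}$, the first term of the bound. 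Part (b) is governed by the pointwise operator norm of the least-squares sieve smoother at $\bs_0$, which under the usual design regularity (eigenvalues of $\frac{1}{n}$ times the Gram matrix bounded away from $0$ and $\infty$) is of order $(J_n/n)^{1/2}$; applying it to the column error, whose size is $\norm{\hat\bQ_A - \bQ_A}_2 = \Op{n^\delta p^\gamma T^{-1/2} + n^{\delta/2-1/2}p^{\gamma/2-1/2}}$ by Theorem \ref{thm:reest_Q_A}, and again multiplying by $\norm{\bZ_t}_2^2/p$, produces the remaining $\Delta_{npT}$-polynomial terms.

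The main obstacle will be part (b) of Term III: the loading function is estimated only at the sampled sites (as $\hat\bQ_A$) yet must be evaluated at the out-of-sample point $\bs_0$, so one must control the extrapolation of the sieve smoother at a fixed location and track how the spectral-norm error of $\hat\bQ_A$ propagates through the inverse Gram matrix, with careful bookkeeping of the $J_n$ factors (which are kept in check by the slow growth of $J_n$). Once this is done, summing the three bounds, retaining leading orders, and using $n^\delta p^\gamma T^{-1/2}=o(1)$ to collapse the higher-order products leaves exactly the four displayed terms $J_n^{-2\kappa} n^{-\delta} p^{-\gamma} + \Delta_{npT} n^{\delta} p^{\gamma} + \Delta_{npT}^2 + n^{\delta} p^{\gamma} T^{-1}$.
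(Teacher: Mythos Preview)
Your overall plan is correct and matches the paper's strategy: decompose $\hat\bxi_t(\bs_0)-\bxi_t(\bs_0)$ along the three estimated ingredients, then plug in the rates for $\hat\bQ_B$, $\hat\bZ_t$, and the sieve-estimated loading function. Two points of comparison are worth noting.

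\emph{Decomposition.} The paper does not telescope; it writes out the full seven-term expansion of the triple product and then argues that the three ``pure'' terms $\bQ_B\bZ_t'(\hat\bg-\bq_a)$, $\bQ_B(\hat\bZ_t-\bZ_t)'\bq_a$, $(\hat\bQ_B-\bQ_B)\bZ_t'\bq_a$ dominate. This avoids having to control $\norm{\hat\bZ_t}_2$ and $\norm{\hat\bq_a(\bs_0)}_2$, which your telescoping Terms I and II require. Your route is cleaner but needs the extra (easy) step of verifying that the estimated quantities inherit the order of the true ones under the rate condition.

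\emph{Sieve extrapolation (your Term III(b)).} Here the paper takes a genuinely different and sharper route than your operator-norm argument. Rather than bounding the least-squares smoother at $\bs_0$ by $(J_n/n)^{1/2}$ times the $\ell_2$ column error (which, as you anticipate, costs a stray $J_n$ factor), the paper first converts the $\ell_2$ column bound $\norm{\hat\bQ_{A,\cdot j}-\bQ_{A,\cdot j}}_2^2=\Op{p^{\gamma}\Delta_{npT}}$ into a \emph{sup-norm} bound $\abs{e_{a,j}(\bs)}_\infty^2=\Op{n^{-1}p^{\gamma}\Delta_{npT}}$ by treating the error as samples from a H\"older-class function (Lemma 11). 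It then applies spline approximation theory directly to $e_{a,j}$ to get $\norm{\hat g_j-q_{a,j}}_\infty=\Op{J_n^{-\kappa}n^{-1/2}+\sqrt{n^{-1}p^{\gamma}\Delta_{npT}}}$ (Lemma 12), and this sup-norm estimate at $\bs_0$ is what is plugged into the product bound. This device is precisely what removes the $J_n$ contamination you flagged and delivers the stated rate cleanly; your smoother-operator approach would work but would give a slightly weaker bound unless you also invoke the H\"older structure of the error.
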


\section{Simulation} \label{sec:simu}
In this section we study the numerical performance of the proposed method on synthetic data sets. We let $\bs_1, \cdots, \bs_n$ be drawn randomly from the uniform distribution on $[-1,1]^2$ and the observed data $\by_t(\bs)$ be generated according to Model \eqref{eqn:stvp_c0}:
\begin{equation}
\by_t(\bs) = \bxi_t(\bs) + \bepsilon_t(\bs) = \bB \bX'_t \ba(\bs) + \bepsilon_t(\bs). \nonumber
\end{equation}
The dimensions of $\bX_t$ are chosen to be $d=3$, $r=2$, and are fixed in all simulations. The latent factor $\bX_t$ is generated from the Gaussian matrix time series \eqref{eqn:mar1}:
\begin{equation}
\bX_{t} = \bPhi_R \, \bX_{t-1} \, \bPhi_C + \bU_t, \nonumber
\end{equation}
where $\bPhi_R = diag(0.7, \, 0.8, \, 0.9)$, $\bPhi_C = diag(0.8, \, 0.6)$ and the entries of $\bU_t$ are white noise Gaussian process with mean $\bzero$ and covariance structure such that $\bSigma_U = \Cov{vec(\bU_t)}$. Here we use $\bSigma_U=\bI_{dr}$. Alternatively, we could use Kronecker product covariance structure $\bSigma_U=\bSigma_C \otimes \bSigma_R$ or arbitrary covariance matrix $\bSigma_U$. As shown in \cite{yang2017autoregressive} and from our own experiments, this setting does not affect much on the results.

The entries of $\bB$ is independently sampled from the uniform distribution $\calU(-1,1) \cdot p^{\gamma/2}$. The nugget process $\bepsilon_t(\bs)$ are independent and normal with mean $\bzero$ and the covariance $(1+s_1^2+s_2^2)/2\sqrt{3}\cdot \bI_{p}$. The basis functions $a_j(\bs)$'s are designed to be
\begin{equation}\label{eqn:a_function}
a_1(\bs) = (s_1 - s_2)/2, \quad a_2(\bs)=\cos \left( \pi \sqrt{2(s_1^2+s_2^2)} \right), \quad a_3(\bs)=1.5s_1s_2.
\end{equation}
With the above generating model setting, the signal-noise-ratio of $p$-dimensional variable, which is defined as
\[
SNR \equiv \frac{ \int_{\bs \in [-1,1]^2} Trace \left[ Cov  \left(\bxi_{t}(\bs) \right) \right] d\bs  }{ \int_{\bs \in [-1,1]^2} Trace \left[ Cov \left(\bepsilon_{t}(\bs)\right) \right] d\bs } \approx 2.58.
\]
We run $200$ simulations for each combination of $n = 50, 100, 200, 400$, $p = 10, 20, 40$, and $T= 60, 120, 240$. With each simulation, we calculate $\hat{d}$, $\hat{r}$, $\hat{\bA}_1$, $\hat{\bA_2}$, $\hat{\bB}$ and $\hat{\bXi}_t$, re-estimate $\hat{\bA}$ and $\widetilde{\bXi}_t$, then use $\hat{\bA}$ to get approximated $\hat{a}_j(\bs)$ following the estimation procedure described in Section \ref{sec:est}.

Table \ref{table:freq_dr_est} presents the relative frequencies of estimated rank pairs over 200 simulations. The columns corresponding to the true rank pair $(3,2)$ is highlighted.

Specifically, we show the estimated performance of spatial loading matrix $\bA$, spatial-temporal covariance $\bSigma_{\xi,|t_1-t_2|}(\bu,\bv)$ and latent factor $\bff_t(\bs)$. Let $p=40$, $n=400$ and $T=240$. Figure \ref{fig:spdistA.0_3d} presents the true surface of loading function $\ba_1(\bs)$, $\ba_2(\bs)$, $\ba_3(\bs)$ in (\ref{eqn:a_function}) on the top, and the fitted surface of $\hat{\ba}_1(\bs)$, $\hat\ba_2(\bs)$, $\hat\ba_3(\bs)$ on the bottom, which are all quite close with the true surface in shape.
Figure \ref{fig:cov_real_xi.0} presents one example of the sample temporal covariance $\tilde{\bSigma}_{\xi,|t_1-t_2|}(\bs_1,\bs_2)$  (top three) and estimated temporal covariance $\hat{\bSigma}_{\xi,|t_1-t_2|}(\bs_1,\bs_2)$ (bottom three) of $\bSigma_{\xi,|t_1-t_2|}(\bs_1,\bs_1)$ with time lag $t_1-t_2=0,1,2$ and $\bs_1$ is randomly selected. Our proposed model and estimation method can duplicate the temporal dependence structure very well. Spatial covariance also shows the similar result.
Figure \ref{fig:spdistf.0} presents the true factor $\bff_t(\bs)$ and the estimated factor $\hat{\bff}_t(\bs)$ by proposed method. We can see that they are very close.

The performance of correctly estimating the loading spaces are measured by the space distance between the estimated and true loading matrices $\hat{\bA}$ and $\bA$, which is defined as
\[
\calD(\calM(\hat{\bA}), \calM(\bA)) = \left( 1 - \frac{1}{\max(d,\hat{d})} tr\left( \hat{\bA} (\hat{\bA}'\hat{\bA})^{-1} \hat{\bA}' \cdot \bA (\bA'\bA)^{-1} \bA' \right) \right)^{\frac{1}{2}}.
\]
It can be shown that $\calD(\calM(\hat{\bA}), \calM(\bA))$ takes its value in $[0,1]$, it equals to $0$ if and only if $\calM(\hat{\bA}) = \calM(\bA)$, and equals to $1$ if and only if $\calM(\hat{\bA}) \perp \calM(\bA)$.

\begin{table}[htpb!]
\centering
\caption{Relative frequency of estimated rank pair $(\hat{d}, \hat{r})$ over 200 simulations. The columns correspond to the true value pair $(3,2)$ are highlighted. Blank cell represents zero value.}
\label{table:freq_dr_est}
\resizebox{0.9\textwidth}{!}{%
\begin{tabular}{ccc|cccc|cccccc}
\hline
\multicolumn{3}{c|}{$(\hat{d}, \, \hat{r})$} & \multicolumn{4}{c|}{$\gamma = 0$} & \multicolumn{6}{c}{$\gamma = 0.5$} \\ \hline
T & p & n & \cellcolor[HTML]{EFEFEF}(3,2) & (3,1) & (2,2) & (1,2) & \cellcolor[HTML]{EFEFEF}(3,2) & (3,1) & (2,2) & (2,1) & (1,2) & (1,1) \\ \hline
60 & 10 & 50 & \cellcolor[HTML]{EFEFEF}0.77 & 0.01 & 0.04 & 0.19 & \cellcolor[HTML]{EFEFEF}0.11 & 0.02 & 0.12 & 0.02 & 0.61 & 0.14 \\
120 & 10 & 50 & \cellcolor[HTML]{EFEFEF}1.00 &   &   & 0.01 & \cellcolor[HTML]{EFEFEF}0.42 &   & 0.08 &   & 0.51 &   \\
240 & 10 & 50 & \cellcolor[HTML]{EFEFEF}1.00 &   &   &   & \cellcolor[HTML]{EFEFEF}0.91 &   & 0.01 &   & 0.09 &   \\ \hdashline
60 & 20 & 50 & \cellcolor[HTML]{EFEFEF}0.86 &   & 0.02 & 0.13 & \cellcolor[HTML]{EFEFEF}0.02 &   & 0.10 &   & 0.88 & 0.01 \\
120 & 20 & 50 & \cellcolor[HTML]{EFEFEF}1.00 &   &   &   & \cellcolor[HTML]{EFEFEF}0.08 &   & 0.04 &   & 0.88 &   \\
240 & 20 & 50 & \cellcolor[HTML]{EFEFEF}1.00 &   &   &   & \cellcolor[HTML]{EFEFEF}0.49 &   & 0.01 &   & 0.50 &   \\ \hdashline
60 & 40 & 50 & \cellcolor[HTML]{EFEFEF}0.96 &   & 0.01 & 0.04 & \cellcolor[HTML]{EFEFEF}0.03 &   & 0.09 &   & 0.88 & 0.01 \\
120 & 40 & 50 & \cellcolor[HTML]{EFEFEF}1.00 &   &   &   & \cellcolor[HTML]{EFEFEF}0.02 &   & 0.07 &   & 0.91 &   \\
240 & 40 & 50 & \cellcolor[HTML]{EFEFEF}1.00 &   &   &   & \cellcolor[HTML]{EFEFEF}0.32 &   & 0.01 &   & 0.68 &   \\ \hline
60 & 10 & 100 & \cellcolor[HTML]{EFEFEF}0.98 &   & 0.02 &   & \cellcolor[HTML]{EFEFEF}0.65 & 0.10 & 0.18 & 0.04 & 0.03 & 0.01 \\
120 & 10 & 100 & \cellcolor[HTML]{EFEFEF}1.00 &   &   &   & \cellcolor[HTML]{EFEFEF}0.99 & 0.01 & 0.01 &   &   &   \\
240 & 10 & 100 & \cellcolor[HTML]{EFEFEF}1.00 &   &   &   & \cellcolor[HTML]{EFEFEF}1.00 &   &   &   &   &   \\  \hdashline
60 & 20 & 100 & \cellcolor[HTML]{EFEFEF}1.00 &   &   &   & \cellcolor[HTML]{EFEFEF}0.73 &   & 0.22 &   & 0.06 &   \\
120 & 20 & 100 & \cellcolor[HTML]{EFEFEF}1.00 &   &   &   & \cellcolor[HTML]{EFEFEF}0.97 &   & 0.04 &   &   &   \\
240 & 20 & 100 & \cellcolor[HTML]{EFEFEF}1.00 &   &   &   & \cellcolor[HTML]{EFEFEF}1.00 &   &   &   &   &   \\  \hdashline
60 & 40 & 100 & \cellcolor[HTML]{EFEFEF}1.00 &   &   &   & \cellcolor[HTML]{EFEFEF}0.72 &   & 0.24 &   & 0.05 &   \\
120 & 40 & 100 & \cellcolor[HTML]{EFEFEF}1.00 &   &   &   & \cellcolor[HTML]{EFEFEF}0.96 &   & 0.04 &   &   &   \\
240 & 40 & 100 & \cellcolor[HTML]{EFEFEF}1.00 &   &   &   & \cellcolor[HTML]{EFEFEF}1.00 &   &   &   &   &   \\ \hline
60 & 10 & 200 & \cellcolor[HTML]{EFEFEF}1.00 &   &   &   & \cellcolor[HTML]{EFEFEF}0.80 & 0.15 & 0.02 & 0.01 & 0.03 &   \\
120 & 10 & 200 & \cellcolor[HTML]{EFEFEF}1.00 &   &   &   & \cellcolor[HTML]{EFEFEF}1.00 & 0.01 &   &   &   &   \\
240 & 10 & 200 & \cellcolor[HTML]{EFEFEF}1.00 &   &   &   & \cellcolor[HTML]{EFEFEF}1.00 &   &   &   &   &   \\ \hdashline
60 & 20 & 200 & \cellcolor[HTML]{EFEFEF}1.00 &   &   &   & \cellcolor[HTML]{EFEFEF}0.94 &   & 0.02 &   & 0.04 &   \\
120 & 20 & 200 & \cellcolor[HTML]{EFEFEF}1.00 &   &   &   & \cellcolor[HTML]{EFEFEF}1.00 &   &   &   &   &   \\
240 & 20 & 200 & \cellcolor[HTML]{EFEFEF}1.00 &   &   &   & \cellcolor[HTML]{EFEFEF}1.00 &   &   &   &   &   \\ \hdashline
60 & 40 & 200 & \cellcolor[HTML]{EFEFEF}1.00 &   &   &   & \cellcolor[HTML]{EFEFEF}0.97 &   & 0.01 &   & 0.03 &   \\
120 & 40 & 200 & \cellcolor[HTML]{EFEFEF}1.00 &   &   &   & \cellcolor[HTML]{EFEFEF}1.00 &   &   &   &   &   \\
240 & 40 & 200 & \cellcolor[HTML]{EFEFEF}1.00 &   &   &   & \cellcolor[HTML]{EFEFEF}1.00 &   &   &   &   &   \\ \hline
60 & 10 & 400 & \cellcolor[HTML]{EFEFEF}1.00 &   &   &   & \cellcolor[HTML]{EFEFEF}0.89 & 0.10 &   &   & 0.02 &   \\
120 & 10 & 400 & \cellcolor[HTML]{EFEFEF}1.00 &   &   &   & \cellcolor[HTML]{EFEFEF}1.00 & 0.01 &   &   &   &   \\
240 & 10 & 400 & \cellcolor[HTML]{EFEFEF}1.00 &   &   &   & \cellcolor[HTML]{EFEFEF}1.00 &   &   &   &   &   \\ \hdashline
60 & 20 & 400 & \cellcolor[HTML]{EFEFEF}1.00 &   &   &   & \cellcolor[HTML]{EFEFEF}1.00 &   &   &   & 0.01 &   \\
120 & 20 & 400 & \cellcolor[HTML]{EFEFEF}1.00 &   &   &   & \cellcolor[HTML]{EFEFEF}1.00 &   &   &   &   &   \\
240 & 20 & 400 & \cellcolor[HTML]{EFEFEF}1.00 &   &   &   & \cellcolor[HTML]{EFEFEF}1.00 &   &   &   &   &   \\ \hdashline
60 & 40 & 400 & \cellcolor[HTML]{EFEFEF}1.00 &   &   &   & \cellcolor[HTML]{EFEFEF}1.00 &   &   &   & 0.01 &   \\
120 & 40 & 400 & \cellcolor[HTML]{EFEFEF}1.00 &   &   &   & \cellcolor[HTML]{EFEFEF}1.00 &   &   &   &   &   \\
240 & 40 & 400 & \cellcolor[HTML]{EFEFEF}1.00 &   &   &   & \cellcolor[HTML]{EFEFEF}1.00 &   &   &   &   &   \\ \hline
\end{tabular}%
}
\end{table}

Figure \ref{fig:spdistA.0} presents the box plot of the average space distance
\[
\frac{1}{2}\left( \calD(\calM(\hat{\bA}_1), \calM(\bA_1)) + \calD(\calM(\hat{\bA}_2), \calM(\bA_2))  \right)
\]
and compare it with the box plot of space distance between re-estimated $\hat{\bA}$ and the truth $\bA$.
Figure \ref{fig:spdistB.0} presents the box plot of the space distance between $\hat{\bB}$ and the truth $\bB$.

\begin{figure}[ht!]
	\centering
	\includegraphics[width=\linewidth,height=\textheight,keepaspectratio=true]{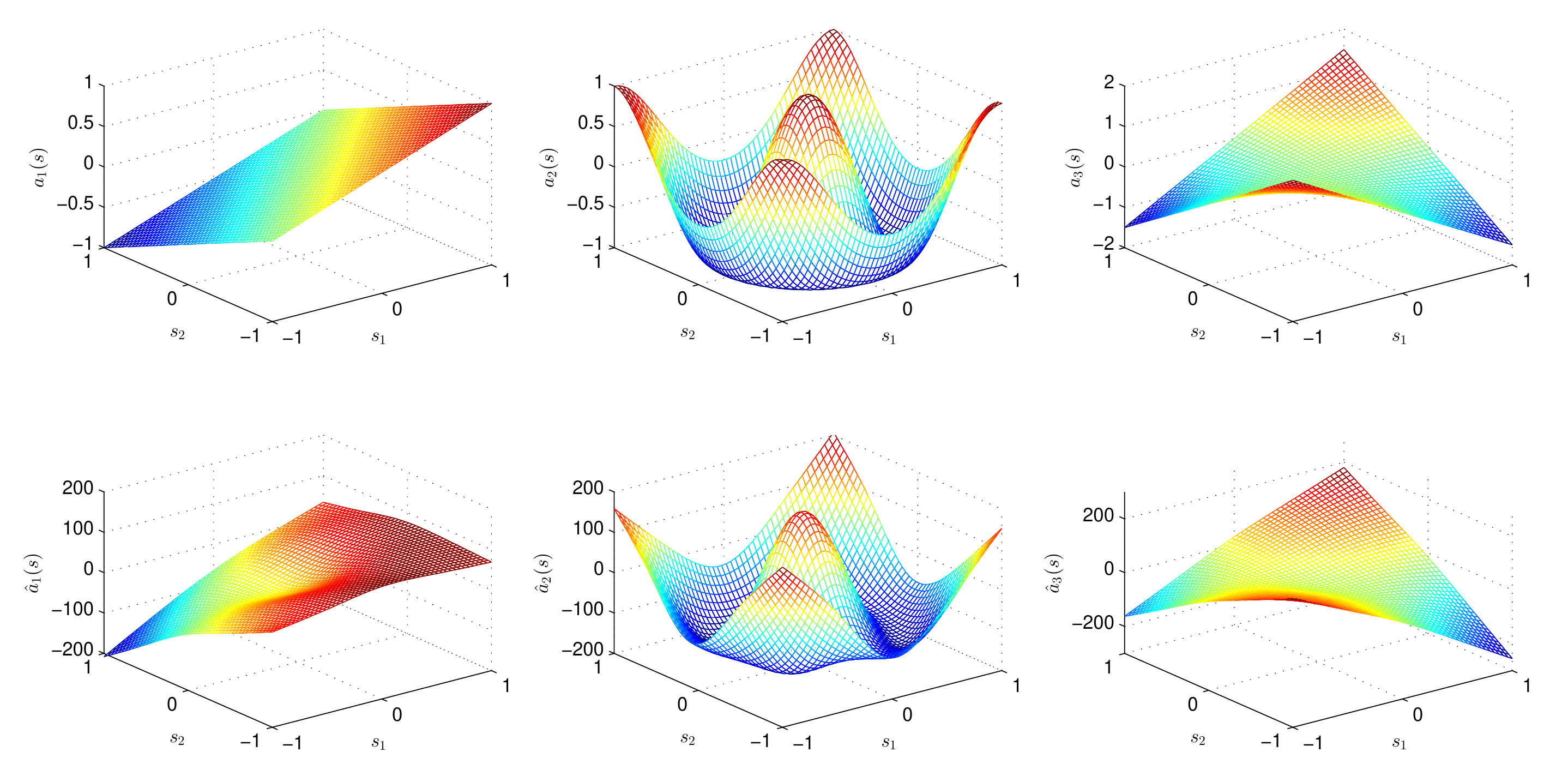}
	\caption{True surface of loading function $\ba_j(\bs)$ (top three) and fitted surface of $\hat{\ba}_j(\bs)$ (bottom three), $j=1,2,3$. Let $n=400$, $p=40$, and $T=240$.}
	\label{fig:spdistA.0_3d}
\end{figure}

\begin{figure}[ht!]
     \centering
     \begin{subfigure}[b]{0.3\textwidth}
         \centering
         \caption*{$\tilde{\bSigma}_{\xi,|t_1-t_2|}(\bs_1,\bs_2)$, $t_1-t_2=0$}
         \includegraphics[width=.95\textwidth]{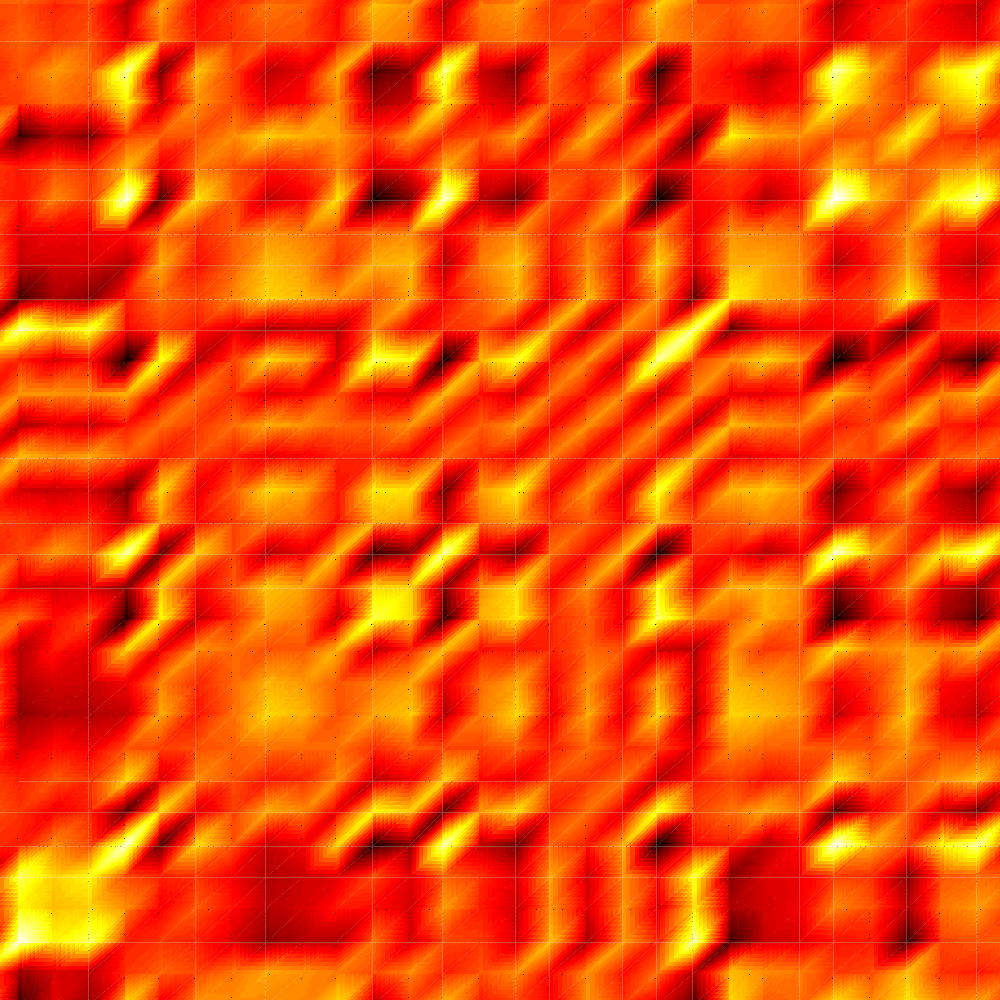}
     \end{subfigure}
     \hfill
     \begin{subfigure}[b]{0.3\textwidth}
         \centering
         \caption*{$\tilde{\bSigma}_{y,|t_1-t_2|}(\bs_1,\bs_2)$, $t_1-t_2=1$}
         \includegraphics[width=.95\textwidth]{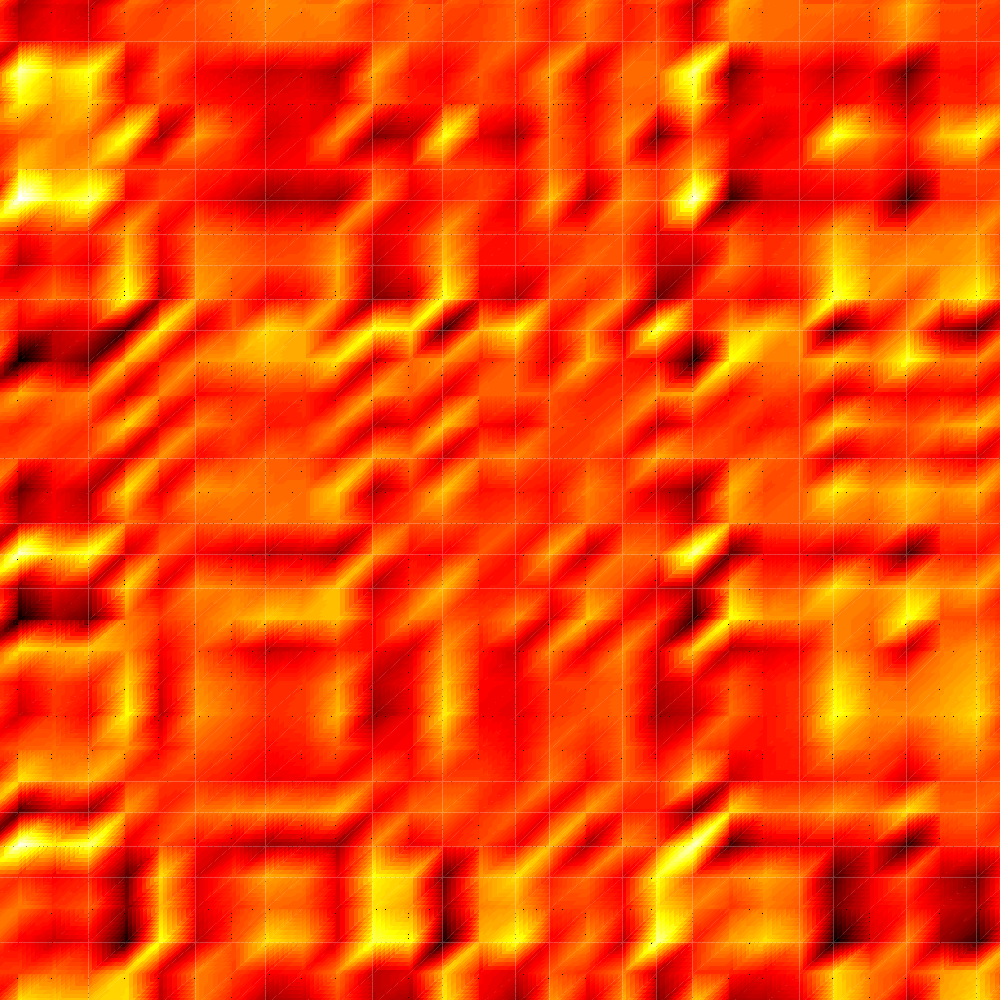}
     \end{subfigure}
     \hfill
     \begin{subfigure}[b]{0.3\textwidth}
         \centering
         \caption*{$\tilde{\bSigma}_{\xi,|t_1-t_2|}(\bs_1,\bs_2)$, $t_1-t_2=2$}
         \includegraphics[width=.95\textwidth]{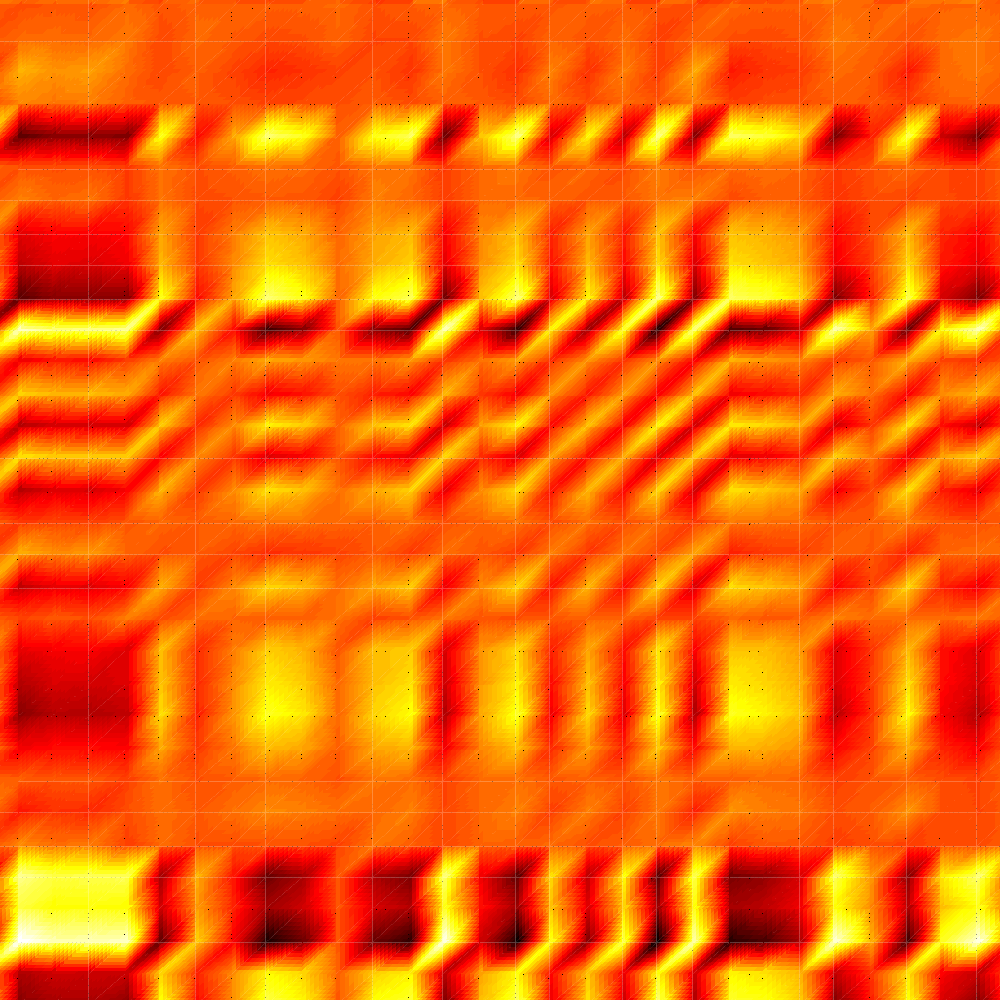}
     \end{subfigure}
     
     \begin{subfigure}[b]{0.3\textwidth}
         \centering
         \caption*{$\hat{\bSigma}_{\xi,|t_1-t_2|}(\bs_1,\bs_2)$, $t_1-t_2=0$}
         \includegraphics[width=.95\textwidth]{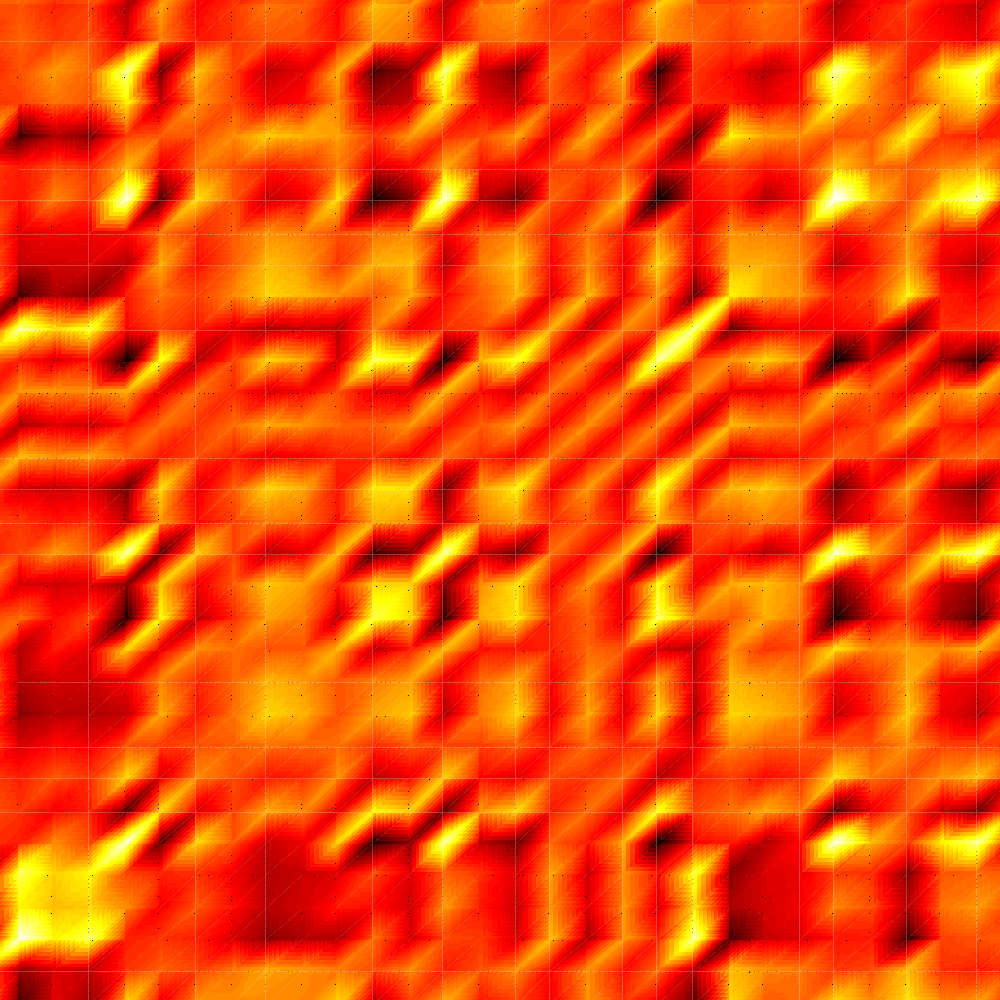}
     \end{subfigure}
     \hfill
     \begin{subfigure}[b]{0.3\textwidth}
         \centering
         \caption*{$\hat{\bSigma}_{\xi,|t_1-t_2|}(\bs_1,\bs_2)$, $t_1-t_2=1$}
         \includegraphics[width=.95\textwidth]{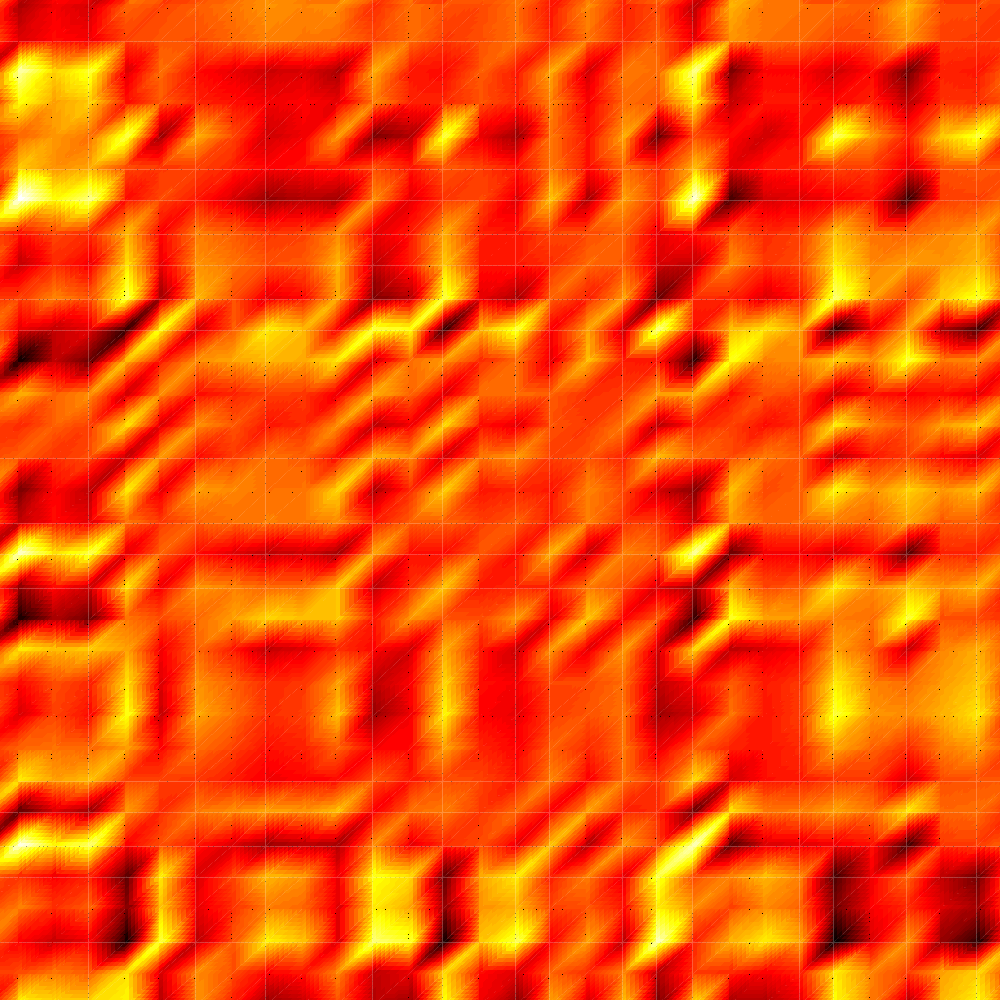}
     \end{subfigure}
     \hfill
     \begin{subfigure}[b]{0.3\textwidth}
         \centering
         \caption*{$\hat{\bSigma}_{\xi,|t_1-t_2|}(\bs_1,\bs_2)$, $t_1-t_2=2$}
         \includegraphics[width=.95\textwidth]{./figs/cov_xi_1_1}
     \end{subfigure}
\caption{The sample temporal covariance $\tilde{\bSigma}_{\xi,|t_1-t_2|}$ and estimated temporal covariance matrix $\hat{\bSigma}_{\xi,|t_1-t_2|}$. Here, the time lag $t_1-t_2=0,1,2$. $n=400$, $p=40$, and $T=240$. $\bs_1$ is randomly generated from $[-1,1]^2$. }
\label{fig:cov_real_xi.0}
\end{figure}

\begin{figure}[ht!]
	\centering
	\includegraphics[width=0.85\textwidth]{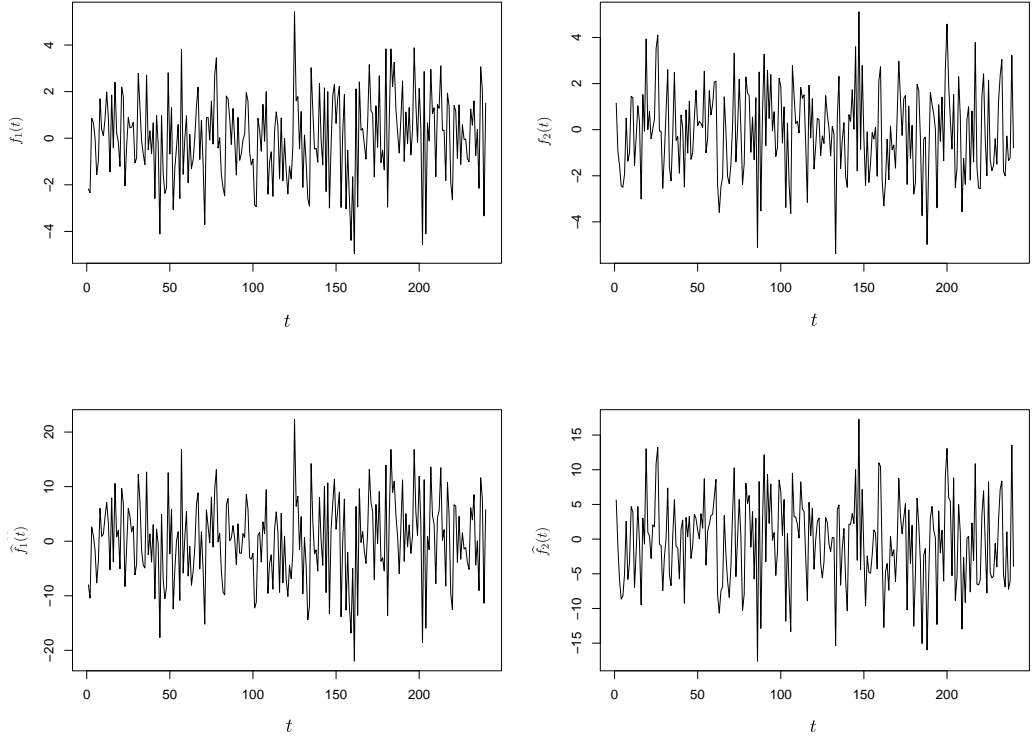}
	\caption{True factor $\bff_t(\bs)$ (on the top) and the estimated factor $\hat{\bff}_t(\bs)$ (on the bottom) by proposed method. $n=400$, $p=40$, and $T=240$. $\bs$ is randomly generated from $[-1,1]^2$}
	\label{fig:spdistf.0}
\end{figure}

\begin{figure}[ht!]
	\centering
	\includegraphics[width=0.85\textwidth]{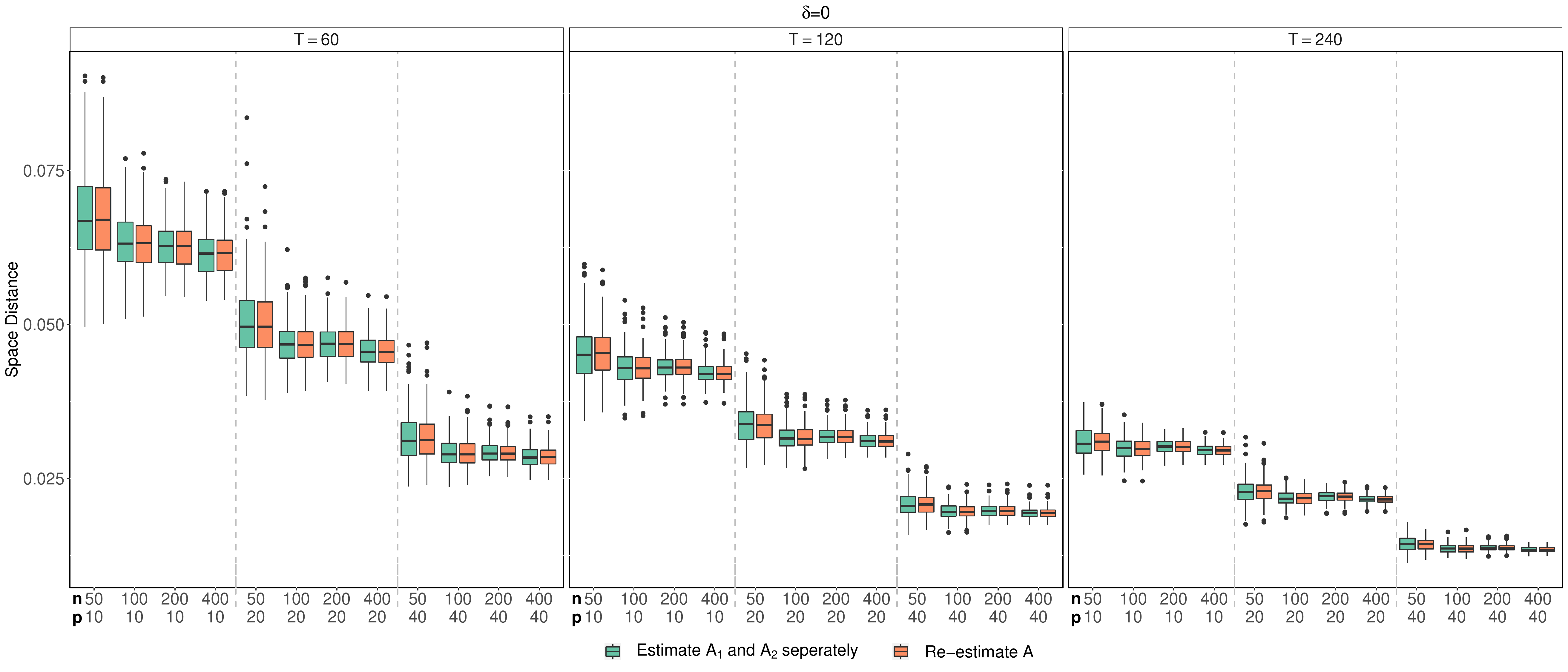}
	\caption{Box-plots of the estimation accuracy measured by $\mathcal{D}(\widehat{\bA}(\bs), \bA(\bs))$ for the case of orthogonal constraints. Gray boxes represent the average of $\mathcal{D}(\widehat{\bA}_1(\bs), \bA_1(\bs))$ and $\mathcal{D}(\widehat{\bA}_2(\bs), \bA_2(\bs))$. The results are based on $200$ iterations. See Table \ref{table:spdist_msd_table} in Appendix \ref{appendix:tableplots} for mean and standard deviations of the spatial distance. }
	\label{fig:spdistA.0}
\end{figure}

\begin{figure}[ht!]
	\centering
	\includegraphics[width=0.85\textwidth]{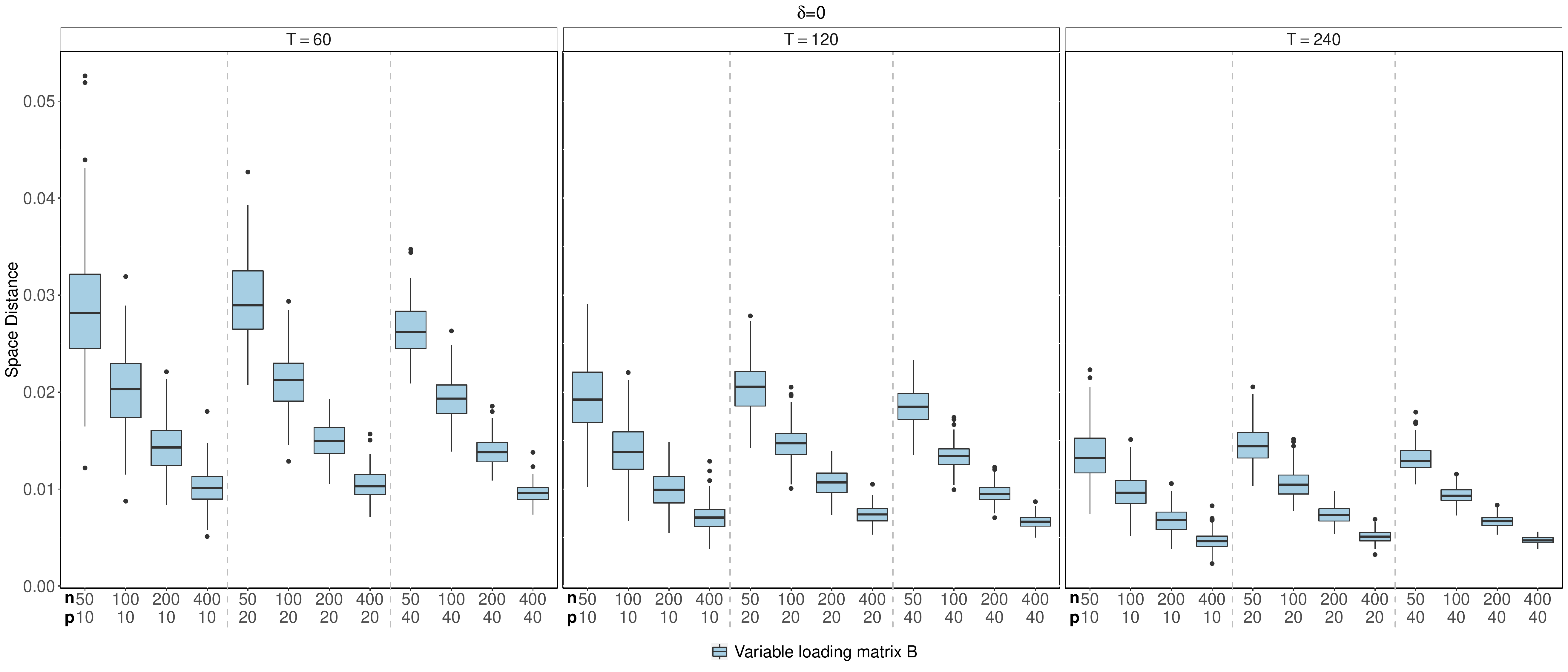}
	\caption{Box-plots of the estimation accuracy of variable loading matrix measured by $\mathcal{D}(\widehat{\bB}, \bB)$. The results are based on $200$ iterations. See Table \ref{table:spdist_msd_table} in Appendix \ref{appendix:tableplots} for mean and standard deviations of the spatial distance. }
	\label{fig:spdistB.0}
\end{figure}

\begin{figure}[ht!]
	\centering
	\includegraphics[width=0.85\textwidth]{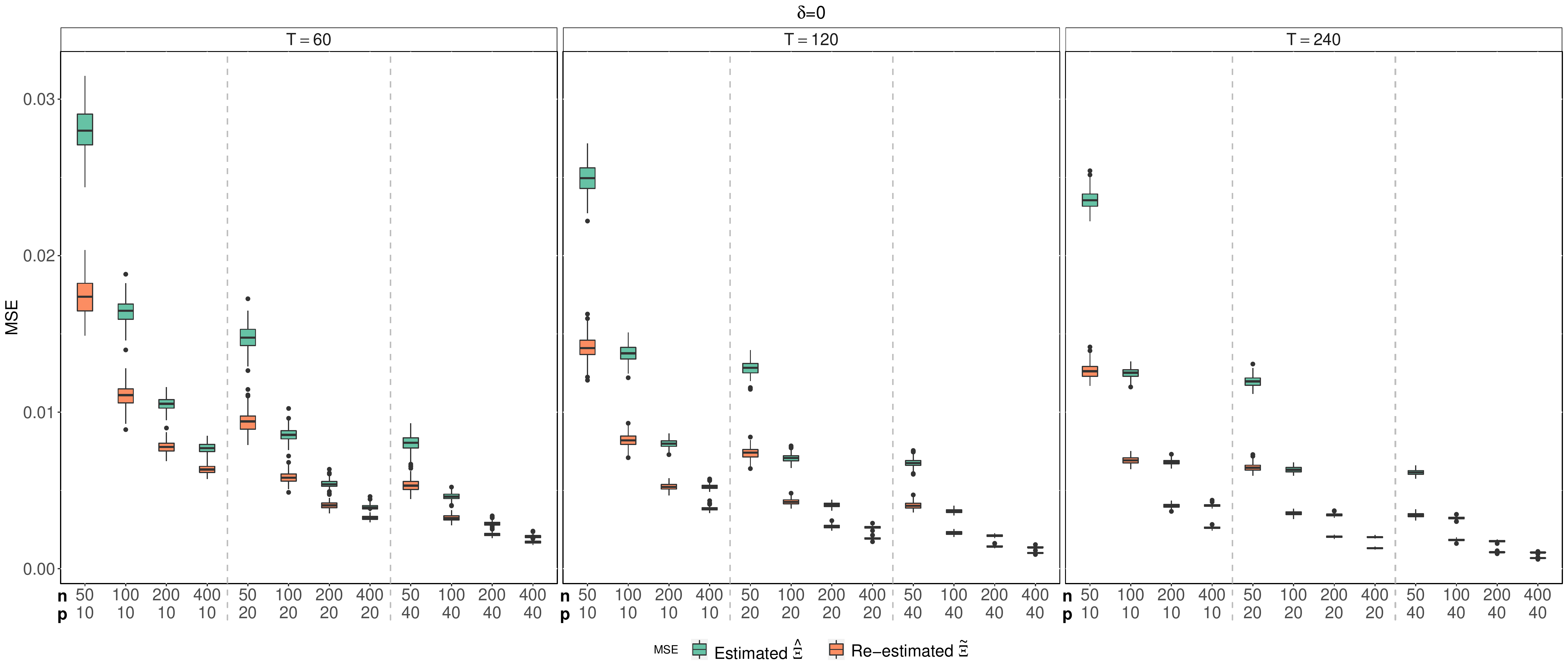}
	\caption{Box-plots of the estimation of signals MSE. Gray boxes represent the our procedure. The results are based on $200$ iterations. See Table \ref{table:spdist_msd_table} in Appendix \ref{appendix:tableplots} for mean and standard deviations of the MSE. }
	\label{fig:signalest.0}
\end{figure}

\begin{figure}[ht!]
	\centering
	\includegraphics[width=0.85\textwidth]{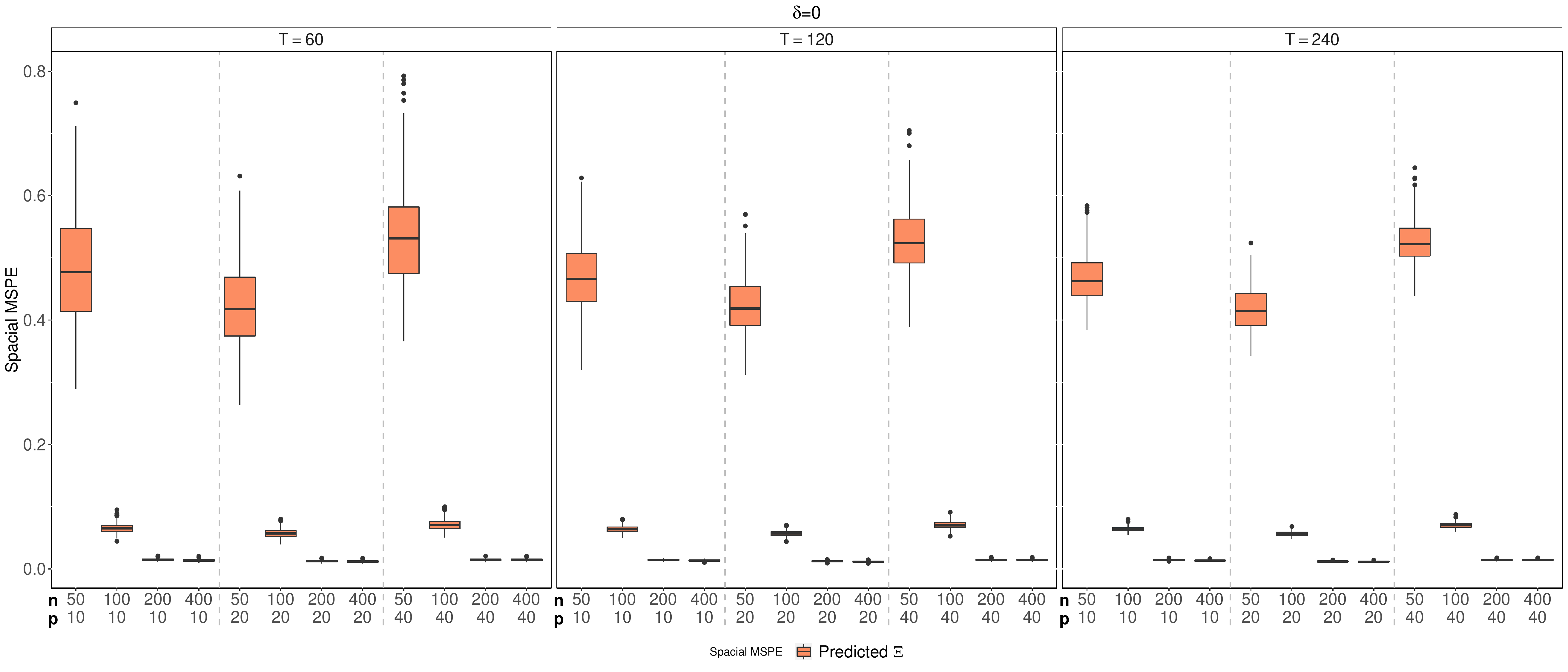}
	\caption{Box-plots of the spatial prediction measured by average MSPE for $50$ new locations. Colored boxes represent the our model. The results are based on $200$ iterations. See Table \ref{table:STprediction} in Appendix \ref{appendix:tableplots} for mean and standard deviations of the MSPE. }
	\label{fig:signalspaceprd.0}
\end{figure}

\begin{figure}[ht!]
	\centering
	\includegraphics[width=0.85\textwidth]{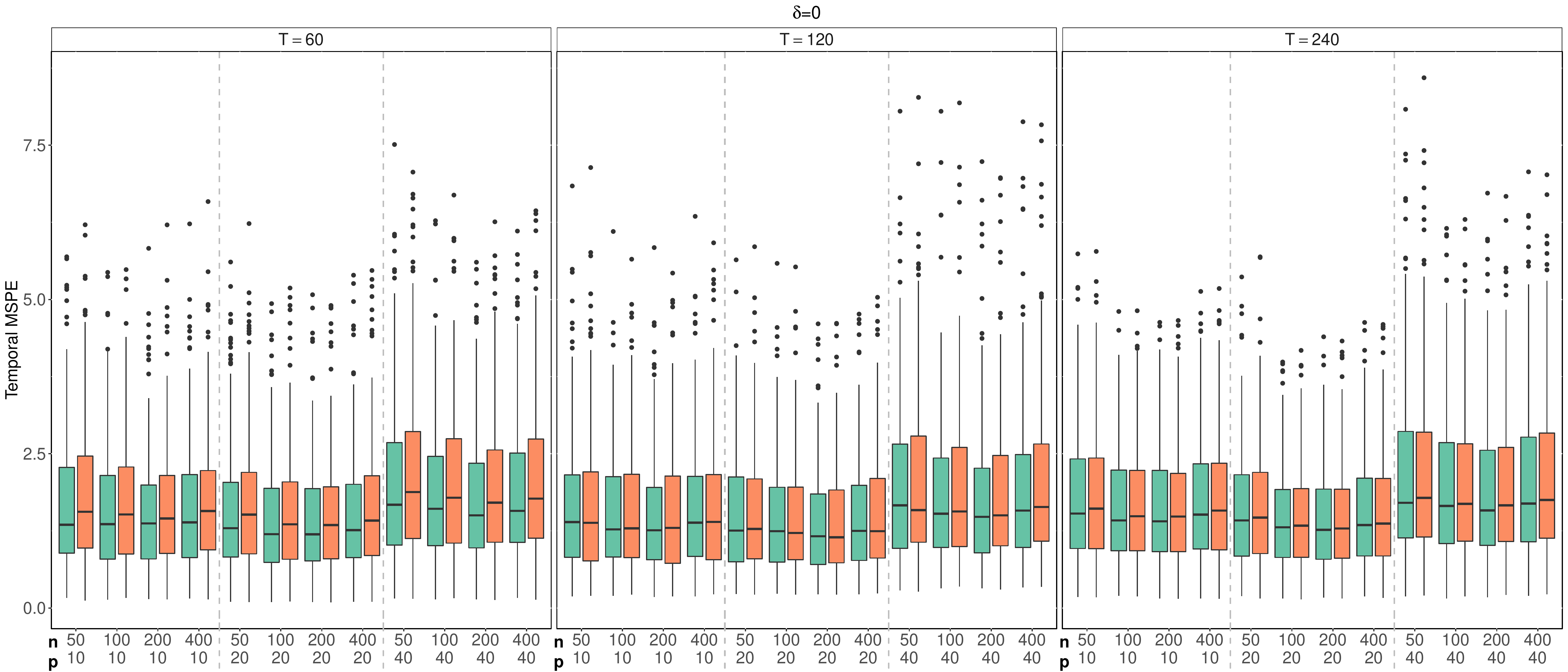}
	\caption{Box-plots of the one step ahead forecasting accuracy measured by MSPE. Gray boxes represent the MAR(1) model. The results are based on $200$ iterations. See Table \ref{table:STprediction} in Appendix \ref{appendix:tableplots} for mean and standard deviations of the MSPE. }
	\label{fig:signaltimeprd.0}
\end{figure}

Define the mean squared error of estimated signals $\hat{\bxi}$ as
\[
MSE(\hat{\bxi}) = \frac{1}{npT} \sum_{t=1}^{T} \sum_{i=1}^{n} \norm{ \hat{\bxi}_t(\bs_i) - \bxi_t(\bs_i) }^2_2.
\]
We compare the mean square error between first estimated $\hat{\bXi}_t$ defined in (\ref{eqn:signal_mat_est_sep}) and re-estimated $\tilde{\bXi}_t$ defined as
\[
\tilde{\bXi} = \begin{bmatrix} \tilde{\bXi}_1, \cdots, \tilde{\bXi}_T \end{bmatrix} = \tilde{\bA} \tilde{\bX} \hat{\bB}'.
\]
The box plots of $MSE(\hat{\bxi})$ and $MSE(\tilde{\bxi})$ are in Figure \ref{fig:signalspaceprd.0}. Re-estimated provides much more accurate estimate for $\bxi_t(\bs_j)$ than $\tilde{\bxi}_t(\bs_j)$ does.

To demonstrate the performance of spatial prediction, we generate data at a set $\calS_0$ of $50$ new locations randomly sampled from $\calU[-1,1]^2$. For each $t = 1, \ldots, T$, we calculate the spatial prediction $\hat{\by}_t(\cdot) = \hat{\bxi}_t(\cdot)$ defined in (\ref{eqn:pred_xi_s0}) for each location in $\calS_0$. The mean squared spatial prediction error is calculated as
\[
MSPE(\hat{\by}) = \frac{1}{50 p T} \sum_{t=1}^{T} \sum_{s_0 \in \calS_0} \norm{ \hat{\by}_t(\bs_0) - \bxi_t(\bs_0) }^2_2.
\]

To demonstrate the performance of temporal forecasting, we generate $\bX_{T+h}$ according to the matrix time series (\ref{eqn:mar1}) for $h=1,2$ and compute both the one-step-ahead and two-step-ahead predictions at time $T$. The mean square temporal prediction error is computed as
\[
MSPE(\hat{\by}_{T+h}) = \frac{1}{n p} \sum_{j=1}^{n} \norm{ \hat{\by}_{T+h}(\bs_j) - \bxi_{T+h}(\cdot) }^2_2.
\]

Figure \ref{fig:signalspaceprd.0} presents box-plots of the spatial prediction measured by average MSPE for $50$ new locations. The results are based on $200$ iterations. Figure \ref{fig:signaltimeprd.0} compares the MSPEs using matrix time series MAR(1) and vectorized time series VAR(1) estimates.

The means and standard errors of the MSPEs from 200 simulations for each model setting are reported in Table \ref{table:STprediction} in Appendix \ref{appendix:tableplots}. It also reports the means and standard errors of the MSPEs using matrix time series MAR(1) and vectorized time series VAR(1) estimates.

\section{Real Data Applications} \label{sec:appl}

In this section, we apply the proposed method to the Comprehensive Climate Data Set (CCDS) -- a collection of climate records of North America. The data set was compiled from five federal agencies sources by \cite{lozano2009spatial}\footnote{\url{http://www-bcf.usc.edu/~liu32/data/NA-1990-2002-Monthly.csv}}. It contains monthly observations of 17 climate variables spanning from 1990 to 2001 on a $2.5 \times 2.5$ degree grid for latitudes in $(30.475, 50.475)$, and longitudes in $(-119.75, -79.75)$. The total number of observation locations is 125 and the whole time series spans from January, 1991 to December, 2002. We use a subset of the original data set because of the data quality. It contains measurements of 16 variables at all the locations range from January, 1992 to December, 2002. Thus, the dimensions our our data set are 125 (locations) $\times$ 16 (variables) $\times$ 132 (time points).  Table \ref{table:CCDS_varlist} lists the variables used in our analysis. Detailed information about data is given in \cite{lozano2009spatial}.

\begin{table}[htpb!]
\centering
\caption{Variables and data sources in the Comprehensive Climate Data Set (CCDS)}
\label{table:CCDS_varlist}
\resizebox{0.7\textwidth}{!}{%
\begin{tabular}{l|c|l|c}
\hline
Variables (Short name) & Variable group & \multicolumn{1}{c|}{Type} & Source \\ \hline
Methane (CH4) & $CH_4$ & \multirow{4}{*}{Greenhouse Gases} & \multirow{4}{*}{NOAA} \\
Carbon-Dioxide (CO2) & $CO_2$ &  &  \\
Hydrogen (H2) & $H_2$ &  &  \\
Carbon-Monoxide (CO) & $CO$ &  &  \\ \hline
Temperature (TMP) & TMP & \multirow{8}{*}{Climate} & \multirow{8}{*}{CRU} \\
Temp Min (TMN) & TMP &  &  \\
Temp Max (TMX) & TMP &  &  \\
Precipitation (PRE) & PRE &  &  \\
Vapor (VAP) & VAP &  &  \\
Cloud Cover (CLD) & CLD &  &  \\
Wet Days (WET) & WET &  &  \\
Frost Days (FRS) & FRS &  &  \\ \hline
Global Horizontal (GLO) & SOL & \multirow{4}{*}{Solar Radiation} & \multirow{4}{*}{NCDC} \\
Direct Normal (DIR) & SOL &  &  \\
Global Extraterrestrial (ETR) & SOL &  &  \\
Direct Extraterrestrial (ETRN) & SOL &  &  \\ \hline
\end{tabular}%
}
\end{table}

We first remove seasonal patterns in this data set by taking difference between the same month in consequent years.
We then centralize and standardize each series to have zero mean and unit variance before further investigation.

To estimate the latent dimensions, we combine the method of the scree plots and the eigen-ratio method.
Figure \ref{fig:latent-dimension} shows the scree plots and the eigen-ratio plots of the latent spatial and variable dimensions.
Scree plots show that, in order to achieve $90\%$ variance, we need to have latent spatial dimension $\hat d = 6$ and latent variable dimension $\hat r = 6$.
Eigen-ratio \eqref{eqn:eigen-ratio} estimates latent spatial dimension $\hat d = 12$ and latent variable dimension $\hat r = 4$.
Due to the dominance of the largest factors and weak signal in real data, the estimate by \eqref{eqn:eigen-ratio} tends to be less useful than the one given by the scree plot.
In the following, we choose $(\hat d, \hat r) = (6,6)$ as the latent dimensions.

\begin{figure}[htpb!]
     \centering
     \begin{subfigure}[b]{0.7\textwidth}
         \centering
         \caption{Scree plots}
         \includegraphics[width=\textwidth]{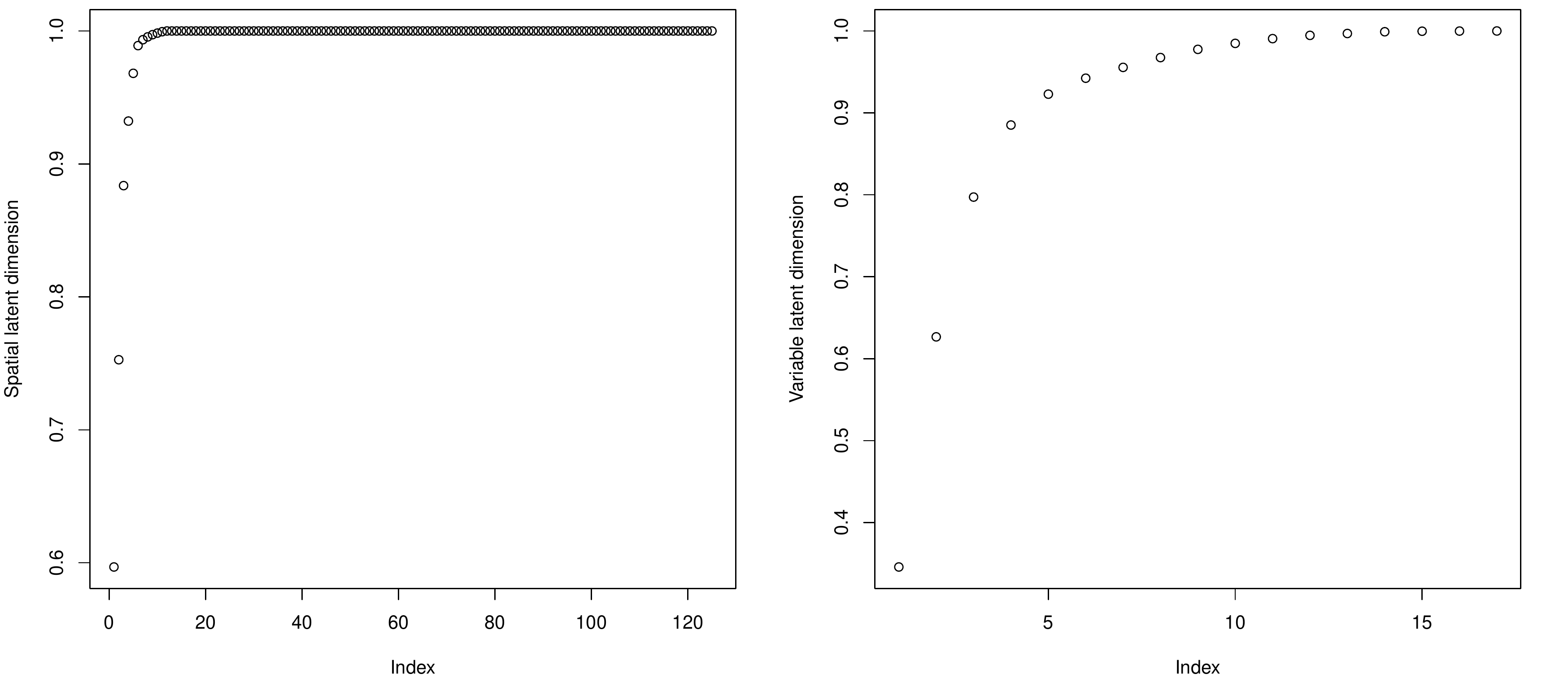}
         \label{fig:scree-plots}
     \end{subfigure}

      \begin{subfigure}[b]{0.7\textwidth}
         \centering
         \caption{Eigen-ratio plots}
         \includegraphics[width=\textwidth]{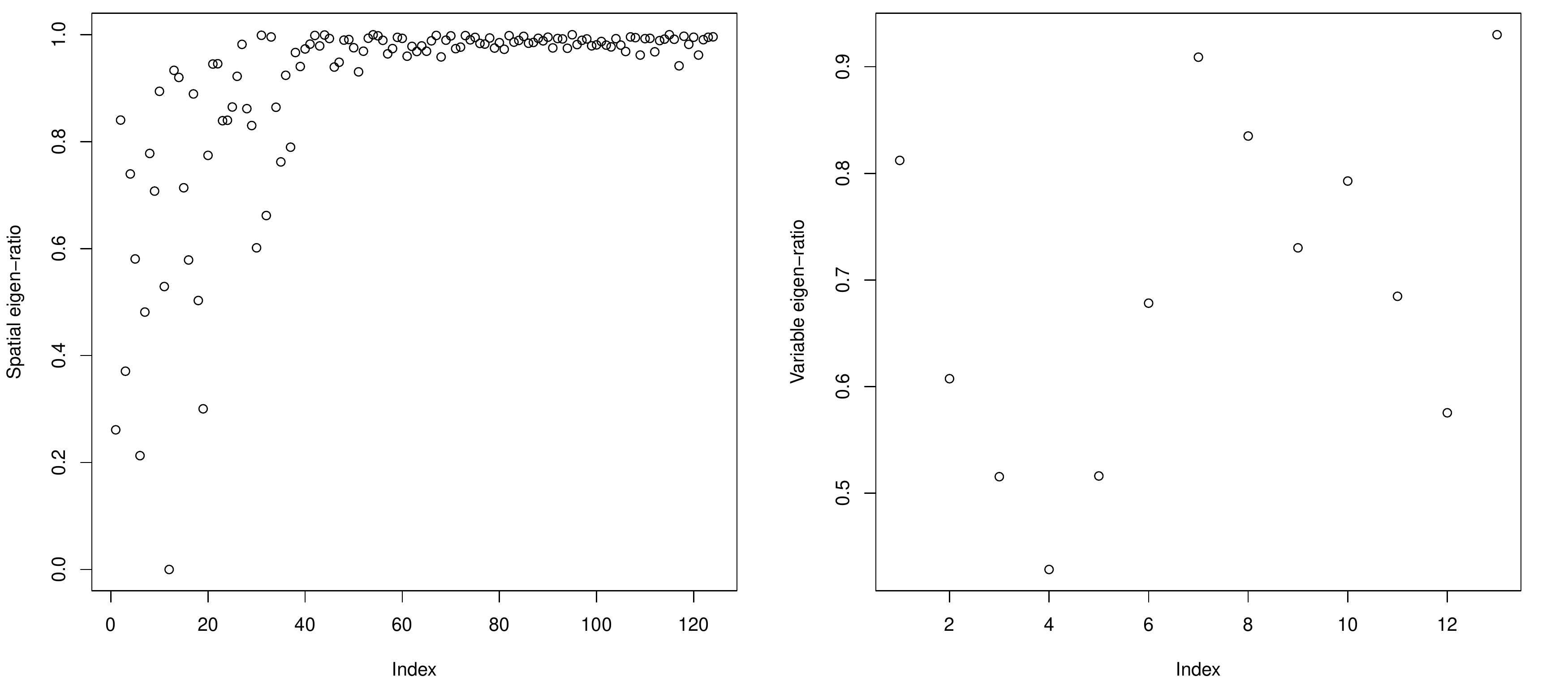}
         \label{fig:eigen-ratio-plots}
      \end{subfigure}
      \vspace{-3ex}
      \caption{Latent dimensions}
      \label{fig:latent-dimension}
\end{figure}

For kriging in space, we compare the performance kriging with kernel smoothing and prediction with functional $\bA(\bs)$.
We randomly pick a portion of locations (10\%, 25\% and 33\% of all locations) and eliminate the measurements of all variables over the whole time span.
Then, we produce the estimates for all variables of each timestamp. We repeat the procedure for 100 times.
Table \ref{table:compare_number_training_stations} report the average prediction RMSEs for all timestamps and 10 random sets of missing locations.
It shows that the prediction by the proposed prediction with functional estimation $\bA(\bs)$ performs much better than kriging with kernel smoothing. 

We also compare the sample spatial-temporal covariance of the real data $\by_t$, and estimated spatial-temporal covariance of $\hat\xi_t$ with the reduced rank structure in the proposed model with time lag $t_1-t_2=0,1,2$ and two randomly selected location $\bs_1$ and $\bs_2$ in Figure \ref{fig:cov_xi_space.0}. In this data, we only observe the $\by_t$, which include the noise $\bepsilon_t$.
It shows that the co-variance structure of the real data is largely preserved with the reduced rank approximation even when the dimension reduction is significant. 

\begin{table}[htpb!]
\centering
\caption{Means of standard errors of MSPE by the proposed method for CCDS dataset. Results are based 100 simulations.}
\label{table:compare_number_training_stations}
\begin{tabular}{c|ccc}
\hline
\% Testing Sites & 33\% & 25\% & 10\% \\
\# Training / Testing Sites & 84 / 41 & 94 / 31 & 113 / 12 \\ \hline
Kriging with kernel smoothing & $ 0.580 \; (0.028) $  & $ 0.578 \; (0.027) $ & $ 0.572 \; (0.035) $ \\
Prediction with functional $\bA(\bs)$ & $ 0.314 \; (0.011) $  & $ 0.312 \; (0.009) $ & $ 0.309 \; (0.013) $ \\ \hline
\end{tabular}
\end{table}

\begin{figure}[ht!]
     \centering
     \begin{subfigure}[b]{0.3\textwidth}
         \centering
         \caption*{$\tilde{\bSigma}_{y,|t_1-t_2|}(\bs_1,\bs_2)$, $t_1-t_2=0$}
         \includegraphics[width=.95\textwidth]{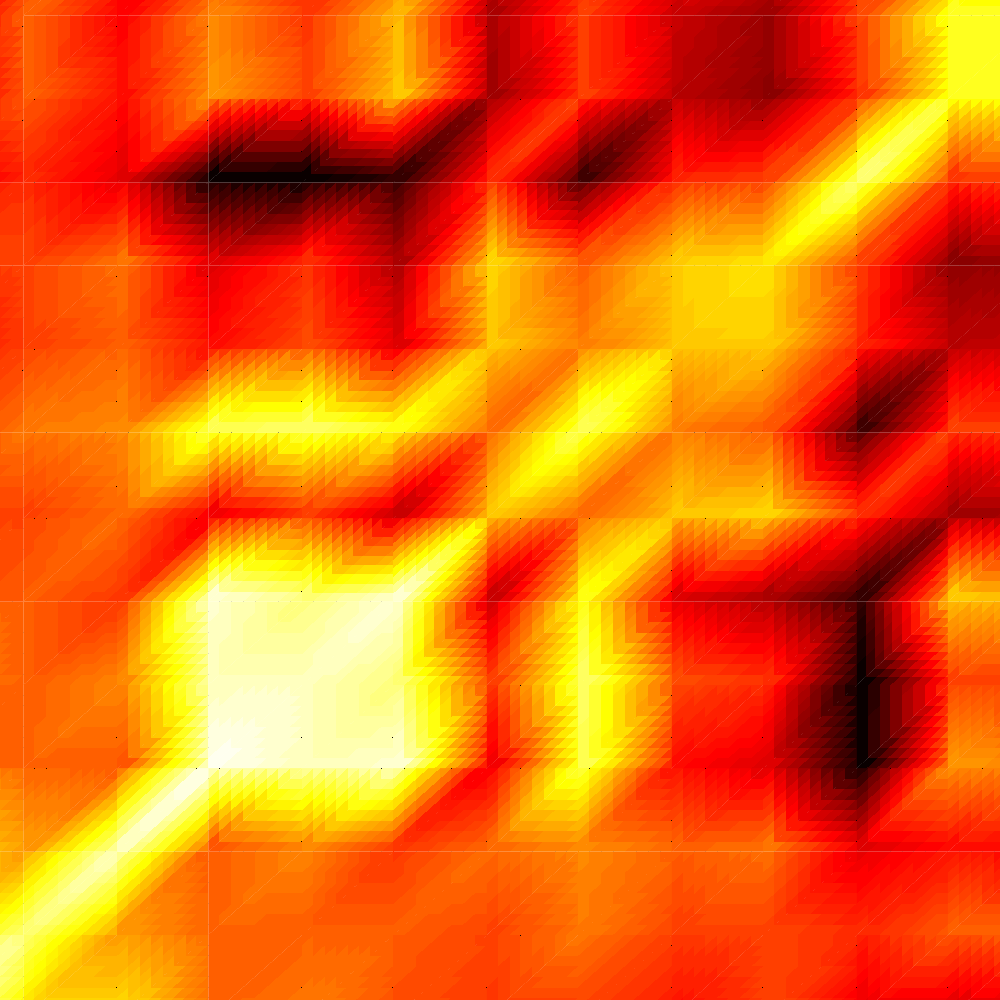}
     \end{subfigure}
     \hfill
     \begin{subfigure}[b]{0.3\textwidth}
         \centering
         \caption*{$\tilde{\bSigma}_{y,|t_1-t_2|}(\bs_1,\bs_2)$, $t_1-t_2=1$}
         \includegraphics[width=.95\textwidth]{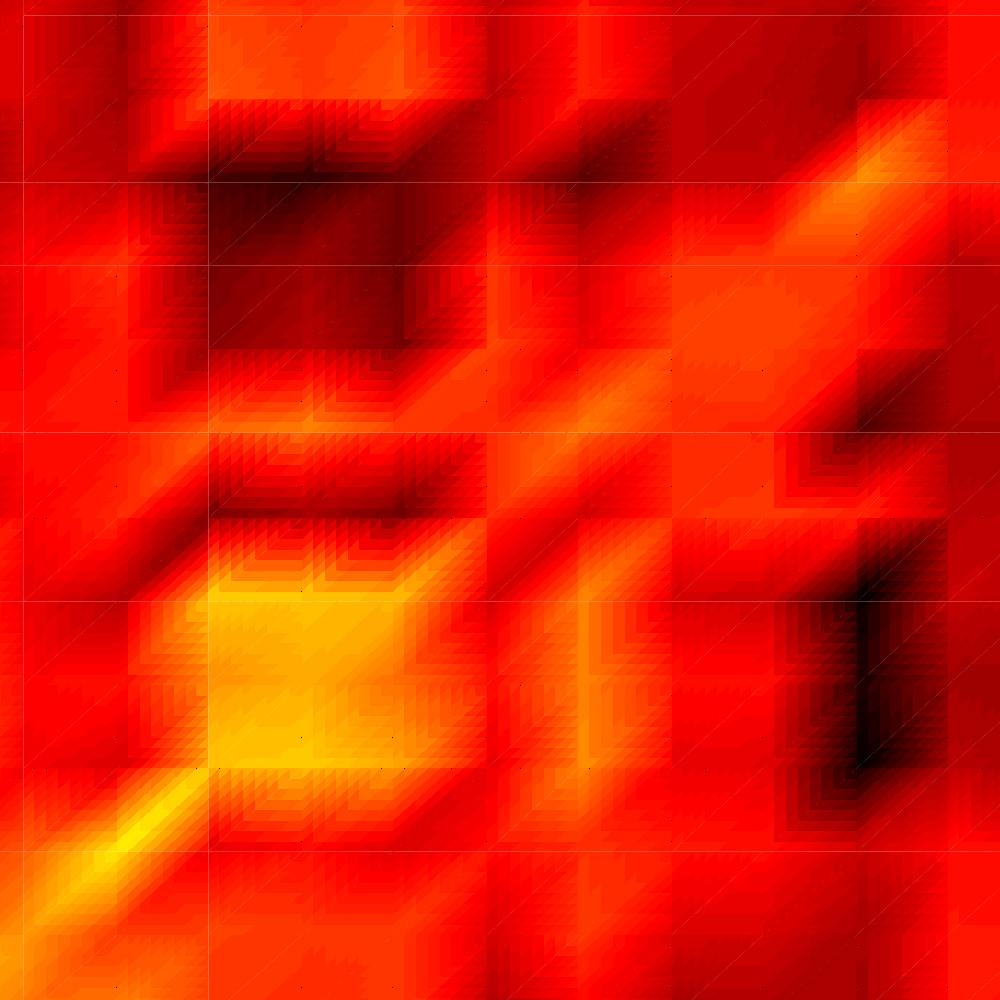}
     \end{subfigure}
     \hfill
     \begin{subfigure}[b]{0.3\textwidth}
         \centering
         \caption*{$\tilde{\bSigma}_{y,|t_1-t_2|}(\bs_1,\bs_2)$, $t_1-t_2=2$}
         \includegraphics[width=.95\textwidth]{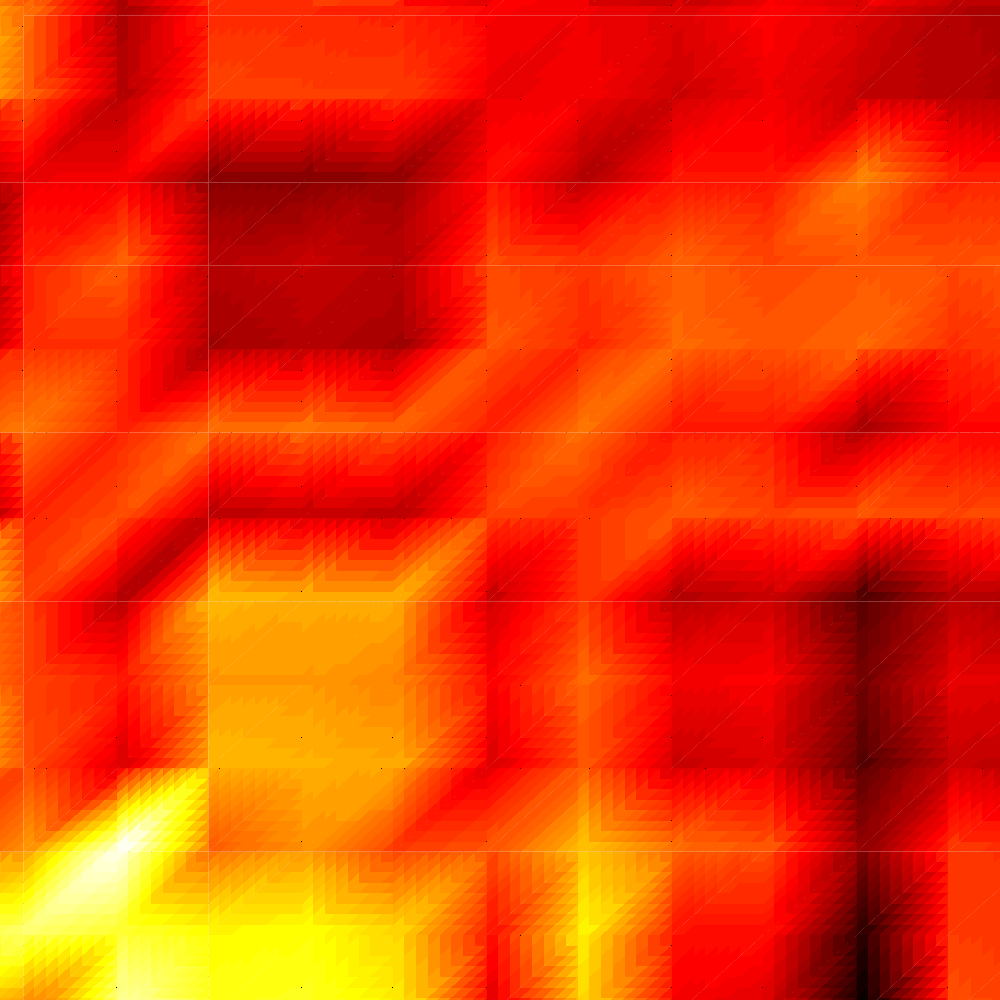}
     \end{subfigure}
     
     \begin{subfigure}[b]{0.3\textwidth}
         \centering
         \caption*{$\hat{\bSigma}_{\xi,|t_1-t_2|}(\bs_1,\bs_2)$, $t_1-t_2=0$}
         \includegraphics[width=.95\textwidth]{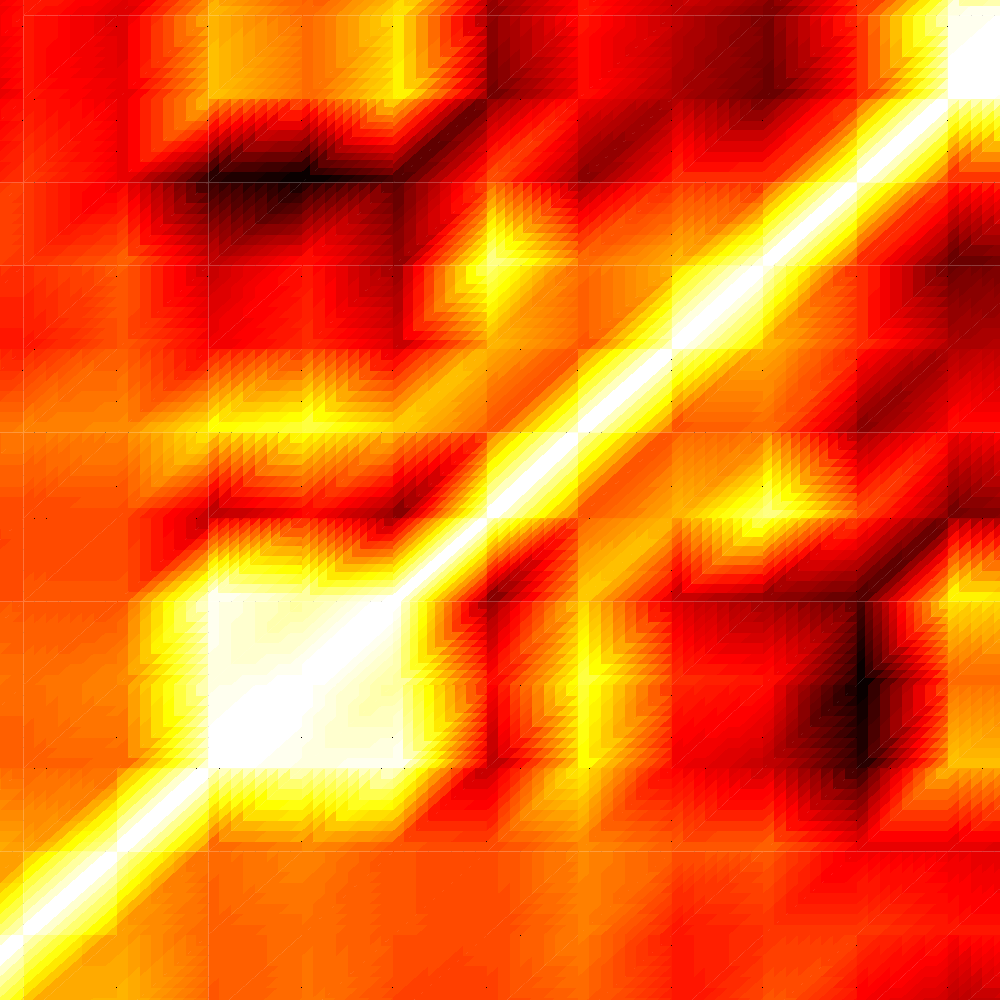}
     \end{subfigure}
     \hfill
     \begin{subfigure}[b]{0.3\textwidth}
         \centering
         \caption*{$\hat{\bSigma}_{\xi,|t_1-t_2|}(\bs_1,\bs_2)$, $t_1-t_2=1$}
         \includegraphics[width=.95\textwidth]{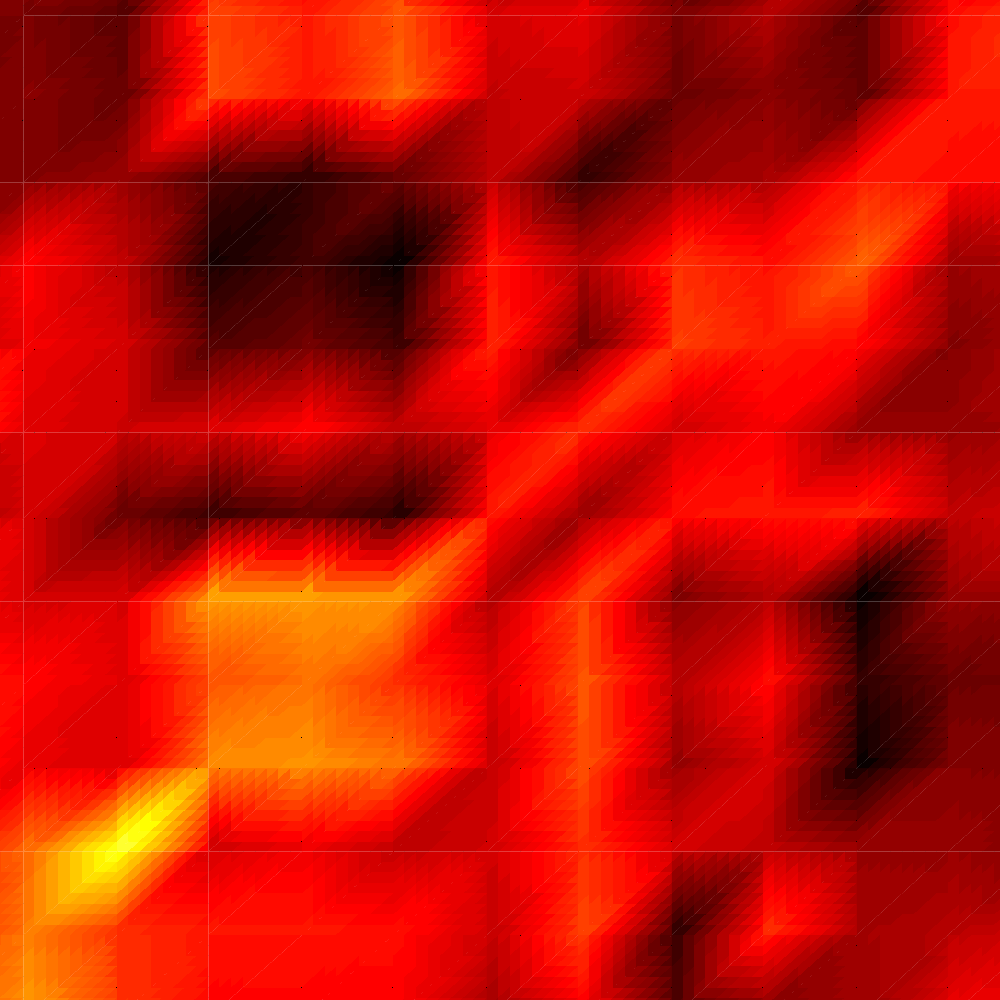}
     \end{subfigure}
     \hfill
     \begin{subfigure}[b]{0.3\textwidth}
         \centering
         \caption*{$\hat{\bSigma}_{\xi,|t_1-t_2|}(\bs_1,\bs_2)$, $t_1-t_2=2$}
         \includegraphics[width=.95\textwidth]{./figs/cov_xi_space_1_1}
     \end{subfigure}
\caption{Heat map of the spacial-temporal covariances of $\by$ and $\hat\bxi$ on the testing set.
Top three heat maps corresponds to the sample spacial-temporal covariance of $\by_t(\bs)$.
Bottom three heat maps corresponds to the estimated spacial-temporal covariance of $\hat\bxi_t(\bs)$ with the reduced rank structure. 
The time lags are chosen $t_1-t_2=0,1,2$. 
The testing sites is 10\%. $\bs_1$ and $\bs_2$ are randomly chosen in 12 test points.
The covariances based on the low rank model are very close to the sample covariances of the full data.}
\label{fig:cov_xi_space.0}
\end{figure}

For temporal forecasting, we are interested in forecasting values in year 2001 and 2002. We experiment with two different length of training data -- 5 and 9 years -- respectively. For each setting, we estimate the loading matrices and factor matrices using the training data and make 1-step and 2-step prediction. We move forward with one month for both training and testing data and repeat the process until we reach 2002-12. For example with 5 training years, we start with estimation with 5 years training data from 1996-01 to 2000-12 and make 1-step prediction on 2001-01 and 2-step prediction on 2001-02. Then we move forward with one month -- estimation with training data from 1996-02 to 2001-01 and prediction on the month 2001-02 and 2001-03. We repeat this process until the last estimation with 1998-11 to 2002-10 data and prediction on 2002-11 and 2002-12. So in total we have 23 predictions for 1-step and 2-step forecasting each for a given length of training set. With latent matrix time series, we predict each individual time series using \textit{auto.arima} and \textit{forecast} functions in the R \textit{forecast} package. This is feasible because the latent factor matrix is low dimensional. With original matrix time series of $125 \times 16$ dimension, the computational cost is much higher. Table \ref{table:forecasting} reports the mean and standard deviation of the mean squared prediction errors.
As shown by the results, temporal prediction is much harder than spatial prediction. 
\begin{table}[htpb!]
\centering
\caption{Means (standard deviation) of MSPE by the proposed method for CCDS dataset.}
\label{table:forecasting}
\begin{tabular}{c|c|c}
\hline
Training Years & 5 & 9 \\ \hline
1-step MSPE & $0.633 \, (0.181)$ & $ 0.574 \, (0.141)$  \\
2-step MSPE & $0.682 \, (0.225)$ & $0.623 \, (0.190)$ \\
Time (min) & $ 0.56 \, (0.04)$ & $1.55 \, (0.20)$ \\ \hline
\end{tabular}%
\end{table}

\newpage
\section{Summary} \label{sec:summ}

In this paper, we study the problem of large-scale multivariate spatial-temporal data analysis with a focus on dimension reduction and spatial/temporal forecasting. 
We propose a new class of multivariate spatial-temporal models that model spatial, temporal and multivariate dependencies simultaneously. 
This is made possible by an innovative combination of the multivariate factor analysis with the method of empirical orthogonal functions.
For estimation, we assembled the observations from discrete spatial locations as a time series of matrices whose rows and columns correspond to sampling sites and variables, respectively.
The matrix structure of observations is well preserved through the matrix factor model reformulation, while further incorporating the functional structure of the spatial process and dynamics of the latent matrix factor. 
We proposed methods of prediction over space and time based on the estimated latent structure. 
We established theoretical properties of the estimators and predictors. 
We validate the correctness and efficiency of our proposed method on both the synthetic and real application data sets.

For future work, we are interested in incorporating time-variant loading matrices to deal with possible structural changes. 
To improve the performance of spatial prediction, it is of great interest to investigate different ways to include spatial variograms. 
Since we use a two-step method to estimate the loading functions, possibly ways to estimate loading functions directly in one-step would also be an interesting direction for future research.

%
%

\clearpage
\bibliographystyle{\mybibsty}
\bibliography{\mybib}

%
%
\clearpage
\begin{appendices}

\section{Proofs}
\subsection{Factor loadings}

We start by defining some population covariance as

\begin{equation*}
\bOmega^A_{s_1 s_2, ij}=\frac{1}{T}\sum_{t=1}^T\Cov{ (\bA_1 \bX_t \bb_{i\cdot}, \bA_2\bX_t \bb_{j\cdot}) }, \qquad \bOmega^A_{s_1 e_2, ij}=\frac{1}{T}\sum_{t=1}^T\Cov{ (\be_{1t,\cdot i}, \be_{2t,\cdot j}) },
\end{equation*}

\begin{equation*}
\bOmega^A_{s_1 e_2, ij}=\frac{1}{T}\sum_{t=1}^T\Cov{( \bA_1 \bX_t \bb_{i\cdot}, \be_{2t,\cdot j}) }, \qquad \bOmega^A_{e_1 s_2, ij}=\frac{1}{T}\sum_{t=1}^T\Cov{( \be_{1t,\cdot i}, \bA_2\bX_t \bb_{j\cdot}) },
\end{equation*}

\begin{equation*}
\bOmega^B_{s_1 s_2, ij}=\frac{1}{T}\sum_{t=1}^T\Cov{( \ba_{2, j\cdot}\bX_t\bB', \ba_{1, i\cdot}\bX_t\bB') }, \qquad \bOmega^B_{e_1 e_2, ij}=\frac{1}{T}\sum_{t=1}^T\Cov{ (\be_{1t,i\cdot}, \be_{2t,j\cdot} )},
\end{equation*}

\begin{equation*}
\bOmega^B_{s_1 e_2, ij}=\frac{1}{T}\sum_{t=1}^T\Cov{( \ba_{1, i\cdot}\bX_t\bB',  \be_{2t,j\cdot}) }, \qquad \bOmega^B_{e_1 s_2, ij}=\frac{1}{T}\sum_{t=1}^T\Cov{ (\be_{1t,i\cdot}, \ba_{2, j\cdot}\bX_t^*\bB' )},
\end{equation*}

and their sample versions

\begin{equation*}
\widehat{\bOmega}^A_{s_1 s_2, ij}=\frac{1}{T}\sum_{t=1}^T\bA_1 \bX_t \bb_{i\cdot}\left(\bA_2\bX_t\bb_{j\cdot}\right)', \qquad \widehat{\bOmega}^A_{e_1 e_2, ij}=\frac{1}{T}\sum_{t=1}^T\be_{t,\cdot i}\be'_{t,\cdot j},
\end{equation*}

\begin{equation*}
\widehat{\bOmega}^A_{s_1 e_2, ij}=\frac{1}{T}\sum_{t=1}^T\bA_1 \bX_t \bb_{i\cdot}\be'_{t,\cdot j}, \qquad \widehat{\bOmega}^A_{e_1 s_2, ij}=\frac{1}{T}\sum_{t=1}^T\be_{t,\cdot i}\left(\bA_2\bX_t\bb_{j\cdot}\right)',
\end{equation*}

\begin{equation*}
\widehat{\bOmega}^B_{s_1 s_2, ij}=\frac{1}{T}\sum_{t=1}^T\left(\ba_{1, i\cdot}\bX_t\bB'\right)'\ba_{2, j\cdot}\bX_t\bB', \qquad \widehat{\bOmega}^B_{e_1 e_2, ij}=\frac{1}{T}\sum_{t=1}^T\be'_{1t,i\cdot}\be_{2t,j\cdot},
\end{equation*}

\begin{equation*}
\widehat{\bOmega}^B_{s_1 e_2, ij}=\frac{1}{T}\sum_{t=1}^T\left(\ba_{1, i\cdot}\bX_t\bB'\right)'\be_{2t,j\cdot}, \qquad \widehat{\bOmega}^B_{e_1 e_2, ij}=\frac{1}{T}\sum_{t=1}^T\be'_{1t,i\cdot}\ba_{2, j\cdot}\bX_t\bB'.
\end{equation*}

\begin{lemma}  \label{lemma:entrywise_conv_rate_cov_vecXt}
Let $X_{t, ij}$ denote the $ij$-th entry of $\bX_t$. Under Condition \ref{cond:vecXt_alpha_mixing} and \ref{cond:Xt_cov_fullrank_bounded}, for any $i, k = 1, \ldots, d$ and $j, l = 1, \cdots, r$, we have
\begin{equation}
\left|\frac{1}{T} \sum_{t=1}^{T} \left( X_{t, ij} X_{t,kl} - Cov(X_{t, ij} X_{t,kl}) \right) \right| = O_p(T^{-1/2}).
\end{equation}
\end{lemma}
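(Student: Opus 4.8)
The plan is to recognize the quantity inside the absolute value as a centered sample mean of the scalar sequence $W_t := X_{t,ij}X_{t,kl}$, and to control it by a direct second-moment bound followed by Chebyshev's inequality. First I would observe that $W_t$ is a measurable function of $\vect{\bX_t}$ at the single time $t$, so the $\alpha$-mixing coefficients of $\{W_t\}$ are dominated by those of $\{\vect{\bX_t}\}$ from Condition \ref{cond:vecXt_alpha_mixing}; in particular $\{W_t\}$ inherits the summability $\sum_{h\ge 1}\alpha(h)^{1-2/\gamma} < \infty$. Next, using Condition \ref{cond:Xt_cov_fullrank_bounded} together with the Cauchy--Schwarz inequality, I would bound the $\gamma$-th moment uniformly in $t$ via $\E{\abs{W_t}^{\gamma}} = \E{\abs{X_{t,ij}}^{\gamma}\abs{X_{t,kl}}^{\gamma}} \le (\E{\abs{X_{t,ij}}^{2\gamma}})^{1/2}(\E{\abs{X_{t,kl}}^{2\gamma}})^{1/2} \le C$; this is exactly why Condition \ref{cond:Xt_cov_fullrank_bounded} is stated with the exponent $2\gamma$. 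Writing the centering term as $\E{W_t}$ (which equals $\Cov{X_{t,ij},X_{t,kl}}$ under the mean-zero factor convention), the target reduces to showing $\frac{1}{T}\sum_{t=1}^T (W_t - \E{W_t}) = \Op{T^{-1/2}}$.

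The main step is to bound the autocovariances of $\{W_t\}$ by Davydov's covariance inequality. Taking both integrability exponents equal to $\gamma$, the inequality yields a constant $C$ with $\abs{\Cov{W_t, W_{t+h}}} \le C\,\alpha(h)^{1-2/\gamma}\,(\E{\abs{W_t}^{\gamma}})^{1/\gamma}(\E{\abs{W_{t+h}}^{\gamma}})^{1/\gamma} \le C'\,\alpha(h)^{1-2/\gamma}$, where the conjugacy relation $2/\gamma + (1-2/\gamma) = 1$ requires precisely $\gamma > 2$ as assumed in Condition \ref{cond:vecXt_alpha_mixing}, and the uniform moment bound from the previous step absorbs the norm factors. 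I would then expand the variance of the sample mean, group terms by lag $h = s - t$, and bound $\Var{\frac{1}{T}\sum_{t=1}^T W_t} = \frac{1}{T^2}\sum_{t,s}\Cov{W_t,W_s} \le \frac{1}{T}\big( C_0 + 2C'\sum_{h\ge 1}\alpha(h)^{1-2/\gamma}\big)$, where $C_0$ dominates the lag-zero variances (finite since $\gamma>2$ gives $\E{W_t^2} \le (\E{\abs{W_t}^{\gamma}})^{2/\gamma} \le C$). The bracketed quantity is finite by the inherited summability, so $\Var{\frac{1}{T}\sum_t W_t} = O(T^{-1})$; note this argument does not require strict stationarity, only the uniform moment and mixing bounds.

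Finally, Chebyshev's inequality converts the variance bound into the stochastic-order statement: for any $\eta > 0$ one chooses $M$ with $P(\abs{\frac{1}{T}\sum_t(W_t - \E{W_t})} > M T^{-1/2}) \le T\,\Var{\frac{1}{T}\sum_t W_t}/M^2 \le \eta$, giving the claimed $\Op{T^{-1/2}}$ rate. The hard part is the covariance-inequality step: one must verify that the moment exponent $2\gamma$ in Condition \ref{cond:Xt_cov_fullrank_bounded} and the mixing-decay exponent $1-2/\gamma$ in Condition \ref{cond:vecXt_alpha_mixing} are matched so that Davydov's inequality applies with finite constants and the lag sum converges; everything else is a routine computation. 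Since $i,k$ range over the fixed set $[d]$ and $j,l$ over $[r]$, all constants can be taken uniform over the finitely many entries, so no union-bound inflation of the rate is incurred.
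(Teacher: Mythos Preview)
Your proposal is correct and follows the standard route for establishing $\sqrt{T}$-rates under $\alpha$-mixing: Davydov's covariance inequality together with the $2\gamma$-moment bound from Condition~\ref{cond:Xt_cov_fullrank_bounded} yields summable autocovariances, hence an $O(T^{-1})$ variance for the centered sample mean, and Chebyshev finishes. The paper itself gives no proof beyond citing \cite{wang2019factor}; your argument is precisely the standard one underlying that reference, so you have simply written out the details the paper defers.
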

\begin{proof}
This lemma can be derived directly by following \cite{wang2019factor}. 
\end{proof}
\begin{lemma}  \label{lemma:4_conv_rates}
Under Conditions 1-6, it holds that
\begin{eqnarray}
\sum_{i=1}^{p} \sum_{j=1}^{p} \norm{ \widehat{\bOmega}^A_{s_1 s_2, ij} - \bOmega^A_{s_1 s_2, ij} }^2_2 & = & O_p(n_1n_2 p^{2-2\gamma} T^{-1}),   \label{Lemma2-sig-converg},\\
\sum_{i=1}^{p} \sum_{j=1}^{p} \norm{ \widehat{\bOmega}^A_{s_1 e_2, ij} - \bOmega^A_{s_1 e_2, ij} }^2_2 & = & O_p(n_1^{2} p^{2-\gamma} T^{-1}),  \label{Lemma2-sig-err-converg}, \\
\sum_{i=1}^{p} \sum_{j=1}^{p} \norm{ \widehat{\bOmega}^A_{e_1s_2, ij} - \bOmega^A_{e_1s_2, ij} }^2_2 & = & O_p(n_2^{2} p^{2-\gamma} T^{-1}),  \label{Lemma2-err-sig-converg}, \\
\sum_{i=1}^{p} \sum_{j=1}^{p} \norm{ \widehat{\bOmega}^A_{e_1 e_2, ij} - \bOmega^A_{e_1 e_2, ij} }^2_2 & = & O_p(n_1 n_2 p^2 T^{-1}).  \label{Lemma2-err-converg}\\
\sum_{i=1}^{m} \sum_{j=1}^{m} \norm{ \widehat{\bOmega}^B_{s_1 s_2, ij} - \bOmega^B_{s_1 s_2, ij} }^2_2 & = & O_p(m^{2} p^{2-2\gamma} T^{-1}),   \label{Lemma2-sig-converg1},\\
\sum_{i=1}^{m} \sum_{j=1}^{m} \norm{ \widehat{\bOmega}^B_{s_1 e_2, ij} - \bOmega^B_{s_1 e_2, ij} }^2_2 & = & O_p(m^{2} p^{2-\gamma} T^{-1}),  \label{Lemma2-sig-err-converg1}, \\
\sum_{i=1}^{m} \sum_{j=1}^{m} \norm{ \widehat{\bOmega}^B_{e_1s_2, ij} - \bOmega^B_{e_1s_2, ij} }^2_2 & = & O_p(m^{2} p^{2-\gamma} T^{-1}),  \label{Lemma2-err-sig-converg1}, \\
\sum_{i=1}^{m} \sum_{j=1}^{m} \norm{ \widehat{\bOmega}^B_{e_1 e_2, ij} - \bOmega^B_{e_1 e_2, ij} }^2_2 & = & O_p(m^2 p^2 T^{-1}).  \label{Lemma2-err-converg1}
\end{eqnarray}
\end{lemma}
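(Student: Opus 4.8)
The plan is to treat the eight bounds in three groups according to the pair of quantities being cross-correlated: the two signal--signal bounds, the four mixed signal--noise bounds, and the two noise--noise bounds. In every case the deviation matrix factorizes as an outer loading factor times a fixed-dimension stochastic middle factor times a second outer factor, so one strategy serves throughout: bound the spectral norm of the deviation by the product of the spectral norms of the three pieces (submultiplicativity, together with $\norm{\cdot}_2\le\norm{\cdot}_F$ on whichever factor has a diverging dimension), control the middle factor either by Lemma~\ref{lemma:entrywise_conv_rate_cov_vecXt} or by a companion time-average bound, and then perform the double summation over the index pair. The loading magnitudes I will use are $\norm{\bA_l}_2^2\le\norm{\bA_l}_F^2=O(n_l)$ (equivalently $O(m)$), which holds because each entry $a_j(\bs_i)$ is bounded by the H\"older constant $c$ and $d$ is fixed, together with $\norm{\bB}_2^2\asymp p^{1-\gamma}$ from Condition~\ref{cond:B_factor_strength} and its consequence $\norm{\bB}_F^2\le r\norm{\bB}_2^2\asymp p^{1-\gamma}$ since $r$ is fixed.

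For the signal--signal term I would write $\widehat{\bOmega}^A_{s_1s_2,ij}-\bOmega^A_{s_1s_2,ij}=\bA_1\,G_{ij}\,\bA_2'$, where $G_{ij}=\frac1T\sum_{t=1}^T(\bX_t\bb_{i\cdot}\bb_{j\cdot}'\bX_t'-\Cov{\bX_t\bb_{i\cdot},\bX_t\bb_{j\cdot}})$ is a $d\times d$ matrix. Each entry of $G_{ij}$ is a fixed linear combination (with weights that are products of entries of $\bb_{i\cdot}$ and $\bb_{j\cdot}$) of the scalar averages bounded in Lemma~\ref{lemma:entrywise_conv_rate_cov_vecXt}; since $d,r$ are fixed this yields $\norm{G_{ij}}_2=O_p(T^{-1/2})\norm{\bb_{i\cdot}}_2\norm{\bb_{j\cdot}}_2$. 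Then $\norm{\bA_1 G_{ij}\bA_2'}_2^2\le\norm{\bA_1}_2^2\norm{\bA_2}_2^2\norm{G_{ij}}_2^2=O_p(n_1n_2T^{-1})\norm{\bb_{i\cdot}}_2^2\norm{\bb_{j\cdot}}_2^2$, and summing over $i,j$ produces $\norm{\bB}_F^4\asymp p^{2-2\gamma}$, i.e.\ the rate $n_1n_2p^{2-2\gamma}T^{-1}$. The $\bB$-side signal--signal bound is identical after interchanging the roles of $\bA$ and $\bB$: the outer factor becomes $\bB$ with $\norm{\bB}_2^4\asymp p^{2-2\gamma}$, the middle factor carries the bounded location loadings $\ba_{1,i\cdot},\ba_{2,j\cdot}$, and the double sum over the $m$ locations produces $\norm{\bA_1}_F^2\norm{\bA_2}_F^2=O(m^2)$.

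For the mixed and the noise--noise terms all population quantities vanish --- the signal--noise cross-covariances because the nugget is uncorrelated with the signal, and the noise--noise cross-covariances because $\calS_1$ and $\calS_2$ are disjoint and the nugget is spatially white --- so only the sample averages need bounding, and Lemma~\ref{lemma:entrywise_conv_rate_cov_vecXt} no longer applies directly. Here the cleanest device is to bound expected squared Frobenius norms and invoke Markov's inequality. For the signal--noise middle factor $\frac1T\sum_t\bX_t\bb_{i\cdot}\be_{2t,\cdot j}'$ (of size $d\times n_2$) each entry is a mean-zero time average of a signal-times-noise product whose second moment is of order $T^{-1}\norm{\bb_{i\cdot}}_2^2$ uniformly; summing its $d\,n_2$ entries and multiplying by $\norm{\bA_1}_2^2=O(n_1)$ gives $O_p(n_1n_2T^{-1}\norm{\bb_{i\cdot}}_2^2)$ per pair, whose double sum equals $(\sum_j 1)(\sum_i\norm{\bb_{i\cdot}}_2^2)=p\,\norm{\bB}_F^2\asymp p^{2-\gamma}$, so the rate is $n_1n_2p^{2-\gamma}T^{-1}\asymp n_1^2p^{2-\gamma}T^{-1}$ under $n_1\asymp n_2$. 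The noise--noise middle factor $\frac1T\sum_t\be_{1t,\cdot i}\be_{2t,\cdot j}'$ has entries that are mean-zero averages of products of spatially independent nuggets, each of second moment $O(T^{-1})$, so its expected squared Frobenius norm is $O(n_1n_2T^{-1})$, and summing over the $p^2$ variable pairs gives $n_1n_2p^2T^{-1}$. The two $\bB$-side analogues follow verbatim with the loading bookkeeping interchanged, producing $m^2p^{2-\gamma}T^{-1}$ and $m^2p^2T^{-1}$ respectively.

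The main obstacle I anticipate is exactly these companion time-average bounds for the noise-containing factors, because the model permits the nugget $\bepsilon_t(\bs)$ to be temporally correlated. Showing that the relevant entries have second moment $O(T^{-1})$ uniformly in the location and variable indices requires weak temporal dependence and bounded moments for the \emph{joint} process $\{(\vect{\bX_t},\bE_t)\}$, rather than merely the $\alpha$-mixing of $\vect{\bX_t}$ in Condition~\ref{cond:vecXt_alpha_mixing} and the variance bound of Condition~\ref{cond:eigenval_cov_Et_bounded}. I would therefore first record an auxiliary lemma, proved exactly as Lemma~\ref{lemma:entrywise_conv_rate_cov_vecXt} (following \cite{wang2019factor}) but applied to the signal-times-noise and noise-times-noise products, that delivers the $O(T^{-1})$ second-moment bound with a constant uniform over indices; this uniformity is what lets the $n_1n_2$ (respectively $p^2$, respectively $m^2$) entries accumulate only linearly and so preserves the stated rates. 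Once that bound is available, the remaining steps are the routine submultiplicative and double-summation bookkeeping indicated above, and the eight estimates follow.
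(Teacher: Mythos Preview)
Your proposal is correct and follows the standard factor-model technique that the paper is invoking: the paper's own proof of this lemma consists of the single sentence ``This lemma can be derived directly by following \cite{wang2019factor},'' and your sketch is precisely the detailed version of that argument --- factoring each deviation as loadings times a fixed-dimension stochastic core, controlling the core via Lemma~\ref{lemma:entrywise_conv_rate_cov_vecXt} (or its noise-containing analogue), and then summing the loading norms. Your identification of the one genuine subtlety --- that the $O(T^{-1})$ second-moment bound for the noise-containing time averages requires weak temporal dependence of the \emph{joint} process rather than just the $\alpha$-mixing of $\vect{\bX_t}$ in Condition~\ref{cond:vecXt_alpha_mixing} --- is apt and is exactly what the cited reference supplies; the paper here is silently inheriting that auxiliary assumption.
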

\begin{proof}
This lemma can be derived directly by following \cite{wang2019factor}. 
\end{proof}

\begin{lemma}  \label{lemma:conv_rate_covYt}
Under Conditions 1-6, it holds that
\begin{equation}
\sum_{i=1}^{p} \sum_{j=1}^{p} \norm{ \widehat{\bOmega}_{A_l,ij} - \bOmega_{A,ij} }^2_2 = = O_p \left( n_1^{2} p^{2-\gamma} T^{-1} + n_2^{2} p^{2-\gamma} T^{-1} + n_1 n_2 p^2 T^{-1} \right) .
\end{equation}
\end{lemma}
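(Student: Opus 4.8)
The plan is to reduce this bound to the four component rates already established in Lemma~\ref{lemma:4_conv_rates} by splitting the sample cross-covariance along the signal-plus-noise structure of the data. First I would write the relevant columns of the two partitioned panels, using the identifiable model (\ref{eqn:stvp_mat_div_1}), as $\bY_{1t, \cdot i} = \bA_1 \bX_t \bb_{i\cdot} + \be_{1t, \cdot i}$ and $\bY_{2t, \cdot j} = \bA_2 \bX^*_t \bb_{j\cdot} + \be_{2t, \cdot j}$. Substituting these into $\widehat{\bOmega}_{A,ij} = \frac{1}{T}\sum_{t=1}^{T} \bY_{1t,\cdot i} \bY'_{2t,\cdot j}$ and expanding the outer product gives the four-term decomposition
\[
\widehat{\bOmega}_{A,ij} = \widehat{\bOmega}^A_{s_1 s_2, ij} + \widehat{\bOmega}^A_{s_1 e_2, ij} + \widehat{\bOmega}^A_{e_1 s_2, ij} + \widehat{\bOmega}^A_{e_1 e_2, ij},
\]
where the four summands are exactly the sample quantities defined at the start of this section (signal-signal, signal-noise, noise-signal, and noise-noise cross-products).

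Next I would pin down the population target. Because the nugget process is uncorrelated with the latent signal and is spatially white across the disjoint location sets $\calS_1$ and $\calS_2$ (Section~\ref{sec:model}), every population cross-covariance carrying a noise factor vanishes, i.e. $\bOmega^A_{s_1 e_2, ij} = \bOmega^A_{e_1 s_2, ij} = \bOmega^A_{e_1 e_2, ij} = \mathbf{0}$, so that $\bOmega_{A,ij} = \bOmega^A_{s_1 s_2, ij}$. Hence the estimation error decomposes term by term as $\widehat{\bOmega}_{A,ij} - \bOmega_{A,ij} = \sum (\widehat{\bOmega}^A_{\bullet, ij} - \bOmega^A_{\bullet, ij})$ over the four index pairs $\bullet \in \{ s_1 s_2, s_1 e_2, e_1 s_2, e_1 e_2 \}$.

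Then I would apply the elementary bound $\norm{\sum_{k=1}^{4} \bM_k}^2_2 \le 4 \sum_{k=1}^{4} \norm{\bM_k}^2_2$ to this sum, take $\sum_{i,j=1}^{p}$, and invoke the four rates (\ref{Lemma2-sig-converg})--(\ref{Lemma2-err-converg}) of Lemma~\ref{lemma:4_conv_rates}. This yields
\[
\sum_{i=1}^{p}\sum_{j=1}^{p} \norm{\widehat{\bOmega}_{A,ij} - \bOmega_{A,ij}}^2_2 = O_p\!\left( n_1 n_2 p^{2-2\gamma} T^{-1} + n_1^2 p^{2-\gamma} T^{-1} + n_2^2 p^{2-\gamma} T^{-1} + n_1 n_2 p^2 T^{-1} \right).
\]
The last step is a dominance check: under Condition~\ref{cond:B_factor_strength} we have $\gamma \ge 0$, so $p^{2-2\gamma} \le p^2$ and the first term is absorbed into $n_1 n_2 p^2 T^{-1}$, producing the stated rate. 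I do not expect a genuine obstacle here: the heavy probabilistic estimates are encapsulated in Lemma~\ref{lemma:4_conv_rates}, and the only delicate point is the vanishing of the population cross terms, which is precisely what collapses $\bOmega_{A,ij}$ to the pure signal covariance and relies on the spatial-whiteness and signal-noise independence assumptions of the model.
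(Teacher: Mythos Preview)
Your proposal is correct and follows essentially the same route as the paper: decompose $\widehat{\bOmega}_{A,ij}$ into the four signal/noise cross-products, apply the inequality $\norm{\sum_{k=1}^4 \bM_k}_2^2 \le 4\sum_{k=1}^4 \norm{\bM_k}_2^2$, and invoke the four rates of Lemma~\ref{lemma:4_conv_rates}. Your write-up is in fact more explicit than the paper's on two points the paper leaves implicit, namely why the population cross terms $\bOmega^A_{s_1 e_2,ij}$, $\bOmega^A_{e_1 s_2,ij}$, $\bOmega^A_{e_1 e_2,ij}$ vanish and why the $n_1 n_2 p^{2-2\gamma} T^{-1}$ contribution is dominated.
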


\begin{proof}

\begin{align*}
\hat{\bOmega}_{A,ij} & = \frac{1}{T} \sum_{t=1}^{T} \bY_{1 t, \cdot i} \bY'_{2 t, \cdot j} \\
& = \frac{1}{T} \sum_{t=1}^{T} \left( \bA_1 \bX_t \bb_{i \cdot} + \be_{t, \cdot i} \right) \left( \bA_2 \bX_t \bb_{j \cdot} + \be_{t, \cdot j} \right)' \\
& = \hat{\bOmega}^A_{s_1 s_2, ij} + \hat{\bOmega}^A_{s_1 e_2, ij} + \hat{\bOmega}^A_{e_1 s_2, ij} + \hat{\bOmega}^A_{e_1 e_2, ij}.
\end{align*}

\begin{align*}
& \sum_{i=1}^{p} \sum_{j=1}^{p} \norm{\hat{\bOmega}_{A,ij} - \bOmega_{A,ij}}^2_2 \\
& \le 4 \sum_{i=1}^{p} \sum_{j=1}^{p} \left( \norm{ \hat{\bOmega}^A_{s_1s_2, ij} - \bOmega^A_{s_1s_2, ij} }^2_2 + \norm{\hat{\bOmega}^A_{s_1e_2, ij} - \bOmega^A_{s_1e_2, ij}}^2_2 + \norm{\hat{\bOmega}^A_{e_1s_2, ij} - \bOmega^A_{e_1s_2, ij}}^2_2 + \norm{\hat{\bOmega}^A_{e_1e_2, ij} - \bOmega^A_{e_1e_2, ij}}^2_2 \right) \\
& = O_p(n_1^{2} p^{2-\gamma} T^{-1} + n_2^{2} p^{2-\gamma} T^{-1} + n_1 n_2 p^2 T^{-1})
\end{align*}
\end{proof}

\begin{lemma}  \label{lemma:conv_rate_Mhat}
Under Conditions 1-6 and  $n_1 n_2 p^{-2+2\gamma}  T^{-1/2} = o(1)$, it holds that
\begin{equation}
\norm{ \widehat{\bM}_{A_l} - \bM_{A_l} }_2 = O_p \left( n^{2} p^{2-\gamma} T^{-1/2} \right),
\end{equation}
where $l=1,2$.
\end{lemma}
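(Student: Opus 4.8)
The plan is to reduce the spectral-norm error of $\hat{\bM}_{A_l}$ to the entrywise covariance error already controlled in Lemma \ref{lemma:conv_rate_covYt}, via the standard ``product rule'' perturbation bound used throughout factor-model analysis. I will treat $l=1$ in detail; the case $l=2$ is identical after replacing each $\hat{\bOmega}_{A,ij}\hat{\bOmega}_{A,ij}'$ by $\hat{\bOmega}_{A,ij}'\hat{\bOmega}_{A,ij}$. Setting $\bm{\Delta}_{ij} = \hat{\bOmega}_{A,ij} - \bOmega_{A,ij}$, the elementary identity $\hat{\bOmega}\hat{\bOmega}' - \bOmega\bOmega' = \bm{\Delta}\bOmega' + \bOmega\bm{\Delta}' + \bm{\Delta}\bm{\Delta}'$ applied termwise gives
\begin{equation*}
\hat{\bM}_{A_1} - \bM_{A_1} = \sum_{i=1}^p\sum_{j=1}^p\left(\bm{\Delta}_{ij}\bOmega_{A,ij}' + \bOmega_{A,ij}\bm{\Delta}_{ij}' + \bm{\Delta}_{ij}\bm{\Delta}_{ij}'\right).
\end{equation*}
By the triangle inequality for $\norm{\cdot}_2$, submultiplicativity, and the Cauchy--Schwarz inequality over the double index $(i,j)$,
\begin{equation*}
\norm{\hat{\bM}_{A_1}-\bM_{A_1}}_2 \le 2\left(\sum_{i,j}\norm{\bm{\Delta}_{ij}}_2^2\right)^{1/2}\left(\sum_{i,j}\norm{\bOmega_{A,ij}}_2^2\right)^{1/2} + \sum_{i,j}\norm{\bm{\Delta}_{ij}}_2^2 .
\end{equation*}
The whole problem therefore splits into a first-order ``cross'' term and a second-order term, each requiring only the two sums $\sum_{i,j}\norm{\bm{\Delta}_{ij}}_2^2$ and $\sum_{i,j}\norm{\bOmega_{A,ij}}_2^2$.

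The first sum is exactly Lemma \ref{lemma:conv_rate_covYt}: with $n_1\asymp n_2\asymp n$ it gives $\sum_{i,j}\norm{\bm{\Delta}_{ij}}_2^2 = \Op{n^2 p^{2-\gamma}T^{-1} + n^2 p^2 T^{-1}} = \Op{n^2 p^2 T^{-1}}$, hence $(\sum_{i,j}\norm{\bm{\Delta}_{ij}}_2^2)^{1/2} = \Op{npT^{-1/2}}$. For the second sum I would use the factorization $\bOmega_{A,ij} = \bA_1(\bs)\,\Cov{\bX_t\bb_{i\cdot},\bX^*_t\bb_{j\cdot}}\,\bA_2(\bs)'$ from (\ref{eqn:Omega_ij}), where $\bb_{i\cdot}$ is the $i$-th row of $\bB$. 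Submultiplicativity gives $\norm{\bOmega_{A,ij}}_2 \le \norm{\bA_1(\bs)}_2\norm{\bA_2(\bs)}_2\norm{\Cov{\bX_t\bb_{i\cdot},\bX^*_t\bb_{j\cdot}}}_2$. Since $d,r$ are fixed and $\bX_t$ has uniformly bounded covariances by Condition \ref{cond:Xt_cov_fullrank_bounded}, each $d\times d$ covariance block obeys $\norm{\Cov{\bX_t\bb_{i\cdot},\bX^*_t\bb_{j\cdot}}}_2 \le C\,\norm{\bb_{i\cdot}}_2\norm{\bb_{j\cdot}}_2$, so that $\sum_{i,j}\norm{\Cov{\bX_t\bb_{i\cdot},\bX^*_t\bb_{j\cdot}}}_2^2 \le C^2(\sum_i\norm{\bb_{i\cdot}}_2^2)^2 = C^2\norm{\bB}_F^4$, and $\norm{\bB}_F^2 \le r\norm{\bB}_2^2 \asymp p^{1-\gamma}$ by Condition \ref{cond:B_factor_strength} yields $\norm{\bB}_F^4 = O(p^{2-2\gamma})$. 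Combining with $\norm{\bA_l(\bs)}_2^2 \asymp n^{1-\delta}$ from Condition \ref{cond:A_factor_strength} gives $\sum_{i,j}\norm{\bOmega_{A,ij}}_2^2 = O(n^{2-2\delta}p^{2-2\gamma})$, i.e. $(\sum_{i,j}\norm{\bOmega_{A,ij}}_2^2)^{1/2} = O(n^{1-\delta}p^{1-\gamma})$.

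Substituting the two sums, the cross term is $\Op{npT^{-1/2}\cdot n^{1-\delta}p^{1-\gamma}} = \Op{n^{2-\delta}p^{2-\gamma}T^{-1/2}}$ and the second-order term is $\Op{n^2 p^2 T^{-1}}$. Since $\delta\in[0,1]$ gives $n^{2-\delta}\le n^2$, the cross term is $\Op{n^2 p^{2-\gamma}T^{-1/2}}$, matching the claim. It then remains to absorb the second-order term: writing $n^2 p^2 T^{-1} = (n^2 p^{2-\gamma}T^{-1/2})\cdot p^{\gamma}T^{-1/2}$, the rate condition consumed in this lemma ensures that the factor $p^{\gamma}T^{-1/2}$ is of smaller order, so that $\Op{n^2 p^2 T^{-1}}$ is negligible relative to the cross term and the overall bound is $\Op{n^2 p^{2-\gamma}T^{-1/2}}$.

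I expect the main obstacle to be the two bookkeeping steps that are easy to get wrong. First, establishing $\sum_{i,j}\norm{\bOmega_{A,ij}}_2^2 = O(n^{2-2\delta}p^{2-2\gamma})$ requires propagating all three factor-strength inputs simultaneously (the two spatial norms through $\bA_1,\bA_2$ and the variable norm through $\norm{\bB}_F$) together with the moment bound on $\bX_t$; a careless application of submultiplicativity loses the sharp exponents $n^{1-\delta}$ and $p^{1-\gamma}$. Second, verifying that the second-order term $n^2 p^2 T^{-1}$ is genuinely dominated is precisely where the lemma's rate condition must be invoked, and confirming this domination (hence pinning down the exact form of the required condition) is the delicate point; everything else reduces to routine triangle-inequality and Cauchy--Schwarz estimation.
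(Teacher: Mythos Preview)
Your proposal is correct and follows essentially the same route as the paper: the product-rule expansion $\hat{\bOmega}\hat{\bOmega}'-\bOmega\bOmega'$, Cauchy--Schwarz over $(i,j)$ to split into cross and second-order pieces, the bound $\sum_{i,j}\norm{\bOmega_{A,ij}}_2^2 = O(n^{2-2\delta}p^{2-2\gamma})$ via the factor-strength conditions, and Lemma \ref{lemma:conv_rate_covYt} for $\sum_{i,j}\norm{\bm{\Delta}_{ij}}_2^2$. Your flag on the rate condition is well placed: the dominance of the second-order term requires $p^{\gamma}T^{-1/2}=o(1)$, which is implied by the condition $n^{\delta}p^{\gamma}T^{-1/2}=o(1)$ used in the downstream theorems rather than by the condition written in the lemma statement itself.
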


\begin{proof}
Here we only consider $l=1$. If follows the same procedure for $l=2$.
\begin{align*}
\sum_{i=1}^{p} \sum_{j=1}^{p} \norm{\bOmega_{ij}}^2_2 & = \sum_{i=1}^{p} \sum_{j=1}^{p} \norm{ \bA_1 \frac{1}{T} \sum_{t=1}^{T} \Cov{\left( \bX_t \bb_{i \cdot}, \bX_t \bb_{j \cdot} \right)} \bA'_2}^2_2 \\
& \le \sum_{i=1}^{p} \sum_{j=1}^{p} \norm{\bA_1}^2_2 \norm{\bA_2}^2_2 \norm{\frac{1}{T} \sum_{t=1}^{T} \E{ \bX_t \bb_{i \cdot} \bb'_{j \cdot} \bX'_t  }}^2_2 \\
& \le \norm{\bA_1}^2_2 \norm{\bA_2}^2_2 \sum_{i=1}^{p} \sum_{j=1}^{p} \norm{\frac{1}{T} \sum_{t=1}^{T} \E{ \bX_t \otimes \bX_t} \vect{\bb_{i \cdot} \bb'_{j \cdot}} }^2_2 \\
& \le \norm{\bA_1}^2_2 \norm{\bA_2}^2_2 \sum_{i=1}^{p} \sum_{j=1}^{p} \norm{\frac{1}{T} \sum_{t=1}^{T} \E{ \bX_t \otimes \bX_t} }^2_2 \norm{ \vect{\bb_{i \cdot} \bb'_{j \cdot}} }^2_2 \\
& = \norm{\bA_1}^2_2 \norm{\bA_2}^2_2 \sum_{i=1}^{p} \sum_{j=1}^{p} \norm{\frac{1}{T} \sum_{t=1}^{T} \E{ \bX_t \otimes \bX_t} }^2_2 \norm{ \bb_{i \cdot} \bb'_{j \cdot} }^2_F \\
& \le \norm{\bA_1}^2_2 \norm{\bA_2}^2_2 \norm{\frac{1}{T} \sum_{t=1}^{T} \E{ \bX_t \otimes \bX_t} }^2_2 \sum_{i=1}^{p} \sum_{j=1}^{p} \norm{ \bb_{i \cdot}}^2_2 \norm{\bb'_{j \cdot}}^2_2 \\
& = \norm{\bA_1}^2_2 \norm{\bA_2}^2_2 \norm{\frac{1}{T} \sum_{t=1}^{T} \E{ \bX_t \otimes \bX_t} }^2_2 \norm{\bB}^4_F \\
& \le \norm{\bA_1}^2_2 \norm{\bA_2}^2_2 \norm{\frac{1}{T} \sum_{t=1}^{T} \E{ \bX_t \otimes \bX_t} }^2_2 \cdot r^2 \cdot \norm{\bB}^4_2  \\
& = O_p \left(n_1n_2p^{2-2\gamma} \right)
\end{align*}

Then,
\begin{eqnarray*}
\norm{\hat{\bM}_{A_1} - \bM_{A_1} }_2 & = & \norm{\sum_{i=1}^{p} \sum_{j=1}^{p} \left( \hat{\bOmega}_{A,ij}\hat{\bOmega}'_{A,ij} - \bOmega_{A,ij} \bOmega'_{A,ij} \right)}_2 \\
& \le & \sum_{i=1}^{p} \sum_{j=1}^{p} \norm{\hat{\bOmega}_{A,ij} - \bOmega_{A,ij}}^2_2 + 2 \sum_{i=1}^{p} \sum_{j=1}^{p} \norm{\bOmega_{A,ij}}_2 \norm{\hat{\bOmega}_{A,ij} - \bOmega_{A,ij}}_2 \\
& \le & \sum_{i=1}^{p} \sum_{j=1}^{p} \norm{\hat{\bOmega}_{A,ij} - \bOmega_{A,ij}}^2_2 + 2 \left( \sum_{i=1}^{p} \sum_{j=1}^{p} \norm{\bOmega_{A,ij}}^2_2 \cdot \sum_{i=1}^{p} \sum_{j=1}^{p} \norm{\hat{\bOmega}_{A,ij} - \bOmega_{A,ij}}^2_2 \right)^{1/2} \\
& = & O_p((n_1^{2-\delta} p^{2-\gamma} + n_2^{2-\delta} p^{2-\gamma} + n_1 n_2 p^2) T^{-1}) \\
&   & + O_p \left( ( (n_1^{3} n_2 p^{4-3\gamma} + n_1 n_2^{3} p^{4-3\gamma} + n_1^{2} n_2^{2} p^{4-2\gamma} ) T^{-1} )^{1/2} \right).
\end{eqnarray*}

\end{proof}

\begin{lemma}  \label{conv_rate_sigval_M_1}
Under Condition \ref{cond:eigenval_cov_Et_bounded} and \ref{cond:Xt_cov_fullrank_bounded}, we have
\begin{equation*}
\lambda_i(\bM_{A_l}) \asymp n_1 n_2 p^{2-2\gamma}, \qquad i = 1, 2, \ldots, d,	
\end{equation*}
and
\begin{equation*}
\lambda_i(\bM_B) \asymp m^{2} p^{2-2\gamma}, \qquad i = 1, 2, \ldots, r,	
\end{equation*}
where $\lambda_i(\bM_{A_l})$ and $\lambda_i(\bM_B)$ denotes the $i$-th largest singular value of $\bM_{A_l}$ and $\bM_B$.
\end{lemma}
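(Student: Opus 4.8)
The plan is to exploit that both $\bM_{A_l}$ and $\bM_B$ are positive semidefinite and pinched between their loadings. Expanding the products $\bOmega_{A,ij}\bOmega_{A,ij}'$ and $\bOmega_{B,ij}\bOmega_{B,ij}'$ in (\ref{eqn:M1})--(\ref{eqn:M_B}), and restoring the inner loading Gram that the displayed formulas suppress, one has
\[
\bM_{A_1}=\bA_1\Big(\sum_{i,j=1}^{p}\mathbf{G}_{ij}\,(\bA_2'\bA_2)\,\mathbf{G}_{ij}'\Big)\bA_1',\qquad
\bM_{B}=\bB\Big(\sum_{i,j=1}^{m}\mathbf{K}_{ij}\,(\bB'\bB)\,\mathbf{K}_{ij}'\Big)\bB',
\]
where $\mathbf{G}_{ij}=\Cov{\bX_t\bB_{i\cdot}',\bX^{*}_t\bB_{j\cdot}'}$ and $\mathbf{K}_{ij}=\Cov{\bX_t'\bA_{1,i\cdot}',\bX_t^{*\prime}\bA_{2,j\cdot}'}$. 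For any $\mathbf{L}$ of full column rank and any $\mathbf{W}\succeq\mathbf{0}$, the positive eigenvalues of $\mathbf{L}\mathbf{W}\mathbf{L}'$ coincide with those of $\mathbf{W}^{1/2}(\mathbf{L}'\mathbf{L})\mathbf{W}^{1/2}$, and the Loewner sandwich $\lambda_{\min}(\mathbf{L}'\mathbf{L})\,\mathbf{W}\preceq\mathbf{W}^{1/2}(\mathbf{L}'\mathbf{L})\mathbf{W}^{1/2}\preceq\lambda_{\max}(\mathbf{L}'\mathbf{L})\,\mathbf{W}$ traps $\lambda_i(\mathbf{L}\mathbf{W}\mathbf{L}')$ between $\lambda_{\min}(\mathbf{L}'\mathbf{L})\lambda_i(\mathbf{W})$ and $\lambda_{\max}(\mathbf{L}'\mathbf{L})\lambda_i(\mathbf{W})$.

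Applying this first with $\mathbf{L}=\bA_1$ (resp.\ $\bB$) and once more to peel off the inner Gram $\bA_2'\bA_2$ (resp.\ $\bB'\bB$), and using that Condition \ref{cond:A_factor_strength} and Condition \ref{cond:B_factor_strength} make these Grams well-conditioned, $\lambda_{\min}\asymp\lambda_{\max}\asymp\norm{\bA_l}_2^2$ and $\asymp\norm{\bB}_2^2$ respectively, every pinch collapses and yields
\[
\lambda_i(\bM_{A_l})\asymp\norm{\bA_1}_2^2\,\norm{\bA_2}_2^2\,\lambda_i(\mathbf{H}_{A_1}),\qquad
\lambda_j(\bM_B)\asymp\norm{\bB}_2^4\,\lambda_j(\mathbf{H}_B),
\]
for $i\le d$, $j\le r$, where $\mathbf{H}_{A_1}=\sum_{ij}\mathbf{G}_{ij}\mathbf{G}_{ij}'$ and $\mathbf{H}_B=\sum_{ij}\mathbf{K}_{ij}\mathbf{K}_{ij}'$. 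It therefore suffices to show the two reduced cores satisfy $\lambda_i(\mathbf{H}_{A_1})\asymp\norm{\bB}_2^4$ and $\lambda_j(\mathbf{H}_B)\asymp\norm{\bA_1}_2^2\norm{\bA_2}_2^2$; inserting the orders of Conditions \ref{cond:A_factor_strength}--\ref{cond:B_factor_strength} then produces the asserted rates.

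The upper halves of these two-sided bounds are already contained in the proof of Lemma \ref{lemma:conv_rate_Mhat}: by submultiplicativity and the boundedness of the factor moments (Condition \ref{cond:Xt_cov_fullrank_bounded}), $\norm{\mathbf{H}_{A_1}}_2\le\sum_{ij}\norm{\mathbf{G}_{ij}}_2^2\le C\sum_{ij}\norm{\bB_{i\cdot}}_2^2\norm{\bB_{j\cdot}}_2^2=C\norm{\bB}_F^4\asymp\norm{\bB}_2^4$, and symmetrically $\norm{\mathbf{H}_B}_2\le C\norm{\bA_1}_F^2\norm{\bA_2}_F^2\asymp\norm{\bA_1}_2^2\norm{\bA_2}_2^2$ (both using that $d,r$ are fixed, so Frobenius and spectral norms are comparable). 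These cap $\lambda_1$, and hence every eigenvalue.

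The real work, and the step I expect to be the main obstacle, is the matching lower bound on the smallest positive eigenvalue of each reduced core, namely $\lambda_d(\mathbf{H}_{A_1})$ and $\lambda_r(\mathbf{H}_B)$. Writing the quadratic form $\bu'\mathbf{H}_{A_1}\bu=\sum_{ij}\norm{\bu'\mathbf{G}_{ij}}_2^2$, I would bound it below by a constant multiple of $\norm{\bB}_2^4$ uniformly over unit $\bu\in\mathbb{R}^d$: Condition \ref{cond:Xt_cov_fullrank_bounded} guarantees the latent factor (cross-)covariance has full rank $k=\max(d,r)$ with top and $k$-th singular values both of order one, so none of the maps $\bu\mapsto\bu'\mathbf{G}_{ij}$ degenerates in any direction, while the factor-strength lower bound $\norm{\bB}_{\min}^2\asymp\norm{\bB}_2^2$ of Condition \ref{cond:B_factor_strength} certifies that the energy accumulated across the $p^2$ terms is genuinely of order $\norm{\bB}_2^4$ and is not confined to a proper subspace of $\mathbb{R}^d$; the symmetric argument with $\bA$ and $\bB$ interchanged, now invoking Condition \ref{cond:A_factor_strength}, handles $\lambda_r(\mathbf{H}_B)$. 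The delicate point is exactly this non-degeneracy of a $p^2$- (resp.\ $m^2$-) term sum of rank-deficient blocks: one must rule out cancellation or collapse onto a lower-dimensional subspace, so that the $d$-th (resp.\ $r$-th) eigenvalue sits at the same order as the first rather than merely controlling the operator norm.
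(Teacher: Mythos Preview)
Your overall architecture matches the paper's: peel off the outer loading via a Loewner sandwich, then control the inner ``core.'' The upper bounds are handled exactly as you say. The difference is in how the crucial lower bound on $\lambda_d(\mathbf{H}_{A_1})$ (and $\lambda_r(\mathbf{H}_B)$) is actually executed, which you leave as a sketch.

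Your heuristic that ``none of the maps $\bu\mapsto\bu'\mathbf{G}_{ij}$ degenerates'' is not what carries the argument; as you yourself note at the end, individual $\mathbf{G}_{ij}$ are rank-deficient, so the content lies entirely in the \emph{aggregation} over $i,j$. The paper makes this aggregation explicit by writing $\mathbf{G}_{ij}=(\bB_{i\cdot}\otimes I_d)\,\bSigma(\bX_t)\,(\bB_{j\cdot}'\otimes I_d)$ via vectorization, which turns the double sum $\sum_{i,j}\mathbf{G}_{ij}(\bA_2'\bA_2)\mathbf{G}_{ij}'$ into a single product with $(\bB'\bB\otimes I_d)$ appearing twice. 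One then factors $\bB'\bB=\bU'\bU$ with $\sigma_{\min}(\bU)\asymp\sigma_{\max}(\bU)\asymp p^{(1-\gamma)/2}$, and applies a singular-value product inequality (the paper cites \cite{merikoski2004inequalities}) to the product $(\bU'\otimes I_d)\bSigma(\bX_t)(\bU\otimes I_d)$, using that $\bSigma(\bX_t)$ has its $d$-th singular value bounded away from zero by Condition~\ref{cond:Xt_cov_fullrank_bounded}. This Kronecker collapse is precisely the missing mechanism in your plan: it converts the ``sum of rank-deficient blocks'' into a well-conditioned product whose smallest relevant singular value can be read off directly. Without it, your quadratic-form argument over unit $\bu$ has no concrete device to rule out the cancellation you flag, so the lower bound remains unproved.
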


\begin{proof}
Note that
\begin{equation*}
    \begin{aligned}
        \bOmega_{A,ij}&= \bA_1(\bs) \Cov{ \left(\bX_t \bb'_{i \cdot}, \bX^*_t \bb'_{j \cdot} \right)} \bA_2(\bs)\\
        & = \bA_1(\bs) \mathbb{E}\left[(\bb_{i \cdot} \otimes I_d) \vect{\bX_t} \vect{\bX_t}'(\bb'_{j \cdot} \otimes I_d)\right] \bA_2(\bs)\\
        & = \bA_1(\bs) (\bb_{i \cdot} \otimes I_d)\bSigma{(\bX_t)} (\bb'_{j \cdot} \otimes I_d)\bA_2(\bs)
    \end{aligned}
\end{equation*}
where $\bSigma{(\bX_t)}=\Cov{\left[ \vect{\bX_t} \vect{\bX_t}'\right]}$.
By assumptions, we have
\begin{equation*}
    \begin{aligned}
    \lambda_{d}(\bM_{A_1}) & = \lambda_d\left(\sum_{i=1}^{p} \sum_{j=1}^{p} \bOmega_{A,ij} \bOmega'_{A,ij}\right)\\
    & \geq \norm{\bA_1(\bs)}_{min}^{2}\norm{\bA_2(\bs)}_{min}^{2} \lambda\left(\sum_{i=1}^{p} \sum_{j=1}^{p} (\bb_{i \cdot} \otimes I_d)\bSigma{(\bX_t)} (\bb'_{j \cdot} \otimes I_d)      (\bb_{j \cdot} \otimes I_d)\bSigma{(\bX_t)'} (\bb'_{i \cdot} \otimes I_d)\right)\\
    & = \norm{\bA_1(\bs)}_{min}^{2}\norm{\bA_2(\bs)}_{min}^{2} \lambda_d\left(\sum_{i=1}^{p} \sum_{j=1}^{p} (\bb_{i \cdot} \otimes I_d)\bSigma{(\bX_t)} (\bb'_{j \cdot}\bb_{j \cdot} \otimes I_d)\bSigma{(\bX_t)'} (\bb'_{i \cdot} \otimes I_d)\right)\\
    & = \norm{\bA_1(\bs)}_{min}^{2}\norm{\bA_2(\bs)}_{min}^{2} \lambda_d\left(\sum_{i=1}^{p}  (\bb_{i \cdot} \otimes I_d)\bSigma{(\bX_t)} (\bB'\bB \otimes I_d)\bSigma{(\bX_t)'} (\bb'_{i \cdot} \otimes I_d)\right)\\
    & = \norm{\bA_1(\bs)}_{min}^{2}\norm{\bA_2(\bs)}_{min}^{2} \lambda_d\left(\sum_{i=1}^{p}  (\bb_{i \cdot} \otimes I_d)\bSigma{(\bX_t)} (\bB' \otimes I_d)(\bB \otimes I_d)\bSigma{(\bX_t)'} (\bb'_{i \cdot} \otimes I_d)\right)\\
    & = \norm{\bA_1(\bs)}_{min}^{2}\norm{\bA_2(\bs)}_{min}^{2} \lambda_d\left(\sum_{i=1}^{p}  (\bB \otimes I_d)\bSigma{(\bX_t)'} (\bb'_{i \cdot} \otimes I_d) (\bb_{i \cdot} \otimes I_d)\bSigma{(\bX_t)} (\bB' \otimes I_d)\right)\\
    & = \norm{\bA_1(\bs)}_{min}^{2}\norm{\bA_2(\bs)}_{min}^{2} \lambda_d\left( (\bB \otimes I_d)\bSigma{(\bX_t)'} (\bB'\bB \otimes I_d) \bSigma{(\bX_t)} (\bB' \otimes I_d)\right)\\
    \\
    \end{aligned}
\end{equation*}
Since $\bB'\bB$ is a $r\times r$ symmetric positive definite matrix, we can find a $r\times r$ positive definite matrix $\bU$ such that $\bB'\bB=\bU'\bU$ and $\norm{\bU}_2^2\asymp O(p^{1-\gamma})\asymp \norm{\bU}_{min}^2$. By the property of Kronecker product, we can show that $\sigma_1(\bU\otimes I_d) \asymp O(p^{1/2-1/\gamma}) \asymp \sigma_{dr}(\bU\otimes I_d)$. Based on the results in \cite{merikoski2004inequalities}, we have $\sigma_d(\bSigma(\bX_t)'(\bU\otimes I_d)) \asymp O(p^{1/2-1/\gamma})$. Then,

\begin{equation*}
\begin{aligned}
    \lambda_{d}(\bM_{A_1})& \geq \norm{\bA_1(\bs)}_{min}^{2}\norm{\bA_2(\bs)}_{min}^{2}\lambda_d\left( (\bU' \otimes I_d)\bSigma{(\bX_t)} (\bU \otimes I_d)(\bU' \otimes I_d) \bSigma{(\bX_t)'} (\bU \otimes I_d)\right)\\
    & \geq \norm{\bA_1(\bs)}_{min}^{2}\norm{\bA_2(\bs)}_{min}^{2} \sigma_d\left[ (\bU' \otimes I_d)\bSigma{(\bX_t)} (\bU \otimes I_d )\right]^2 =  O\left(n_1 n_2 p^{2-2\gamma}\right).
\end{aligned}
\end{equation*}
Other results follow the same procedure.
\end{proof}

\noindent\textbf{Proof of Theorem \ref{thm:A_sep_err_bnd}.} 
\begin{proof}
By Perturbation Theorem,
\begin{eqnarray}
\norm{\hat{\bQ}_{A,1} - \bQ_{A,1}}_2 & \le & \frac{8}{\lambda_{min}(\bM_1)} \norm{\hat{\bM}_1 - \bM_1}_2 \nonumber \\
& = & O_p( ( n_1 n_2^{-1} p^{\gamma} + n_1^{-1} n_2 p^{\gamma} +  p^{2\gamma} ) T^{-1} ) \nonumber \\
&  &  + O_p( ( n_1 n_2^{-1} p^{\gamma} + n_1^{-1} n_2 p^{\gamma} + p^{2\gamma} ) T^{-1} )^{1/2} \nonumber \\
& = & O_p( ( n_1 n_2^{-1} p^{\gamma} + n_1^{-1} n_2 p^{\gamma} +  p^{2\gamma} ) T^{-1} )^{1/2}. \nonumber
\end{eqnarray}

If $n_1 \asymp n_2 \asymp n/2$, we have $\norm{\hat{\bQ}_{A,1} - \bQ_{A,1}}_2 = O_p( p^{\gamma} T^{-1/2} )$.

If set $n_2=c$ fixed and $n_1=n-c$, we have $\norm{\hat{\bQ}_{A,1} - \bQ_{A,1}}_2 = O_p(  n^{1/2} p^{-\gamma/2}  p^{\gamma} T^{-1/2} )$.

We have the same result for $\norm{\hat{\bQ}_{A,2} - \bQ_{A,2}}_2$.
\end{proof}

\begin{lemma}  \label{lemma:conv_rate_covYt1}
Under Conditions 1-6, it holds that
\begin{equation}
\sum_{i=1}^{m} \sum_{j=1}^{m}  \norm{ \widehat{\bOmega}_{B,ij} - \bOmega_{B,ij} }^2_2 = O_p \left( m^2 p^2 T^{-1}\right) .
\end{equation}
\end{lemma}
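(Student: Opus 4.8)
The plan is to mirror the proof of Lemma~\ref{lemma:conv_rate_covYt} essentially verbatim, but along the variable (row) direction rather than the spatial (column) direction. First I would expand the sample cross-covariance by substituting the signal-plus-noise representation of the rows used in (\ref{eqn:omega_B}), namely $\bY'_{1t,i\cdot} = \bB\bX'_t\bA'_{1,i\cdot}(\bs) + \bE'_{1t,i\cdot}$ and $\bY_{2t,j\cdot} = \bA_{2,j\cdot}(\bs)\bX^*_t\bB' + \bE_{2t,j\cdot}$, into $\widehat{\bOmega}_{B,ij} = \frac{1}{T}\sum_{t=1}^{T}\bY'_{1t,i\cdot}\bY_{2t,j\cdot}$. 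Expanding the product gives the four-term decomposition
\[
\widehat{\bOmega}_{B,ij} = \widehat{\bOmega}^B_{s_1 s_2, ij} + \widehat{\bOmega}^B_{s_1 e_2, ij} + \widehat{\bOmega}^B_{e_1 s_2, ij} + \widehat{\bOmega}^B_{e_1 e_2, ij},
\]
with the four blocks matching the sample quantities defined at the opening of this appendix.

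Next I would observe that, because the nugget effect is spatially white and uncorrelated with the signal, the three population cross terms $\bOmega^B_{s_1 e_2, ij}$, $\bOmega^B_{e_1 s_2, ij}$ and $\bOmega^B_{e_1 e_2, ij}$ all vanish, so that $\bOmega_{B,ij} = \bOmega^B_{s_1 s_2, ij}$. Consequently the estimation error decomposes exactly into the four centered increments already controlled in Lemma~\ref{lemma:4_conv_rates}. Applying the elementary inequality $\norm{a+b+c+d}_2^2 \le 4\paran{\norm{a}_2^2 + \norm{b}_2^2 + \norm{c}_2^2 + \norm{d}_2^2}$ to each $p\times p$ block and summing over $i,j \in [m]$, I would invoke bounds (\ref{Lemma2-sig-converg1})--(\ref{Lemma2-err-converg1}) to get
\[
\sum_{i=1}^{m}\sum_{j=1}^{m}\norm{\widehat{\bOmega}_{B,ij} - \bOmega_{B,ij}}_2^2 = O_p\paran{ m^2 p^{2-2\gamma} T^{-1} + 2\,m^2 p^{2-\gamma} T^{-1} + m^2 p^2 T^{-1} }.
\]

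Finally, since Condition~\ref{cond:B_factor_strength} imposes $\gamma \in [0,1]$, we have $p^2 \ge p^{2-\gamma} \ge p^{2-2\gamma}$, so the noise--noise contribution $m^2 p^2 T^{-1}$ dominates and the claimed rate $O_p(m^2 p^2 T^{-1})$ follows. All of the genuinely analytic content lives in Lemma~\ref{lemma:4_conv_rates}, which in turn reduces, through the entrywise $O_p(T^{-1/2})$ rate of Lemma~\ref{lemma:entrywise_conv_rate_cov_vecXt}, to the $\alpha$-mixing and moment Conditions~\ref{cond:vecXt_alpha_mixing}--\ref{cond:Xt_cov_fullrank_bounded}; conditional on that lemma the present argument is pure bookkeeping. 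The closest thing to an obstacle is simply confirming which of the four blocks dominates, and this turns on the fact that $\gamma \ge 0$: the full-rank noise covariance grows like $p^2$ and always swamps the signal cross terms, whose $\bB$-induced factor strength only contributes the smaller powers $p^{2-\gamma}$ and $p^{2-2\gamma}$. Unlike the $\bA$-direction analogue, the two location subsets here are forced to equal size $m$, so no $n_1$-versus-$n_2$ asymmetry arises and every block carries the common prefactor $m^2$.
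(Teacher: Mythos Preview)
Your proposal is correct and follows essentially the same approach as the paper: expand $\widehat{\bOmega}_{B,ij}$ into the four signal/noise blocks, apply the inequality $\norm{a+b+c+d}_2^2 \le 4(\norm{a}_2^2+\cdots)$, and invoke the corresponding rates (\ref{Lemma2-sig-converg1})--(\ref{Lemma2-err-converg1}) from Lemma~\ref{lemma:4_conv_rates}, with the $m^2 p^2 T^{-1}$ noise--noise term dominating since $\gamma\ge 0$. The paper's proof is terser but structurally identical.
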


\begin{proof}

\begin{align*}
\hat{\bOmega}_{B,ij} & = \frac{1}{T} \sum_{t=1}^{T} \bY'_{1 t,i\cdot} \bY_{2 t, j\cdot} \\
& = \frac{1}{T} \sum_{t=1}^{T} \left( \ba_{1, i\cdot}\bX_t\bB^\top+\be_{1t,i\cdot} \right) \left( \ba_{2, j\cdot}\bX_t^*\bB^\top+\be_{1t,j\cdot} \right)' \\
& = \hat{\bOmega}^B_{s_1 s_2, ij} + \hat{\bOmega}^B_{s_1 e_2, ij} + \hat{\bOmega}^B_{e_1 s_2, ij} + \hat{\bOmega}^B_{e_1 e_2, ij}.
\end{align*}

\begin{align*}
& \sum_{i=1}^{m}\sum_{j=1}^{m}  \norm{\hat{\bOmega}_{B,ij} - \bOmega_{B,ij}}^2_2 \\
& \le 4 \sum_{i=1}^{m} \sum_{j=1}^{m} \left( \norm{ \hat{\bOmega}^B_{s_1s_2, ij} - \bOmega^B_{s_1s_2, ij} }^2_2 + \norm{\hat{\bOmega}^B_{s_1e_2, ij} - \bOmega^B_{s_1e_2, ij}}^2_2 + \norm{\hat{\bOmega}^B_{e_1s_2, ij} - \bOmega^B_{e_1s_2, ij}}^2_2 + \norm{\hat{\bOmega}^B_{e_1e_2, ij} - \bOmega^B_{e_1e_2, ij}}^2_2 \right) \\
& = O_p(m^2 p^2 T^{-1})
\end{align*}
\end{proof}

\begin{lemma} \label{conver_rate_M_B_hat}
Under Condition 1-6, and other conditions, it holds that
\begin{equation}
\norm{ \widehat{\bM}_B - \bM_B }_2 = O_p \left(   m^2 p^2 T^{-1/2} \right).
\end{equation}
\end{lemma}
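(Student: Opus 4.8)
The plan is to follow the argument of Lemma~\ref{lemma:conv_rate_Mhat} verbatim, interchanging the roles of the spatial loadings $\bA_1,\bA_2$ and the variable loading $\bB$, since $\bM_B = \sum_{i,j}\bOmega_{B,ij}\bOmega'_{B,ij}$ has exactly the same bilinear structure in the perturbations $\widehat{\bOmega}_{B,ij}-\bOmega_{B,ij}$. First I would write
\[
\widehat{\bM}_B - \bM_B = \sum_{i=1}^{m}\sum_{j=1}^{m}\left( \widehat{\bOmega}_{B,ij}\widehat{\bOmega}'_{B,ij} - \bOmega_{B,ij}\bOmega'_{B,ij} \right),
\]
use the identity $\widehat{\bOmega}\widehat{\bOmega}' - \bOmega\bOmega' = (\widehat{\bOmega}-\bOmega)(\widehat{\bOmega}-\bOmega)' + (\widehat{\bOmega}-\bOmega)\bOmega' + \bOmega(\widehat{\bOmega}-\bOmega)'$ together with the triangle inequality, and then Cauchy--Schwarz, to reach
\[
\norm{\widehat{\bM}_B - \bM_B}_2 \le \sum_{i,j}\norm{\widehat{\bOmega}_{B,ij}-\bOmega_{B,ij}}^2_2 + 2\left( \sum_{i,j}\norm{\bOmega_{B,ij}}^2_2 \cdot \sum_{i,j}\norm{\widehat{\bOmega}_{B,ij}-\bOmega_{B,ij}}^2_2 \right)^{1/2}.
\]

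This reduces the proof to two ingredients. The sampling ingredient $\sum_{i,j}\norm{\widehat{\bOmega}_{B,ij}-\bOmega_{B,ij}}^2_2 = O_p(m^2 p^2 T^{-1})$ is already supplied by Lemma~\ref{lemma:conv_rate_covYt1}. The population ingredient is a bound on $\sum_{i,j}\norm{\bOmega_{B,ij}}^2_2$, which I would obtain by the same chain used in the first display of Lemma~\ref{lemma:conv_rate_Mhat}. Starting from $\bOmega_{B,ij} = \bB\,\Cov{\bX'_t\ba'_{1,i\cdot}, \ba_{2,j\cdot}\bX^*_t}\,\bB'$, I would pull out $\norm{\bB}^2_2$, rewrite the inner covariance through the Kronecker factors $\ba_{1,i\cdot}\otimes \bI_r$, $\ba_{2,j\cdot}\otimes \bI_r$ and $\frac1T\sum_t\E{\bX_t\otimes\bX_t}$, and then carry out the double sum via $\sum_i\norm{\ba_{1,i\cdot}}^2_2 = \norm{\bA_1}^2_F$ and $\sum_j\norm{\ba_{2,j\cdot}}^2_2 = \norm{\bA_2}^2_F$. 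Using the boundedness of $\frac1T\sum_t\E{\bX_t\otimes\bX_t}$ implied by Condition~\ref{cond:Xt_cov_fullrank_bounded}, the variable factor strength $\norm{\bB}^2_2\asymp p^{1-\gamma}$ from Condition~\ref{cond:B_factor_strength}, and the crude upper bound $\norm{\bA_l}^2_F = O(m)$ (from $\norm{\bA_l}^2_2\asymp m^{1-\delta}\le m$ with $d$ fixed) from Condition~\ref{cond:A_factor_strength}, this yields $\sum_{i,j}\norm{\bOmega_{B,ij}}^2_2 = O_p(m^2 p^{2-2\gamma})$.

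Combining the two ingredients, the squared term contributes $O_p(m^2 p^2 T^{-1})$ and the cross term contributes $O_p\big( (m^2 p^{2-2\gamma}\cdot m^2 p^2 T^{-1})^{1/2}\big) = O_p(m^2 p^{2-\gamma} T^{-1/2})$. Because $T\to\infty$ and $\gamma\ge 0$, both are dominated by $m^2 p^2 T^{-1/2}$, which gives the stated rate. The main obstacle is the population bound: one must carry the Kronecker-product bookkeeping through carefully so that the two factor-strength conditions land in the right places---$\norm{\bB}_2$ supplying the $p^{1-\gamma}$ factors and the Frobenius norms of $\bA_1,\bA_2$ supplying the $m$ factors---while confirming that the second moment of $\vect{\bX_t}$ stays $O(1)$. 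Everything else reuses the estimates already assembled in Lemmas~\ref{lemma:4_conv_rates}--\ref{lemma:conv_rate_covYt1}, so no genuinely new bound is required beyond this algebraic rearrangement.
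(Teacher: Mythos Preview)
Your proposal is correct and mirrors the paper's own proof essentially line-for-line: the same perturbation identity and Cauchy--Schwarz bound on $\widehat{\bM}_B-\bM_B$, the same use of Lemma~\ref{lemma:conv_rate_covYt1} for $\sum_{i,j}\norm{\widehat{\bOmega}_{B,ij}-\bOmega_{B,ij}}_2^2=O_p(m^2p^2T^{-1})$, and the same Kronecker-product chain yielding $\sum_{i,j}\norm{\bOmega_{B,ij}}_2^2=O_p(m^2p^{2-2\gamma})$ via $\norm{\bB}_2^4$ and $\norm{\bA_1}_F^2\norm{\bA_2}_F^2$. The final combination $O_p(m^2p^2T^{-1})+O_p(m^2p^{2-\gamma}T^{-1/2})$ is exactly what the paper obtains before absorbing both into the cruder $O_p(m^2p^2T^{-1/2})$.
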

\begin{proof}
\begin{align*}
\sum_{i=1}^{m}\sum_{j=1}^{m} \norm{\bOmega_{B,ij}}^2_2 & = \sum_{i=1}^{m}  \norm{ \bB \frac{1}{T} \sum_{t=1}^{T} \Cov{ \left(\ba_{1, i\cdot}\bX_t, \ba_{2, j\cdot}\bX_t^* \right)} \bB'}^2_2 \\
& \le \sum_{i=1}^{m}\sum_{j=1}^{m} \norm{\bB}^4_2  \norm{\frac{1}{T} \sum_{t=1}^{T} \E{ \bX'_t \ba'_{1, i\cdot}\ba_{2, i\cdot}\bX_t^* }}^2_2 \\
& \le \norm{\bB}^4_2 \sum_{i=1}^{m}\sum_{j=1}^{m} \norm{\frac{1}{T} \sum_{t=1}^{T} \E{ \bX'_t \otimes \bX'_t} \vect{\ba'_{1,i\cdot} \ba_{2, i \cdot}} }^2_2 \\
& \le \norm{\bA_1}^2_2 \norm{\bA_2}^2_2 \sum_{i=1}^{m}\sum_{j=1}^{m} \norm{\frac{1}{T} \sum_{t=1}^{T} \E{ \bX'_t \otimes \bX'_t} }^2_2 \norm{ \vect{\ba'_{1,i\cdot} \ba_{2, i \cdot}} }^2_2 \\
& = \norm{\bB}^4_2 \sum_{i=1}^{m}\sum_{j=1}^{m}  \norm{\frac{1}{T} \sum_{t=1}^{T} \E{ \bX'_t \otimes \bX'_t} }^2_2 \norm{ \ba'_{1,i\cdot} \ba_{2, i \cdot} }^2_F \\
& \le \norm{\bB}^4_2 \norm{\frac{1}{T} \sum_{t=1}^{T} \E{ \bX'_t \otimes \bX'_t} }^2_2 \sum_{i=1}^{m} \sum_{j=1}^{m} \norm{ \ba'_{1,i\cdot}}^2_2 \norm{\ba_{2, i \cdot}}^2_2 \\
& = \norm{\bB}^4_2 \norm{\frac{1}{T} \sum_{t=1}^{T} \E{ \bX'_t \otimes \bX'_t} }^2_2 \norm{\bA_1}^2_F \norm{\bA_2}^2_F \\
& \le \norm{\bA_1}^2_2 \norm{\bA_2}^2_2 \norm{\frac{1}{T} \sum_{t=1}^{T} \E{ \bX_t \otimes \bX_t} }^2_2 \cdot d^2 \cdot \norm{\bB}^4_2  \\
& = O_p \left( m^{2} p^{2-2\gamma} \right)
\end{align*}

Then,
\begin{eqnarray*}
\norm{\hat{\bM}_B - \bM_B}_2 & = & \norm{\sum_{i=1}^{m}\sum_{j=1}^{m} \left( \hat{\bOmega}_{B,ij}\hat{\bOmega}'_{B,ij} - \bOmega_{B,ij} \bOmega'_{B,ij} \right)}_2 \\
& \le & \sum_{i=1}^{m}\sum_{j=1}^{m}  \norm{\hat{\bOmega}_{B,ij} - \bOmega_{B,ij}}^2_2 + 2 \sum_{i=1}^{m}\sum_{j=1}^{m}  \norm{\bOmega_{B,ij}}_2 \norm{\hat{\bOmega}_{B,ij} - \bOmega_{B,ij}}_2 \\
& \le & \sum_{i=1}^{m} \sum_{j=1}^{m} \norm{\hat{\bOmega}_{B,ij} - \bOmega_{B,ij}}^2_2 + 2 \left( \sum_{i=1}^{m} \sum_{j=1}^{m} \norm{\bOmega_{B,ij}}^2_2 \cdot \sum_{i=1}^{m}\sum_{j=1}^{m} \norm{\hat{\bOmega}_{B,ij} - \bOmega_{B,ij}}^2_2 \right)^{1/2} \\
& = &  O_p \left( m^2 p^2 T^{-1} \right) + O_p \left( m^{2} p^{2-\gamma} T^{-1/2} \right).
\end{eqnarray*}
\end{proof}

\noindent\textbf{Proof of Theorem \ref{thm:B_sep_err_bnd}.}
\begin{proof}
By perturbation theorem,
\begin{eqnarray}
\norm{\hat \bQ_B - \bQ_B}_2 & \le & \frac{8}{\lambda_{min}(\bM_B)} \norm{\hat{\bM}_B - \bM_B}_2 \nonumber \\
& = & O_p(  p^{\gamma} T^{-1/2} ). \nonumber
\end{eqnarray}
\end{proof}

\noindent\textbf{Proof of Theorem \ref{thm:signal_err_bnd}.}
\begin{proof}
\begin{eqnarray}
\norm{\hat{\bXi}_{1t} - \bXi_{1t}}_2  & = &  \norm{ \hat \bQ_{A, 1} \hat \bQ_{A, 1}' \left( \bQ_{A, 1} \bZ_{1t} \bQ_B' + \bE_{1t} \right) \hat \bQ_B \hat \bQ_B' - \bQ_{A, 1} \bZ_{1t} \bQ_B'}_2 \nonumber \\
& \le &  \norm{ \hat \bQ_{A, 1} \hat \bQ_{A, 1}' \bQ_{A, 1} \bZ_{1t} \bQ_B' \left( \hat \bQ_B \hat \bQ_B' - \bQ_B \bQ_B' \right) }_2 \nonumber \\
& & + \; \norm{ \left( \hat \bQ_{A, 1} \hat \bQ_{A, 1}' - \bQ_{A, 1} \bQ_{A, 1}' \right) \bQ_{A, 1} \bZ_{1t} \bQ_B'}_2 \nonumber \\
& & +  \norm{ \hat \bQ_{A, 1} \hat \bQ_{A, 1}' \bE_{1t} \hat \bQ_B \hat \bQ_B' }_2 \nonumber \\
& = & \bI_1 + \bI_2 + \bI_3. \nonumber
\end{eqnarray}
Note that $\norm{\bZ_{1t}}_2 \asymp m^{1/2-\delta/2}p^{1/2-\gamma/2}$. 
Thus, we have
\begin{eqnarray}
\bI_1 & \le &\norm{ \hat \bQ_{A, 1} \hat \bQ_{A, 1}' \bQ_{A, 1} \bZ_{1t} \left( \bQ_B'\hat \bQ_B\hat \bQ_B'-\bQ_B' \right) }_2 \nonumber \\
& = & \norm{ \hat \bQ_{A, 1} \hat \bQ_{A, 1}' \bQ_{A, 1} \bZ_{1t} \left[ \bQ_B'\left(\hat \bQ_B-\bQ_B+\bQ_B\right)\hat \bQ_B'-\bQ_B' \right] }_2 \nonumber \\
& \le & \norm{ \hat \bQ_{A, 1} \hat \bQ_{A, 1}' \bQ_{A, 1} \bZ_{1t}  \bQ_B'\left(\hat \bQ_B-\bQ_B\right)\hat \bQ_B' }_2 + \norm{ \hat \bQ_{A, 1} \hat \bQ_{A, 1}' \bQ_{A, 1} \bZ_{1t}  \left(\hat \bQ_B'-\bQ_B'\right) }_2 \nonumber \\
& \le & \norm{ \hat \bQ_{A, 1}}_2\norm{\hat \bQ_{A, 1}'}_2\norm{\bQ_{A, 1}}_2\norm{\bZ_{1t}}_2\norm{\bQ_B'}_2\norm{\hat \bQ_B-\bQ_B}_2\norm{\hat \bQ_B'}_2  \nonumber \\
&  & + \norm{ \hat \bQ_{A, 1}}_2 \norm{\hat \bQ_{A, 1}'}_2 \norm{\bQ_{A, 1}}_2 \norm{\bZ_{1t}}_2  \norm{\hat \bQ_B'-\bQ_B' }_2  \nonumber \\
& = &  \Op{m^{1/2} p^{1/2+\gamma/2} T^{-1/2}}.  \nonumber
\end{eqnarray}

\begin{eqnarray*}
&\bI_2 & \le \norm{ \left(\hat \bQ_{A, 1} \hat \bQ_{A, 1}' \bQ_{A, 1} - \bQ_{A, 1}\right) \bZ_{1t} \bQ_B' }_2 = \norm{ \left[\hat \bQ_{A, 1} \left(\hat \bQ_{A, 1}' -\bA'_1 + \bA'_1\right) \bQ_{A, 1} - \bQ_{A, 1}\right] \bZ_{1t} \bQ_B' }_2 \\
& & \le \norm{ \hat \bQ_{A, 1} \left(\hat \bQ_{A, 1}' -\bA'_1 \right) \bQ_{A, 1}  \bZ_{1t} \bQ_B' }_2 + \norm{  \left(\hat \bQ_{A, 1} -\bQ_{A, 1} \right)\bZ_{1t} \bQ_B' }_2\\
& & \le \norm{ \hat \bQ_{A, 1}}_2 \norm{ \hat \bQ_{A, 1}' -\bA'_1 }_2 \norm{ \bQ_{A, 1}}_2  \norm{ \bZ_{1t}}_2 \norm{ \bQ_B' }_2 + \norm{ \hat \bQ_{A, 1} -\bQ_{A, 1} }_2 \norm{ \bZ_{1t}}_2 \norm{ \bQ_B' }_2\\
& & = O_p\left( m^{1/2} p^{1/2+\gamma/2} T^{-1/2}\right).
\end{eqnarray*}

\begin{eqnarray*}
&\bI_3 & \le \norm{ \hat \bQ_{A, 1}' \bE_t \hat \bQ_B }_2 = \norm{(\hat \bQ_B' \otimes \hat \bQ_{A, 1}' )\vect{\bE_t}}_2 \le rd \norm{\bSigma_e}_2 = O_p(1).
\end{eqnarray*}

Thus,
\begin{equation*}
\norm{\hat{\bXi}_{1t} - \bXi_{1t}}_2  = O_p\left( m^{1/2} p^{1/2+\gamma/2} T^{-1/2}\right) + O_p(1).
\end{equation*}

\begin{equation*}
m^{-1/2} p^{-1/2} \norm{\hat{\bXi}_{1t} - \bXi_{1t}}_2  = O_p\left(  p^{\gamma/2} T^{-1/2}\right) + O_p(n_1^{-1/2} p^{-1/2}).
\end{equation*}

Similarly for ${\bXi}_{2t}$, we have
\begin{equation*}
\norm{\hat{\bXi}_{2t} - \bXi_{2t}}_2 = O_p\left( m^{1/2} p^{1/2+\gamma/2} T^{-1/2}\right) + O_p(1).
\end{equation*}

\begin{equation*}
n_2^{-1/2} p^{-1/2} \norm{\hat{\bXi}_{2t} - \bXi_{2t}}_2  = O_p\left(  p^{\gamma/2} T^{-1/2}\right) + O_p(n_2^{-1/2} p^{-1/2}).
\end{equation*}

Now we find the $L_2$-norm bounds for
\begin{equation*}
\norm{\hat{\bXi}_{t} - \bXi_{t}}^2_2 = \norm{ \begin{pmatrix}
\hat{\bXi}_{1t} - \bXi_{1t} \\
\hat{\bXi}_{2t} - \bXi_{2t}
\end{pmatrix}  }^2_2 .
\end{equation*}

Let $\bM = \hat{\bXi}_{t} - \bXi_{t} = \begin{pmatrix} \bM_1 \\ \bM_2 \end{pmatrix}$, the above problem is equivelent to finding $\lambda_{max}(\bM'\bM)$ from $\lambda_{max}(\bM_1'\bM_1)$ and $\lambda_{max}(\bM_2'\bM_2)$. \\

Since
\begin{equation*}
\lambda_{max}(\bM'\bM) = \lambda_{max}(\bM_1'\bM_1 + \bM_2'\bM_2) \le \lambda_{max}(\bM_1'\bM_1) + \lambda_{max}(\bM_2'\bM_2),
\end{equation*}

We have
\begin{align*}
\norm{\hat{\bXi}_{t} - \bXi_{t}}^2_2  & \le \norm{\hat{\bXi}_{1t} - \bXi_{1t}}^2_2  + \norm{\hat{\bXi}_{2t} - \bXi_{2t}}^2_2  \\
& =  O_p\left( m p^{1+\gamma} T^{-1}\right) + O_p\left( m^{1/2} p^{1/2+\gamma/2} T^{-1/2}\right) + O_p(1).
\end{align*}

$$n^{-1} p^{-1} \norm{\hat{\bXi}_{t} - \bXi_{t}}^2_2  = O_p{( p^{\gamma} T^{-1} + m^{-1/2} p^{-1/2+\gamma/2} T^{-1/2} + n^{-1} p^{-1})}.$$
\end{proof}

\subsection{Spacial loading matrix re-estimation}

\begin{lemma}
If $n_1 \asymp n2 \asymp n$, then
\begin{equation}
n^{-1/2} p^{-1/2} \norm{\hat{\bPsi}_{lt} - \bPsi_{lt}}_2  = O_p( p^{\gamma/2} T^{-1/2}) + O_p(n^{-1/2} p^{-1/2}),
\end{equation}
where $l = 1, 2$, and
\begin{equation}
n^{-1} p^{-1} \norm{\hat{\bPsi}_{t} - \bPsi_{t}}^2_2  = \Op{ p^{\gamma} T^{-1} + n^{-1} p^{-1}}
\end{equation}
\end{lemma}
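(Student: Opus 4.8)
The plan is to reduce the statement to the single-partition signal bound of Theorem~\ref{thm:signal_err_bnd}, exploiting the fact that $\bPsi_{lt}$ and $\hat\bPsi_{lt}$ arise from $\bXi_{lt}$ and $\hat\bXi_{lt}$ by right multiplication with $\bQ_B$ and $\hat\bQ_B$. First I would record the identity $\bPsi_{lt} = \bQ_{A,l}\bZ_{lt} = \bXi_{lt}\bQ_B$, valid since $\bXi_{lt} = \bQ_{A,l}\bZ_{lt}\bQ_B'$ and $\bQ_B'\bQ_B = \bI_r$, while by definition $\hat\bPsi_{lt} = \hat\bXi_{lt}\hat\bQ_B$. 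This gives the decomposition
\begin{equation*}
\hat\bPsi_{lt} - \bPsi_{lt} = (\hat\bXi_{lt} - \bXi_{lt})\hat\bQ_B + \bXi_{lt}(\hat\bQ_B - \bQ_B),
\end{equation*}
whence, by the triangle inequality and sub-multiplicativity of $\norm{\cdot}_2$,
\begin{equation*}
\norm{\hat\bPsi_{lt} - \bPsi_{lt}}_2 \le \norm{\hat\bXi_{lt} - \bXi_{lt}}_2\norm{\hat\bQ_B}_2 + \norm{\bXi_{lt}}_2\norm{\hat\bQ_B - \bQ_B}_2.
\end{equation*}

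Next I would bound the two pieces separately. For the first, $\norm{\hat\bQ_B}_2 = 1$ by orthonormality, and Theorem~\ref{thm:signal_err_bnd} supplies $\norm{\hat\bXi_{lt} - \bXi_{lt}}_2 = \Op{m^{1/2}p^{1/2+\gamma/2}T^{-1/2}} + \Op{1}$. For the second, I would combine $\norm{\hat\bQ_B - \bQ_B}_2 = \Op{p^{\gamma}T^{-1/2}}$ (Theorem~\ref{thm:B_sep_err_bnd} under $n_1\asymp n_2\asymp n$) with $\norm{\bXi_{lt}}_2 = \norm{\bZ_{lt}}_2 \asymp m^{1/2-\delta/2}p^{1/2-\gamma/2}$, the latter order following from the factor-strength Conditions~\ref{cond:A_factor_strength}--\ref{cond:B_factor_strength}; this produces a term of order $\Op{m^{1/2-\delta/2}p^{1/2+\gamma/2}T^{-1/2}}$. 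Because $\delta\ge 0$, the second piece is of no larger order than the first, so $\norm{\hat\bPsi_{lt} - \bPsi_{lt}}_2 = \Op{m^{1/2}p^{1/2+\gamma/2}T^{-1/2}} + \Op{1}$; dividing by $n^{1/2}p^{1/2}$ and using $m\asymp n$ yields the first display.

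For the aggregated bound I would stack the partitions, writing $\bM = \hat\bPsi_t - \bPsi_t$ as the vertical concatenation of $\bM_1 = \hat\bPsi_{1t}-\bPsi_{1t}$ and $\bM_2 = \hat\bPsi_{2t}-\bPsi_{2t}$, and reuse the inequality $\lambda_{max}(\bM'\bM) \le \lambda_{max}(\bM_1'\bM_1) + \lambda_{max}(\bM_2'\bM_2)$ exactly as in the proof of Theorem~\ref{thm:signal_err_bnd}. This gives $\norm{\hat\bPsi_t - \bPsi_t}_2^2 \le \norm{\bM_1}_2^2 + \norm{\bM_2}_2^2 = \Op{mp^{1+\gamma}T^{-1}} + \Op{1}$, and dividing by $np$ (the cross term being the dominated geometric mean) delivers the second display.

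The step I expect to be most delicate is controlling the second piece of the decomposition, namely verifying that perturbing $\bQ_B$ to $\hat\bQ_B$ when acting on the true signal $\bXi_{lt}$ does not worsen the rate. This reduces to pinning down $\norm{\bZ_{lt}}_2 \asymp m^{1/2-\delta/2}p^{1/2-\gamma/2}$ from the factor-strength conditions and noting that the exponent $1/2-\delta/2 \le 1/2$ keeps this contribution at or below the signal error; the remaining manipulations simply replay the argument already carried out for $\hat\bXi_{lt}$.
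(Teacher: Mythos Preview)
Your proof is correct. The route differs slightly from the paper's: rather than expanding $\hat\bPsi_{lt}=\hat\bQ_{A,l}\hat\bQ_{A,l}'\bY_{lt}\hat\bQ_B$ directly and redoing the three-term decomposition $\bI_1+\bI_2+\bI_3$ from the proof of Theorem~\ref{thm:signal_err_bnd} (which is what the paper does, simply dropping the trailing $\hat\bQ_B'$ and then saying ``similar to the proof of Theorem~\ref{thm:signal_err_bnd}''), you invoke Theorem~\ref{thm:signal_err_bnd} as a black box via the two-term split $(\hat\bXi_{lt}-\bXi_{lt})\hat\bQ_B+\bXi_{lt}(\hat\bQ_B-\bQ_B)$. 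Both arguments land on the same bounds for the same reasons; your version is a bit more modular since it reuses the signal bound rather than re-deriving it, at the cost of needing the extra observation that the $\bXi_{lt}(\hat\bQ_B-\bQ_B)$ term is dominated. The stacking argument for the aggregated bound is identical to the paper's.
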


\begin{proof}

\begin{eqnarray*}
\norm{\bPsi_{it} - \bPsi_{it}}_2 & = & \norm{ \hat{\bQ}_{A_i} \hat{\bZ}_t - \bQ_{A_i} \bZ_t}_2 = \norm{ \hat{\bQ}_{A_i} \hat{\bQ}'_{A_i} (\bQ_{A_i} \bZ_t \bQ'_B + \bE_t) \hat{\bQ}_B - \bQ_{A_i} \bZ_t}_2 \\
& = & \norm{ \hat{\bQ}_{A_i} \hat{\bQ}'_{A_i} \bQ_{A_i} \bZ_t \bQ'_B ( \hat{\bQ}_B - \bQ_B) + (\hat{\bQ}_{A_i} \hat{\bQ}'_{A_i} - \bQ_{A_i} \bQ'_{A_i})\bQ_{A_i} \bZ_t + \hat{\bQ}_{A_i} \hat{\bQ}'_{A_i} \bE_t \hat{\bQ}_B}_2 \\
& \le & \norm{ \hat{\bQ}_{A_i} \hat{\bQ}'_{A_i} \bQ_{A_i} \bZ_t \bQ'_B ( \hat{\bQ}_B - \bQ_B) }_2 + \norm{ (\hat{\bQ}_{A_i} \hat{\bQ}'_{A_i} - \bQ_{A_i} \bQ'_{A_i})\bQ_{A_i} \bZ_t }_2 + \norm{ \hat{\bQ}_{A_i} \hat{\bQ}'_{A_i} \bE_t \hat{\bQ}_B}_2
\end{eqnarray*}

Then, similar to the proof of Theorem \ref{thm:signal_err_bnd}, we have the desired results.
\end{proof}

Let $\Delta_{npT} =  p^{\gamma} T^{-1} + n^{-1} p^{-1}$. 
Then $\Delta_{npT}$ is the convergence rate of $n^{-1} p^{-1} \norm{\hat \bPsi_t - \bPsi_t}^2_2$. 
Since $\norm{\hat \bPsi_t - \bPsi_t}^2_2 \le \norm{\hat \bPsi_t - \bPsi_t}^2_F \le r \norm{\hat \bPsi_t - \bPsi_t}^2_2$ where $r$ is fixed, we have $n^{-1} p^{-1} \norm{\hat \bPsi_t - \bPsi_t}^2_F = O_p(\Delta_{npT})$.

By definition, we have $\frac{1}{T} \sum_{t=1}^T \bPsi_t \bPsi_t' = \frac{1}{T} \sum_{t=1}^T \bQ_A \bZ_t \bZ_t' \bQ_A'$. Thus $\hat \bQ_A$ and $\hat \bZ_t$ can be estimated from $\frac{1}{T} \sum_{t=1}^T \hat\bPsi_t \hat\bPsi_t'  = \frac{1}{T}(\bPsi_t+\bU_t)(\bPsi_t+\bU_t)'$, where $\bU_t = \hat \bPsi_t - \bPsi_t$ is the approximation error from the previous steps.

Let $\bV_{npT}$ be the $d \times d$ diagonal matrix of the first $d$ largest eigenvalues of $\frac{1}{T} \sum_{t=1}^T \hat\bPsi_t \hat\bPsi_t'$ in decreasing order. By definition of eigenvectors and eigenvalues, we have $\frac{1}{T} \sum_{t=1}^T \hat\bPsi_t \hat\bPsi_t' \hat\bQ_A = \hat\bQ_A \bV_{npT}$ or $\frac{1}{T} \sum_{t=1}^T \hat\bPsi_t \hat\bPsi_t' \hat\bQ_A \bV_{npT}^{-1} = \hat\bQ_A$.

Define $\bH = \frac{1}{T} \sum_{t=1}^T \bZ_t \bZ_t' \bQ_A' \hat\bQ_A \bV^{-1}_{npT}$, then
\begin{eqnarray}
\hat \bQ_A - \bQ_A\bH & = & \frac{1}{T} \sum_{t=1}^T \paran{\bQ_A \bZ_t + \bU_t} \paran{\bQ_A \bZ_t + \bU_t}' \hat\bQ_A \bV_{npT}^{-1} -  \frac{1}{T} \sum_{t=1}^T \bQ_A \bZ_t \bZ_t' \bQ_A' \hat\bQ_A \bV^{-1}_{npT} \nonumber \\
& = & \left( \frac{1}{T} \sum_{t=1}^T \bQ_A \bZ_t \bU_t' \hat\bQ_A  + \frac{1}{T} \sum_{t=1}^T \bU_t \bZ_t' \bQ_A' \hat\bQ_A + \frac{1}{T} \sum_{t=1}^T \bU_t\bU_t' \hat\bQ_A \right) \bV^{-1}_{npT} \nonumber \\
& = & \left( \bN_1 + \bN_2 + \bN_3 \right) \bV^{-1}_{npT}. \nonumber
\end{eqnarray}

\begin{lemma}
$\norm{\bN_1}^2_2 = \norm{\bN_2}^2_2 = O_p(n^{2} p^{2-\gamma} \Delta_{npT})$ and $\norm{\bN_3}^2_2 = O_p(n^2 p^2 \Delta_{npT}^2)$.
\end{lemma}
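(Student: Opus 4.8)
The plan is to bound each of $\bN_1,\bN_2,\bN_3$ by peeling off the orthonormal factors and then reducing to two elementary averages. Since $\hat\bQ_A$ and $\bQ_A$ have orthonormal columns, $\norm{\hat\bQ_A}_2=\norm{\bQ_A}_2=1$, so writing $\bN_1=\bQ_A\big(\frac{1}{T}\sum_{t}\bZ_t\bU_t'\big)\hat\bQ_A$, $\bN_2=\big(\frac{1}{T}\sum_{t}\bU_t\bZ_t'\big)\bQ_A'\hat\bQ_A$, and $\bN_3=\big(\frac{1}{T}\sum_{t}\bU_t\bU_t'\big)\hat\bQ_A$, submultiplicativity of the spectral norm shows it suffices to control $\norm{\frac{1}{T}\sum_t\bZ_t\bU_t'}_2$ (the spectral norm is transpose-invariant, so $\bN_1$ and $\bN_2$ share the same bound) and $\norm{\frac{1}{T}\sum_t\bU_t\bU_t'}_2$. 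For these I would apply the triangle inequality and then the Cauchy--Schwarz inequality over $t$:
\[
\Big\|\frac{1}{T}\sum_t\bZ_t\bU_t'\Big\|_2\le\Big(\frac{1}{T}\sum_t\norm{\bZ_t}_2^2\Big)^{1/2}\Big(\frac{1}{T}\sum_t\norm{\bU_t}_2^2\Big)^{1/2},\qquad \Big\|\frac{1}{T}\sum_t\bU_t\bU_t'\Big\|_2\le\frac{1}{T}\sum_t\norm{\bU_t}_2^2 .
\]

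The two remaining ingredients are the orders of $\norm{\bZ_t}_2$ and of the averaged squared error. For the first, I would write $\bZ_t=\bQ_A'\bA\,\bX_t\,\bB'\bQ_B$; because $\bQ_A$ (resp. $\bQ_B$) spans the same column space as $\bA$ (resp. $\bB$), Conditions~\ref{cond:A_factor_strength} and~\ref{cond:B_factor_strength} give $\norm{\bQ_A'\bA}_2\asymp n^{(1-\delta)/2}$ and $\norm{\bB'\bQ_B}_2\asymp p^{(1-\gamma)/2}$, while the bounded-moment Condition~\ref{cond:Xt_cov_fullrank_bounded} gives $\norm{\bX_t}_2=\Op{1}$. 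Hence $\norm{\bZ_t}_2^2=\Op{n^{1-\delta}p^{1-\gamma}}\le\Op{n\,p^{1-\gamma}}$ using $\delta\ge0$, uniformly in $t$, so $\frac{1}{T}\sum_t\norm{\bZ_t}_2^2=\Op{n\,p^{1-\gamma}}$. For the second, the preceding lemma gives $\norm{\bU_t}_2^2=\Op{n\,p\,\Delta_{npT}}$, and since that bound is uniform in $t$ the average is of the same order. Substituting into the two displays yields $\norm{\bN_1}_2^2=\norm{\bN_2}_2^2=\Op{n^2p^{2-\gamma}\Delta_{npT}}$ and $\norm{\bN_3}_2^2=\Op{n^2p^2\Delta_{npT}^2}$, exactly the claimed rates.

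The delicate point is that the cross average $\frac{1}{T}\sum_t\bZ_t\bU_t'$ couples the signal $\bZ_t$ with the first-step error $\bU_t=\hat\bPsi_t-\bPsi_t$, which is itself a function of $\hat\bQ_{A,l}$, $\hat\bQ_B$ and the noise $\bE_t$; there is no independence structure to exploit, so rather than seek cancellation I would deliberately accept the crude deterministic Cauchy--Schwarz bound, which is precisely what produces the stated orders. The only care required is that the per-$t$ error bound is uniform over $t\in[T]$, and this is inherited from the proof of the preceding lemma: the dominant contributions to $\bU_t$ enter through $\hat\bQ_{A,l}-\bQ_{A,l}$ and $\hat\bQ_B-\bQ_B$, which are independent of $t$, and through the noise term $\hat\bQ_{A,l}\hat\bQ_{A,l}'\bE_t\hat\bQ_B\hat\bQ_B'$, which is $\Op{1}$ uniformly. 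I note that retaining the factor $n^{-\delta}$ throughout yields the sharper $\norm{\bN_1}_2^2=\Op{n^{2-\delta}p^{2-\gamma}\Delta_{npT}}$; the lemma states the slightly weaker bound obtained by discarding $n^{-\delta}$, which is all that is needed in the sequel.
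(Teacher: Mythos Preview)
Your argument is correct and essentially identical to the paper's: both peel off the orthonormal factors $\bQ_A,\hat\bQ_A$ via submultiplicativity, pass the norm through the average in $t$, and then invoke the uniform bounds $\norm{\bZ_t}_2^2=\Op{n\,p^{1-\gamma}}$ and $\norm{\bU_t}_2^2=\Op{n\,p\,\Delta_{npT}}$ from the preceding lemma. Your explicit use of Cauchy--Schwarz over $t$ and your discussion of uniformity in $t$ and of the discarded $n^{-\delta}$ factor are welcome clarifications, but the core strategy and ingredients match the paper exactly.
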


\begin{proof}

Note that $\norm{\bU_t}^2_2 = \norm{\hat \bPsi_t - \bPsi_t}^2_2 = \Op{np \Delta_{npT} }$ and $\norm{\bQ_A}^2_F = \norm{\hat\bQ_A}^2_2 = \Op{1}$. 
In addition, we have $\norm{\bZ_t}^2_2 \asymp \Op{np^{1-\gamma}}$.

Thus,
\begin{eqnarray}
\norm{\bN_1}^2_2 & \le & \frac{1}{T} \sum_{t=1}^T \norm{\bQ_A \bZ_t \bU_t' \hat\bQ_A}_2^2 \le \norm{\bQ_A}^2_2 \norm{\bZ_t}^2_2 \norm{\bU'}^2_2 \norm{\hat\bQ_A}^2_2 = O_p(n^{2} p^{2-\gamma} \Delta_{npT}) \nonumber \\
\norm{\bN_2}^2_2 & \le & \frac{1}{T} \sum_{t=1}^T \norm{\bU_t \bZ_t' \bQ_A' \hat\bQ_A}^2_2 \le \norm{\bU_t}^2_F \norm{\bZ_t'}^2_2 \norm{\bQ_A'}^2_2 \norm{\hat\bQ_A}^2_2= \Op{n^{2} p^{2-\gamma} \Delta_{npT}} \nonumber \\
\norm{\bN_3}^2_2 & \le & \frac{1}{T} \sum_{t=1}^T \norm{\bU_t\bU_t' \hat\bQ_A}^2_2 \le \norm{\bU_t\bU_t'}^2_2 \norm{\hat\bQ_A}^2_2 = O_p(n^2 p^2 \Delta^2_{npT} ) \nonumber
\end{eqnarray} 
\end{proof}

\begin{lemma}
(i) $\norm{\bV_{npT}}_2 = O_p(n p^{1-\gamma})$, $\norm{\bV_{npT}^{-1}}_2 = O_p(n^{-1} p^{\gamma-1})$. \\
(ii) $\norm{\bH}_2 = O_p(1)$. 
\end{lemma}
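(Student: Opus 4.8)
The plan is to regard $\hat\bM_A=\frac1T\sum_{t=1}^T\hat\bPsi_t\hat\bPsi_t'$ as a small perturbation of its oracle counterpart $\bM_A^*=\frac1T\sum_{t=1}^T\bPsi_t\bPsi_t'=\bQ_A\big(\frac1T\sum_t\bZ_t\bZ_t'\big)\bQ_A'$, read off the magnitudes of the top $d$ eigenvalues of $\hat\bM_A$ from those of $\bM_A^*$ via Weyl's inequality, and then obtain (ii) by cancellation. First I would pin down the eigenvalue scale of the oracle matrix. Because $\bQ_A'\bQ_A=\bI_d$, the $d$ nonzero eigenvalues of $\bM_A^*$ coincide with those of $\frac1T\sum_t\bZ_t\bZ_t'$, whose largest is bounded above by $\frac1T\sum_t\norm{\bZ_t}_2^2=\Op{np^{1-\gamma}}$ using the signal-strength bound $\norm{\bZ_t}_2^2\asymp np^{1-\gamma}$ recorded above. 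For the $d$-th eigenvalue I would run the Kronecker-product/singular-value argument of Lemma~\ref{conv_rate_sigval_M_1} — factoring $\bZ_t$ through $\bB'\bB$ and the full-rank factor second-moment matrix guaranteed by Condition~\ref{cond:Xt_cov_fullrank_bounded} — to conclude $\lambda_j(\bM_A^*)\asymp np^{1-\gamma}$ for all $j=1,\dots,d$.

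Next I would control the perturbation. Writing $\bU_t=\hat\bPsi_t-\bPsi_t$ and expanding
\[
\hat\bM_A - \bM_A^* = \frac1T\sum_{t=1}^T\left(\bPsi_t\bU_t' + \bU_t\bPsi_t' + \bU_t\bU_t'\right),
\]
I would bound each summand through $\norm{\bPsi_t}_2=\norm{\bZ_t}_2=\Op{(np^{1-\gamma})^{1/2}}$ together with the first-step bound $\norm{\bU_t}_2^2=\Op{np\,\Delta_{npT}}$ established above, giving $\norm{\hat\bM_A-\bM_A^*}_2=\Op{np^{1-\gamma/2}\Delta_{npT}^{1/2}+np\,\Delta_{npT}}$. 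The arithmetic I would then verify is that, under $n^{\delta}p^{\gamma}T^{-1/2}=\oc{1}$ (which forces $p^{2\gamma}T^{-1}=\oc{1}$), both pieces are of strictly smaller order than $np^{1-\gamma}$: the first because $p^{\gamma}T^{-1/2}=\oc{1}$ and $(np^{1-\gamma})^{-1/2}=\oc{1}$, the second because $p^{2\gamma}T^{-1}=\oc{1}$ and $(np^{1-\gamma})^{-1}=\oc{1}$.

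With these two ingredients, (i) follows from Weyl: $\lambda_1(\hat\bM_A)\le\lambda_1(\bM_A^*)+\norm{\hat\bM_A-\bM_A^*}_2=\Op{np^{1-\gamma}}$, and $\lambda_d(\hat\bM_A)\ge\lambda_d(\bM_A^*)-\norm{\hat\bM_A-\bM_A^*}_2\asymp np^{1-\gamma}$, the subtracted term being negligible by the previous paragraph. Since $\bV_{npT}$ is diagonal with these $d$ eigenvalues, $\norm{\bV_{npT}}_2=\lambda_1(\hat\bM_A)=\Op{np^{1-\gamma}}$ and $\norm{\bV_{npT}^{-1}}_2=\lambda_d(\hat\bM_A)^{-1}=\Op{n^{-1}p^{\gamma-1}}$. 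For (ii) I would bound $\bH$ straight from its definition $\bH=\frac1T\sum_t\bZ_t\bZ_t'\bQ_A'\hat\bQ_A\bV_{npT}^{-1}$, using $\norm{\frac1T\sum_t\bZ_t\bZ_t'}_2=\Op{np^{1-\gamma}}$, $\norm{\bQ_A}_2=\norm{\hat\bQ_A}_2=1$, and the just-proved $\norm{\bV_{npT}^{-1}}_2=\Op{n^{-1}p^{\gamma-1}}$; the factors $np^{1-\gamma}$ and $n^{-1}p^{\gamma-1}$ cancel to give $\norm{\bH}_2=\Op{1}$.

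The main obstacle I anticipate is the lower bound $\lambda_d(\bM_A^*)\asymp np^{1-\gamma}$: unlike the upper bound it cannot be read off from $\norm{\bZ_t}_2^2$ alone (that only controls each summand from above), yet it is exactly what underpins $\norm{\bV_{npT}^{-1}}_2$ and hence the whole lemma, so it must be secured by the Kronecker/singular-value machinery of Lemma~\ref{conv_rate_sigval_M_1}. The only other delicate point is confirming the perturbation is genuinely lower-order than the signal eigenvalues, which is precisely where the rate condition $n^{\delta}p^{\gamma}T^{-1/2}=\oc{1}$ is consumed.
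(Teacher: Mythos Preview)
Your proposal is correct and follows essentially the same route as the paper: compare $\hat\bM_A$ to its oracle $\bM_A^*=\bQ_A\big(\frac1T\sum_t\bZ_t\bZ_t'\big)\bQ_A'$, invoke Weyl's inequality with the perturbation $\bU_t=\hat\bPsi_t-\bPsi_t$ controlled by the first-step rate $\Delta_{npT}$, and deduce (ii) by direct multiplication of norms in the definition of $\bH$. The only cosmetic difference is that the paper carries out the perturbation calculation on the $r\times r$ side $\frac{1}{npT}\sum_t\hat\bPsi_t'\hat\bPsi_t$ versus $\frac{1}{npT}\sum_t\bZ_t'\bZ_t$, whereas you work with the $n\times n$ matrix $\hat\bM_A$ directly; both give the same eigenvalue scale $np^{1-\gamma}$, and your version is arguably cleaner since $\bV_{npT}$ is defined from the $n\times n$ matrix. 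Your explicit flagging of the lower bound $\lambda_d\big(\frac1T\sum_t\bZ_t\bZ_t'\big)\asymp np^{1-\gamma}$ as the one nontrivial step is also apt: the paper simply asserts this, so adapting the Kronecker argument of Lemma~\ref{conv_rate_sigval_M_1} as you propose is the right way to fill it in.
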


\begin{proof}
The $d$ eigenvalues of $\frac{1}{np} \bV_{npT}$ are the same as those of 
\begin{eqnarray}
\frac{1}{npT} \sum_{t=1}^T \hat\bPsi_t' \hat\bPsi_t  & = & \frac{1}{npT} \sum_{t=1}^T \paran{\bQ_A \bZ_t + \bU_t}' \paran{\bQ_A \bZ_t + \bU_t} \nonumber \\
& = & \frac{1}{npT} \sum_{t=1}^T \paran{\bZ_t' \bZ_t + \bZ_t' \bQ_A' \bU_t + \bU_t' \bQ_A \bZ_t + \bU_t' \bU_t}, \nonumber
\end{eqnarray}
which follows from $\hat \bPsi=\bQ_A \bZ_t + \bU$ and $\bQ_A \bQ_A = I_d$. Thus
\begin{eqnarray}
\norm{\frac{1}{npT} \sum_{t=1}^T \hat\bPsi_t' \hat\bPsi_t -  \frac{1}{npT} \sum_{t=1}^T \bZ_t' \bZ_t}_2^2 & \le &  \norm{\frac{1}{np} \bZ_t' \bQ_A' \bU_t}_2^2 + \norm{\frac{1}{np} \bU_t' \bQ_A \bZ_t}_2^2 + \norm{\frac{1}{np} \bU_t' \bU_t}_2^2  \nonumber \\
& = & \Op{ p^{-\gamma} \Delta_{npT} + \Delta_{npT}^2 } \nonumber.
\end{eqnarray}

Using the inequality for the $k$th eigenvalue of any matrix $\bM$, that is $\abs{\lambda_k(\bM) - \lambda_k(\bM)} \le \norm{\bM - \bM}$, we have $\abs{ \lambda_k \paran{\frac{1}{npT} \sum_{t=1}^T \hat\bPsi_t' \hat\bPsi_t} - \lambda_k \paran{ \frac{1}{npT} \sum_{t=1}^T \bZ_t' \bZ_t } } = o_p(1)$. $\lambda_k\paran{\frac{1}{npT} \sum_{t=1}^T \bZ_t' \bZ_t} \asymp  p^{-\gamma}$, $k=1, \ldots, d$. Thus, $\norm{\frac{1}{np}\bV_{npT}}_{min} \asymp  p^{-\gamma} \asymp \norm{\bV_{npT}}_2$, $\norm{ np \bV_{npT}^{-1}}_{min} \asymp  p^{\gamma} \asymp \norm{np \bV_{npT}^{-1}}_2$, and $\norm{\bH}_2 = O_p(1)$.

\end{proof}

\begin{lemma}
\begin{equation*}
\norm{\hat\bQ_A - \bQ_A\bH}^2_2 = \Op{ p^{\gamma} \Delta_{npT} +  p^{2\gamma} \Delta_{npT}^2}
\end{equation*}
\end{lemma}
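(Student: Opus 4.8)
The plan is to obtain the bound directly from the additive decomposition
\[
\hat\bQ_A - \bQ_A\bH = \left( \bN_1 + \bN_2 + \bN_3 \right) \bV_{npT}^{-1}
\]
established above, together with the two immediately preceding lemmas that control $\norm{\bN_j}_2$, $j=1,2,3$, and $\norm{\bV_{npT}^{-1}}_2$. No fresh probabilistic input is required: the argument is pure norm bookkeeping resting on the triangle inequality and submultiplicativity of the spectral norm. First I would apply both to the decomposition to get
\[
\norm{\hat\bQ_A - \bQ_A\bH}_2 \le \left( \norm{\bN_1}_2 + \norm{\bN_2}_2 + \norm{\bN_3}_2 \right) \norm{\bV_{npT}^{-1}}_2 .
\]

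Next I would square and invoke the elementary inequality $(a+b+c)^2 \le 3(a^2+b^2+c^2)$ to pass to the squared norms already bounded:
\[
\norm{\hat\bQ_A - \bQ_A\bH}^2_2 \le 3\left( \norm{\bN_1}^2_2 + \norm{\bN_2}^2_2 + \norm{\bN_3}^2_2 \right) \norm{\bV_{npT}^{-1}}^2_2 .
\]
Then I would substitute the known rates. From the lemma on $\bN_j$ we have $\norm{\bN_1}^2_2 = \norm{\bN_2}^2_2 = \Op{n^2 p^{2-\gamma}\Delta_{npT}}$ and $\norm{\bN_3}^2_2 = \Op{n^2 p^2 \Delta_{npT}^2}$, so the parenthesized sum is $\Op{n^2 p^{2-\gamma}\Delta_{npT} + n^2 p^2 \Delta_{npT}^2}$; from the lemma on $\bV_{npT}$ we have $\norm{\bV_{npT}^{-1}}^2_2 = \Op{n^{-2} p^{2\gamma-2}}$. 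Multiplying, the factor $n^2$ cancels against $n^{-2}$, while the powers of $p$ combine as $p^{2-\gamma}\cdot p^{2\gamma-2} = p^{\gamma}$ and $p^{2}\cdot p^{2\gamma-2} = p^{2\gamma}$, giving exactly $\Op{p^{\gamma}\Delta_{npT} + p^{2\gamma}\Delta_{npT}^2}$, the claimed bound.

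The computation is routine, so I do not anticipate a genuine obstacle; the only points meriting care are two consistency checks. The first is confirming that the cross terms arising when squaring $\norm{\bN_1}_2+\norm{\bN_2}_2+\norm{\bN_3}_2$ are dominated by the squared terms already accounted for, which the inequality $(a+b+c)^2\le 3(a^2+b^2+c^2)$ settles cleanly. The second is verifying that the operator $\bV_{npT}^{-1}$ appearing in the decomposition is precisely the one whose norm is bounded in the preceding lemma, i.e.\ that it is built from the leading $d$ eigenvalues of $\frac{1}{T}\sum_{t=1}^T \hat\bPsi_t \hat\bPsi_t'$ through the defining relation $\frac{1}{T}\sum_{t=1}^T \hat\bPsi_t \hat\bPsi_t' \hat\bQ_A \bV_{npT}^{-1} = \hat\bQ_A$, so that the two estimates are genuinely compatible before they are chained together.
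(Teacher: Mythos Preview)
Your proposal is correct and is precisely the argument the paper has in mind: the paper's own proof reads in full ``Follow from Lemma 6, 7 and 8,'' and your write-up simply spells out the norm bookkeeping that this one-line reference suppresses. The decomposition, the triangle/submultiplicativity step, and the substitution of the rates for $\norm{\bN_j}_2^2$ and $\norm{\bV_{npT}^{-1}}_2$ all match, and the arithmetic on the powers of $n$ and $p$ is correct.
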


\begin{proof}
Follow from Lemma 6, 7 and 8.
\end{proof}

\begin{lemma}
\[
\norm{\bH - \bI_d}_2 =  O_p \left(\Delta_{npT} +  p^{\gamma}\Delta_{npT}^2 \right)  + O_p \left(\Delta_{npT} T^{-1} +  p^{\gamma}\Delta_{npT}^2 T^{-1} \right)^{1/2}.
\]
\end{lemma}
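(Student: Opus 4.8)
The plan is to combine the defining relation $\bH = \left(\frac{1}{T}\sum_{t=1}^{T}\bZ_t\bZ_t'\right)\bQ_A'\hat\bQ_A\,\bV_{npT}^{-1}$ with the representation $\hat\bQ_A - \bQ_A\bH = (\bN_1+\bN_2+\bN_3)\bV_{npT}^{-1}$ established just above. Adopting the normalization in which the columns of $\bQ_A$ are the ordered eigenvectors of $\frac1T\sum_t\bPsi_t\bPsi_t'$, the $d\times d$ matrix $\frac1T\sum_t\bZ_t\bZ_t'$ is diagonal with entries $\asymp np^{1-\gamma}$ and, by the preceding lemma, $\bV_{npT}$ is diagonal with $\norm{\bV_{npT}}_2\asymp np^{1-\gamma}\asymp\norm{\bV_{npT}^{-1}}_2^{-1}$. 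First I would split
\[
\bH - \bI_d = \left(\frac{1}{T}\sum_{t=1}^{T}\bZ_t\bZ_t'\right)\left(\bQ_A'\hat\bQ_A - \bI_d\right)\bV_{npT}^{-1} + \left[\left(\frac{1}{T}\sum_{t=1}^{T}\bZ_t\bZ_t'\right)\bV_{npT}^{-1} - \bI_d\right],
\]
which reduces the claim to controlling the rotation discrepancy $\bQ_A'\hat\bQ_A - \bI_d$ and the eigenvalue discrepancy between $\bV_{npT}$ and $\frac1T\sum_t\bZ_t\bZ_t'$.

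Both discrepancies are driven by the same object, the signal error $\bU_t = \hat\bPsi_t - \bPsi_t$, through $\hat\bM_A - \frac1T\sum_t\bPsi_t\bPsi_t' = \frac1T\sum_t\left(\bPsi_t\bU_t' + \bU_t\bPsi_t' + \bU_t\bU_t'\right)$ with $\bPsi_t = \bQ_A\bZ_t$. For the eigenvalue term I would use first-order perturbation: writing $\bq_{A,k}$ for the $k$th column of $\bQ_A$, the $k$th diagonal gap is governed by $\frac1T\sum_t\left(2\,\bZ_{t,k\cdot}\bU_t'\bq_{A,k} + \norm{\bU_t'\bq_{A,k}}_2^2\right)$. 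After rescaling by $\norm{\bV_{npT}^{-1}}_2\asymp(np^{1-\gamma})^{-1}$ and inserting $\norm{\bU_t}_2^2 = \Op{np\,\Delta_{npT}}$ from the preceding lemmas, the quadratic pieces $\frac1T\sum_t\norm{\bU_t'\bq_{A,k}}_2^2$ together with the deterministic bias of $\bU_t$ are responsible for the non-stochastic part $\Op{\Delta_{npT} + p^\gamma\Delta_{npT}^2}$ of the stated rate.

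The hard part is the cross term $\frac1T\sum_t\bZ_{t,k\cdot}\bU_t'\bq_{A,k}$ (and its analogue in $\bQ_A'\hat\bQ_A - \bI_d$, obtained by projecting $(\bN_1+\bN_2+\bN_3)\bV_{npT}^{-1}$ onto $\bQ_A$). A direct Cauchy--Schwarz bound $\frac1T\sum_t\norm{\bZ_t}_2\norm{\bU_t}_2$ is too crude, since it discards the averaging over $t$ and loses a factor $T^{1/2}$. Instead I would isolate the leading contribution of $\bU_t$ to this average as a sample autocovariance-type quantity that is mean-zero up to a deterministic bias, and then invoke the $\alpha$-mixing of $\{\vect{\bX_t}\}$ (Condition \ref{cond:vecXt_alpha_mixing}) together with the entrywise rate of Lemma \ref{lemma:entrywise_conv_rate_cov_vecXt} to gain the extra $T^{-1/2}$. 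This is precisely the mechanism that yields the fluctuation term $\Op{\Delta_{npT}T^{-1} + p^\gamma\Delta_{npT}^2 T^{-1}}^{1/2}$ and that sharpens the present bound relative to the coarser $\Op{(p^\gamma\Delta_{npT})^{1/2}}$ rate available for $\norm{\hat\bQ_A - \bQ_A\bH}_2$.

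Finally, for the rotation factor I would pin down $\bQ_A'\hat\bQ_A$ by combining the orthonormality constraints $\hat\bQ_A'\hat\bQ_A = \bI_d = \bQ_A'\bQ_A$, which force $\bQ_A'\hat\bQ_A$ to be asymptotically orthogonal, with the off-diagonal perturbation estimates above; distinctness of the $d$ eigenvalues (all of order $np^{1-\gamma}$) then upgrades ``asymptotically orthogonal'' to ``close to $\bI_d$'' at the same two rates. Assembling the eigenvalue and rotation contributions through the displayed split gives the claimed bound. I expect the sharp cross-term control --- establishing the $T^{-1/2}$ gain rather than settling for the operator-norm estimate --- to be the principal obstacle, exactly as it is the step that separates this rate from the rougher bound on $\norm{\hat\bQ_A - \bQ_A\bH}_2$.
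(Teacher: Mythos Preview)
Your route is genuinely different from the paper's, and it targets a sharper bound than the paper's argument actually delivers.

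The paper does \emph{not} split $\bH-\bI_d$ into a rotation part and an eigenvalue part, and it never attempts the $T^{-1/2}$ gain you describe. Instead it proceeds in three short steps. First it bounds $\norm{\bH'\bH-\bI_d}_2$ by writing $\bI_d=\hat\bQ_A'\hat\bQ_A$ and inserting $\hat\bQ_A=\bQ_A\bH+(\hat\bQ_A-\bQ_A\bH)$, so that only the already-known rate $\norm{\hat\bQ_A-\bQ_A\bH}_2^2=\Op{p^\gamma\Delta_{npT}+p^{2\gamma}\Delta_{npT}^2}$ enters. Second it shows the approximate commutation relation $\bH\cdot\frac{1}{np}\bV_{npT}\approx\frac{1}{npT}\sum_t\bZ_t\bZ_t'\,\bH$ by again substituting $\hat\bQ_A=\bQ_A\bH+(\hat\bQ_A-\bQ_A\bH)$ into the definition of $\bH$. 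Third it invokes Proposition~C.3 of Fan, Liao and Wang (2016) to pass from ``$\bH'\bH\approx\bI_d$ and $\bH$ approximately commutes with the diagonal eigenvalue matrix'' to $\norm{\bH-\bI_d}_2^2=\Op{p^\gamma\Delta_{npT}+p^{2\gamma}\Delta_{npT}^2}$. Note that this last display, not the lemma's stated rate with the extra $T^{-1}$, is what is actually used downstream in the proof of Theorem~\ref{thm:reest_Q_A}.

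Two concrete concerns about your plan. First, the cross term $\frac{1}{T}\sum_t\bZ_{t,k\cdot}\bU_t'\bq_{A,k}$ is not a clean sample autocovariance of an $\alpha$-mixing process: $\bU_t=\hat\bPsi_t-\bPsi_t$ depends on $\hat\bQ_{A,1},\hat\bQ_{A,2},\hat\bQ_B$, each of which is a functional of the \emph{entire} sample $\{\bY_s\}_{s\le T}$, so Lemma~\ref{lemma:entrywise_conv_rate_cov_vecXt} does not apply directly and the hoped-for $T^{-1/2}$ gain needs a much more delicate decoupling argument than you indicate. Second, your final step relies on ``distinctness of the $d$ eigenvalues,'' but the paper only assumes all nonzero eigenvalues are of the same order $np^{1-\gamma}$, with no eigengap condition; this is exactly why the paper borrows the Fan--Liao--Wang device rather than using first-order eigenvector perturbation. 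The paper's approach is less ambitious but sidesteps both obstacles by reducing everything to the single quantity $\norm{\hat\bQ_A-\bQ_A\bH}_2$ already controlled.
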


\begin{proof}
By definition, we have $\bH = \frac{1}{T} \sum_{t=1}^T \bZ_t \bZ_t' \bQ_A' \hat\bQ_A \bV^{-1}_{npT}$. Thus,
\begin{eqnarray}
\norm{\bI_d - \hat \bQ_A' \bQ_A\bH}_2 & = & \norm{ \hat \bQ_A'( \hat \bQ_A - \bQ_A\bH)}_F \nonumber \\
& \le &\norm{\hat\bQ_A - \bQ_A\bH}^2_F + \norm{\bQ_A(\hat\bQ_A' - \bQ_A \bH)}_F \nonumber \\
& = &  \Op{ p^{\gamma} \Delta_{npT} +  p^{2\gamma} \Delta_{npT}^2}  +  \Op{ p^{\gamma} \Delta_{npT} +  p^{2\gamma} \Delta_{npT}^2}^{1/2} \nonumber
\end{eqnarray}

\begin{eqnarray}
\norm{\hat \bQ_A' \bQ_A \bH - \bH'\bH }_2 & = & \norm{ \paran{\hat \bQ_A - \bQ_A \bH}' \bQ_A\bH }_2 = \Op{ p^{\gamma} \Delta_{npT} +  p^{2\gamma} \Delta_{npT}^2}^{1/2} \nonumber
\end{eqnarray}
Thus,
\[
\norm{\bH'\bH - \bI_d}_2 = \Op{ p^{\gamma} \Delta_{npT} +  p^{2\gamma} \Delta_{npT}^2}  +  \Op{ p^{\gamma} \Delta_{npT} + p^{2\gamma} \Delta_{npT}^2}^{1/2}.
\]
In addition, by the definition of $\bH = \frac{1}{T} \sum_{t=1}^T \bZ_t \bZ_t' \bQ_A' \hat\bQ_A \bV^{-1}_{npT}$, we have
\begin{eqnarray}
\norm{\bH \frac{1}{np}\bV_{npT} - \frac{1}{npT} \sum_{t=1}^T \bZ_t \bZ_t' \bQ_A' \bQ_A \bH }_2 & = & \norm{ \frac{1}{npT} \sum_{t=1}^T \bZ_t \bZ_t' \bQ_A' \paran{  \hat\bQ_A - \bQ_A \bH } }  \nonumber \\
& \le & \frac{1}{npT} \sum_{t=1}^T \norm{ \bZ_t \bZ_t' \bQ_A' \paran{  \hat\bQ_A - \bQ_A \bH } }  \nonumber \\
& = & \Op{  p^{-\gamma} \paran{ p^{\gamma} \Delta_{npT} +  p^{2\gamma} \Delta_{npT}^2}^{1/2} }. \nonumber
\end{eqnarray}

With the same argument of Proposition C.3 in \cite{fan2016projected}, we have
\[
\norm{\bH - \bI_d}_2^2 = \Op{ p^{\gamma} \Delta_{npT} +  p^{2\gamma} \Delta_{npT}^2}.
\]

\end{proof}

\noindent\textbf{Proof of Theorem \ref{thm:reest_Q_A}.}
\begin{proof}
\begin{eqnarray}
\norm{\hat \bQ_A - \bQ_A}^2_2 & \le & \norm{\hat \bQ_A - \bQ_A\bH}^2_2 + 
\norm{\bQ_A}^2_2 \norm{\bH - \bI_d}^2_2 \nonumber \\
& = & \Op{ p^{\gamma} \Delta_{npT} + p^{2\gamma} \Delta_{npT}^2} \nonumber \\
& = & \Op{ p^{\gamma} \Delta_{npT}} \nonumber \\
& = & \Op{ p^{2\gamma} T^{-1} + n^{-1} p^{\gamma-1}}. \nonumber
\end{eqnarray}
\end{proof}

\noindent\textbf{Proof of Theorem \ref{thm:space-krig-bound}.}
\begin{proof}
\begin{eqnarray}
\norm{\hat \bZ_t - \bZ_t}^2_2 & = & \norm{\hat \bQ_A' \hat \bPsi_t  - \bQ_A' \bPsi_t} \nonumber \\
& \le & \norm{\hat \bQ_A' \paran{ \hat \bPsi_t -  \bPsi_t} }_2^2 + \norm{\paran{\hat \bQ_A - \bQ_A}' \bQ_A \bZ_t }_2^2 \nonumber \\
& \le & \norm{\hat \bQ_A'}_2^2  \norm{ \hat \bPsi_t -  \bPsi_t }_2^2 + \norm{\paran{\hat \bQ_A - \bQ_A}'}_2^2 \norm{ \bQ_A}_2^2 \norm{ \bZ_t }_2^2  \nonumber \\
& = & \Op{n p \Delta_{npT} + n p^{1+\gamma} \Delta_{npT}^2}. \nonumber
\end{eqnarray}
\end{proof}

\subsection{Sieve approximation of spacial loading function}

Spacial loading function $\bQ_A(\bs) = (q_{a,1}(\bs), \cdots, q_{a,d}(\bs))$, now we want to approximate $q_{a,j}(\bs)$ with linear combination of basis functions, the approximating functions are $\hat{g}_j(\bs)$.
We estimate $\hat{g}_j(\bs)$ based on estimated value $\hat \bQ_{A, \cdot j}$'s, where $\hat q_{a,j}(\bs)= q_{a,j}(\bs) + e_{a,j}(\bs)$.
From Theorem \ref{thm:reest_Q_A} we have $\norm{\hat \bQ_{A, \cdot j} - \bQ_{A, \cdot j} }^2_2 = \Op{ p^{\gamma} \Delta_{npT}}$, $j \in [d]$, where $\Delta_{npT} =  p^{\gamma} T^{-1} + n^{-1} p^{-1}$.

\begin{lemma}
If $q_{a,j}(\bs)$ and  $\hat q_{a,j}(\bs)$ belong to H\"{o}lder class, then $| q_{a,j}(\bs) |^2_{\infty} \asymp n^{-1}$, $\abs{ e_{a,j}(\bs) }^2_{\infty} = n^{-1} p^{\gamma} \Delta_{npT}$.
\end{lemma}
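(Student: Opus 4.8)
The plan is to reduce both assertions to a single analytic fact: for a function $h$ whose shape lies in the H\"older class $\calA^{\kappa}_c(\calS)$ with a modulus of continuity that does not depend on $n$, and which is evaluated at the sampling sites $\bs_1,\dots,\bs_n\subset\calS\subset\mathbb{R}^2$, the continuous sup-norm and the empirical root-mean-square agree in order,
\[
\sup_{\bs\in\calS} h(\bs)^2 \;\asymp\; \frac{1}{n}\sum_{i=1}^{n} h(\bs_i)^2 .
\]
Granting this equivalence, both claims follow after identifying the empirical sum on the right. First I would record the two bookkeeping identities that feed into it. The orthonormality constraint in (\ref{eqn:Q-constraint}), $\bQ_A'\bQ_A=\bI_d$, gives for each $j\le d$ the exact relation $\sum_{i=1}^{n} q_{a,j}(\bs_i)^2 = \norm{\bQ_{A,\cdot j}}_2^2 = 1$. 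The re-estimation bound stated immediately before the lemma (inherited from Theorem~\ref{thm:reest_Q_A}) gives $\sum_{i=1}^{n} e_{a,j}(\bs_i)^2 = \norm{\hat\bQ_{A,\cdot j}-\bQ_{A,\cdot j}}_2^2 = \Op{p^{\gamma}\Delta_{npT}}$, using $e_{a,j}(\bs_i)=\hat q_{a,j}(\bs_i)-q_{a,j}(\bs_i)$ by definition.

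With these in hand the two claims are immediate. For the first, apply the equivalence to $h=q_{a,j}$ and substitute $\frac1n\sum_i q_{a,j}(\bs_i)^2 = n^{-1}$, yielding $|q_{a,j}(\bs)|_\infty^2\asymp n^{-1}$. For the second, note that $e_{a,j}=\hat q_{a,j}-q_{a,j}$ is a difference of two H\"older functions and hence again lies in (a scalar multiple of) $\calA^{\kappa}_c(\calS)$, so the same equivalence applies to $h=e_{a,j}$; combining it with $\frac1n\sum_i e_{a,j}(\bs_i)^2=\Op{n^{-1}p^{\gamma}\Delta_{npT}}$ gives $|e_{a,j}(\bs)|_\infty^2\asymp n^{-1}p^{\gamma}\Delta_{npT}$. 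Thus everything reduces to the norm-equivalence plus the two elementary sums.

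The main obstacle is the nontrivial half of the equivalence, $\sup_{\bs}h(\bs)^2\lesssim \frac1n\sum_i h(\bs_i)^2$ (the reverse inequality is trivial). The argument I would give proceeds by localization at the maximizer: let $\bs^{\ast}$ attain the supremum with $|h(\bs^{\ast})|=M$ and write $h=M\bar h$, where $\bar h$ has unit sup-norm and, by hypothesis, an $n$-independent H\"older seminorm $L_0$. H\"older continuity then forces $|\bar h|\ge \tfrac12$ throughout a ball $B(\bs^{\ast},\rho)$ of fixed radius $\rho\asymp (2L_0)^{-1/\alpha}$ (with $\alpha$ the H\"older exponent controlling the zeroth-order modulus), and under a quasi-uniform sampling design with fill distance $O(n^{-1/2})$ in $\mathbb{R}^2$ this ball captures a constant fraction $c\,n$ of the sites. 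Hence $\sum_i h(\bs_i)^2\ge (cn)(M^2/4)$, i.e. $M^2\le \tfrac{4}{c}\cdot\frac1n\sum_i h(\bs_i)^2$, which is exactly the needed bound. The delicate point — and the part that genuinely uses the hypothesis that $q_{a,j}$ and $\hat q_{a,j}$ belong to the H\"older class — is the requirement that the rescaled shapes $\sqrt{n}\,q_{a,j}$ and the rescaled error have H\"older seminorms bounded uniformly in $n$, so that the localizing radius $\rho$ stays of fixed order; together with a density (bounded fill-distance) condition on the design $\{\bs_i\}$, this is what prevents the mass of $h$ from concentrating away from the samples and thereby secures the $n^{-1}$ scaling.
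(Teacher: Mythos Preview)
Your proposal is correct and follows essentially the same route as the paper: first pin down the empirical $\ell_2$ norm of the column ($\sum_i q_{a,j}(\bs_i)^2\asymp 1$, respectively $\sum_i e_{a,j}(\bs_i)^2=\Op{p^{\gamma}\Delta_{npT}}$), then convert this to the sup-norm $n^{-1}$ scaling via smoothness. The paper obtains the first step through an eigenvalue sandwich on $\bQ_A\bQ_A'$ (where you use the orthonormality $\bQ_A'\bQ_A=\bI_d$ more directly), and for the second step simply invokes ``multivariate Taylor expansion and the Sandwich Theorem'' without further detail; your localization-at-the-maximizer argument is a fleshed-out version of that same idea, and you are right to flag that it implicitly needs a quasi-uniform design and an $n$-uniform H\"older seminorm for the rescaled shapes---assumptions the paper leaves tacit as well.
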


\begin{proof}
\begin{eqnarray*}
\lambda_{max}(\bQ_A \bQ_A') =  \lambda_{max}(\sum_{j=1}^{d} \bQ_{A, \cdot j} \bQ_{A, \cdot j}') \ge \lambda_{min}(\sum_{j=1}^{d} \bQ_{A, \cdot j} \bQ_{A, \cdot j}') & \ge & \sum_{j=1}^{d} \lambda_{min}(\bQ_{A, \cdot j}'\bQ_{A, \cdot j}) = \sum_{j=1}^{d} \sum_{i=1}^{n} A_{i j}^2 \\
\lambda_{min}(\bQ_A \bQ_A') = \lambda_{min}(\sum_{j=1}^{d} \bQ_{A, \cdot j} \bQ_{A, \cdot j}') \le \lambda_{max}(\sum_{j=1}^{d} \bQ_{A, \cdot j} \bQ_{A, \cdot j}') & \le & \sum_{j=1}^{d} \lambda_{max}(\bQ_{A, \cdot j}'\bQ_{A, \cdot j}) = \sum_{j=1}^{d} \sum_{i=1}^{n} A_{i j}^2
\end{eqnarray*}

Since $\norm{\bQ_A}^2_{min} \asymp \norm{\bQ_A}^2_{max} \asymp 1$, then $\| \bQ_{A, \cdot j} \|^2 \asymp 1$.
If $q_{a,j}(\bs)$ belongs to H\"{o}lder class, then $\abs{ q_{a,j}(\bs) }^2_{\infty} \asymp n^{-1}$ by multivariate Taylor expansion and Sandwich Theorem.

Similarly, $e_{a,j}(\bs) = \hat q_{a,j}(\bs) - q_{a,j}(\bs)$ belongs to H\"{o}lder class, from that fact that $\norm{\hat \bQ_{A, \cdot j} - \bQ_{A, \cdot j} }^2_2 = \Op{ p^{\gamma} \Delta_{npT}}$, $j \in [d]$, we have that $\abs{ e_{a,j}(\bs) }^2_{\infty} = n^{-1} p^{\gamma} \Delta_{npT}$.

\end{proof}

\begin{lemma}
$\norm{\hat{g}_j(\bs) - q_{a,j}(\bs)}_{\infty} = \Op{J_n^{-\kappa}  p^{1/2-\gamma/2}} + \Op{\sqrt{n^{-1} p^{\gamma} \Delta_{npT}}}$.
\end{lemma}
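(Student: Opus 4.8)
The plan is to realise the sieve smoother as a single linear operator acting on the sampled loading values and then split its error into a deterministic approximation part and a stochastic part inherited from the first-step estimation. Let $\boldsymbol{u}(\bs) = (u_1(\bs), \ldots, u_{J_n}(\bs))'$ collect the tensor-product basis functions spanning $\Theta_n$, let $\boldsymbol{W} = (u_k(\bs_i))_{i\in[n],\,k\in[J_n]}$ be the $n\times J_n$ design matrix, and write the smoother as the linear map $\Pi_n \bff = \boldsymbol{u}(\cdot)'(\boldsymbol{W}'\boldsymbol{W})^{-1}\boldsymbol{W}'\bff$ applied to a vector $\bff$ of sampled values. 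Since $\hat g_j = \Pi_n \hat \bQ_{A, \cdot j}$, $\hat q_{a,j}(\bs_i) = q_{a,j}(\bs_i) + e_{a,j}(\bs_i)$, and $\Pi_n$ is linear, I would decompose
\begin{equation*}
\hat g_j(\bs) - q_{a,j}(\bs) = \underbrace{\big(\Pi_n \bQ_{A, \cdot j}\big)(\bs) - q_{a,j}(\bs)}_{\text{approximation bias}} + \underbrace{\big(\Pi_n (\hat \bQ_{A, \cdot j} - \bQ_{A, \cdot j})\big)(\bs)}_{\text{stochastic error}},
\end{equation*}
and bound the two pieces separately in the uniform norm.

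For the bias term I would invoke standard sieve-approximation theory on $\Theta_n$. Since $q_{a,j}$ lies in the H\"older class $\calA^{\kappa}_c(\calS)$ with $\calS\subset\mathbb{R}^2$, the best element of $\Theta_n$ approximates it uniformly at rate $J_n^{-\kappa}$ times its H\"older norm, and passing to $\Pi_n$ inflates this only by the (bounded) Lebesgue constant of the basis. The preceding lemma fixes the magnitude of $q_{a,j}$ through $|q_{a,j}(\bs)|^2_\infty \asymp n^{-1}$: writing $q_{a,j}$ as the order-one function $a_j$ rescaled by the $\asymp n^{-1/2}$ column-normalisation factor of $\bQ_A$ shows its H\"older norm is itself of order $n^{-1/2}$, so the bias is $\Op{J_n^{-\kappa} n^{-1/2}}$, the first stated term.

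For the stochastic term set $\be_j = \hat \bQ_{A, \cdot j} - \bQ_{A, \cdot j}$, so that $\norm{\be_j}_2^2 = \Op{p^{\gamma}\Delta_{npT}}$ by Theorem \ref{thm:reest_Q_A}. Pointwise $(\Pi_n \be_j)(\bs) = \boldsymbol{u}(\bs)'(\boldsymbol{W}'\boldsymbol{W})^{-1}\boldsymbol{W}'\be_j$, and Cauchy--Schwarz gives
\begin{equation*}
\big| (\Pi_n \be_j)(\bs) \big| \le \sqrt{\boldsymbol{u}(\bs)'(\boldsymbol{W}'\boldsymbol{W})^{-1}\boldsymbol{u}(\bs)}\,\norm{\be_j}_2 .
\end{equation*}
The remaining task is to show $\boldsymbol{u}(\bs)'(\boldsymbol{W}'\boldsymbol{W})^{-1}\boldsymbol{u}(\bs) = \Op{n^{-1}}$ uniformly in $\bs$: for a regular design the Gram matrix satisfies $\lambda_{\min}(\boldsymbol{W}'\boldsymbol{W})\asymp n$ (up to basis normalisation) while $\norm{\boldsymbol{u}(\bs)}_2^2$ stays bounded, so the quadratic form is $\Op{n^{-1}}$. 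Combining with $\norm{\be_j}_2$ yields the stochastic bound $\Op{\sqrt{n^{-1}p^{\gamma}\Delta_{npT}}}$, the second stated term; this equals $|e_{a,j}(\bs)|_\infty$ from the preceding lemma and reflects that $\Pi_n$ is a near-isometry in the uniform norm on smooth inputs.

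The hard part will be the stochastic term, specifically the uniform control of $\boldsymbol{u}(\bs)'(\boldsymbol{W}'\boldsymbol{W})^{-1}\boldsymbol{u}(\bs)$. This demands a lower bound of order $n$ on $\lambda_{\min}(\boldsymbol{W}'\boldsymbol{W})$, i.e. that the (randomly sampled) locations $\bs_1,\ldots,\bs_n$ fill $\calS$ regularly enough that the empirical Gram matrix of the basis is well conditioned, together with a uniform upper bound on $\norm{\boldsymbol{u}(\bs)}_2$ and on the Lebesgue constant, so that $\Pi_n$ amplifies neither the deterministic remainder nor the estimation noise as $J_n$ grows. Once this design-conditioning control is in place the two pieces add to give the claimed rate; I would finally check that $J_n$ grows slowly enough that the bias $J_n^{-\kappa}n^{-1/2}$ and the projected-noise term dominate, so that lower-order sieve-remainder cross terms are negligible.
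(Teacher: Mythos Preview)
Your decomposition and overall strategy are sound, but the route you take for the stochastic term is genuinely different from the paper's. You treat the sieve smoother as a discrete least-squares operator $\Pi_n$ acting on the vector $\hat\bQ_{A,\cdot j}$, bound $\Pi_n\be_j$ via Cauchy--Schwarz, and reduce the problem to design-matrix conditioning ($\lambda_{\min}(\boldsymbol{W}'\boldsymbol{W})\asymp n$). The paper instead works entirely in function space: it invokes the preceding lemma to regard the estimation error $e_{a,j}(\cdot)=\hat q_{a,j}(\cdot)-q_{a,j}(\cdot)$ as itself a H\"older function with $|e_{a,j}|_\infty^2\asymp n^{-1}p^{\gamma}\Delta_{npT}$, writes $\bP\hat q_{a,j}-q_{a,j}=(\bP q_{a,j}-q_{a,j})+(\bP e_{a,j}-e_{a,j})+e_{a,j}$, and applies Schumaker's spline approximation bounds (Theorems 12.6--12.8) to both $q_{a,j}$ and $e_{a,j}$ to obtain $\Op{J_n^{-\kappa}n^{-1/2}}+\Op{J_n^{-\kappa}\sqrt{n^{-1}p^{\gamma}\Delta_{npT}}}+\Op{\sqrt{n^{-1}p^{\gamma}\Delta_{npT}}}$. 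This sidesteps any Gram-matrix or Lebesgue-constant analysis, at the price of the rather strong premise that the estimation residual is itself $\kappa$-smooth.

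One caution on your route: the claim $\boldsymbol{u}(\bs)'(\boldsymbol{W}'\boldsymbol{W})^{-1}\boldsymbol{u}(\bs)=\Op{n^{-1}}$ uniformly in $\bs$ is typically optimistic for local bases; the leverage in a $J_n$-dimensional sieve is usually of order $J_n/n$, not $1/n$, which would inject an extra $\sqrt{J_n}$ into the stochastic term. You would need either a specific basis normalisation or an additional argument exploiting smoothness of $\be_j$ to recover the paper's rate. The paper's function-space argument avoids this precisely because it never passes through the discrete least-squares inverse.
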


\begin{proof}
Following Theorem 12.6, 12.7 and 12.8 in \cite{schumaker2007spline}, we have
\begin{eqnarray}
\norm{\hat{g}_j(\bs) - q_{a,j}(\bs)}_{\infty} & = & \norm{\bP \hat q_{a,j}(\bs) - q_{a,j}(\bs)}_{\infty} \nonumber \\
& \le & \norm{\bP  q_{a,j}(\bs) - q_{a,j}(\bs)}_{\infty} + \norm{ \bP e_{a,j}(\bs) - e_{a,j}(\bs) }_{\infty} + \norm{ e_{a,j}(\bs) }_{\infty} \nonumber \\
& = & \Op{J_n^{-\kappa} n^{-1/2}} + \Op{J_n^{-\kappa} \sqrt{n^{-1} p^{\gamma} \Delta_{npT}}} + \Op{\sqrt{n^{-1} p^{\gamma} \Delta_{npT}}}  \nonumber \\
& = & \Op{J_n^{-\kappa} n^{-1/2}} + \Op{\sqrt{n^{-1} p^{\gamma} \Delta_{npT}}}  \nonumber
\end{eqnarray}

where $\bP \hat q_{a,j}(\bs)$ denotes the project of function $q_{a,j}(\bs)$ on the vector space spanned by the $J_n$ basis functions.
\end{proof}

\noindent\textbf{Proof of Theorem \ref{thm:Zt-bound}.}
\begin{proof}
We have $ \bxi_t(\bs_0) = \bQ_B \bZ'_t \bq_a(\bs_0) $ and $\hat \bxi_t(\bs_0) = \hat \bQ_B \hat \bZ'_t \hat \bg(\bs_0) $, thus
\begin{eqnarray}
\frac{1}{p} \norm{\hat \bxi_t(\bs_0) - \bxi_t(\bs_0)}_2^2 & = & \frac{1}{p} \norm{\hat \bQ_B \hat \bZ'_t \hat \bg(\bs_0) - \bQ_B \bZ'_t \bq_a(\bs_0)}_2^2  \nonumber \\
 & \le & \frac{1}{p} \norm{ \paran{ \hat \bQ_B - \bQ_B } \paran{\hat \bZ'_t -\bZ'_t } \paran{ \hat \bg(\bs_0) - \bq_a(\bs_0)} }_2^2  \nonumber \\
  &  & + \; \frac{1}{p} \norm{ \bQ_B \paran{\hat \bZ'_t -\bZ'_t } \paran{ \hat \bg(\bs_0) - \bq_a(\bs_0)} }_2^2  \nonumber \\
 &  & + \frac{1}{p} \norm{ \paran{ \hat \bQ_B - \bQ_B } \bZ'_t  \paran{ \hat \bg(\bs_0) - \bq_a(\bs_0)} }_2^2  \nonumber \\
 &  & + \frac{1}{p} \norm{ \paran{ \hat \bQ_B - \bQ_B } \paran{\hat \bZ'_t -\bZ'_t } \bq_a(\bs_0)}_2^2  \nonumber \\
 &  & + \frac{1}{p} \norm{ \bQ_B \bZ'_t \paran{ \hat \bg(\bs_0) - \bq_a(\bs_0)} }_2^2  \nonumber \\
 &  & + \frac{1}{p} \norm{ \bQ_B \paran{\hat \bZ'_t -\bZ'_t } \bq_a(\bs_0) }_2^2  \nonumber \\
 &  & + \frac{1}{p} \norm{ \paran{ \hat \bQ_B - \bQ_B } \bZ'_t \bq_a(\bs_0) }_2^2.  \nonumber
\end{eqnarray}

Obviously, last three terms are the dominiating terms.

\begin{eqnarray}
\frac{1}{p} \norm{ \bQ_B \bZ'_t \paran{ \hat \bg(\bs_0) - \bq_a(\bs_0)} }_2^2  & \le & \frac{1}{p} \norm{ \bQ_B }_2^2 \norm{\bZ'_t}_2^2 \norm{ \hat \bg(\bs_0) - \bq_a(\bs_0) }_2^2  \nonumber \\
& = & p^{-1} \cdot \Op{n p^{1-\gamma}} \cdot \Op{J_n^{-2\kappa} n^{-1} + n^{-1} p^{\gamma} \Delta_{npT}}  \nonumber \\
& = & \Op{ J_n^{-2\kappa}  p^{-\gamma} + \Delta_{npT}}. \nonumber
\end{eqnarray}

\begin{eqnarray}
\frac{1}{p} \norm{ \bQ_B \paran{\hat \bZ'_t -\bZ'_t } \bq_a(\bs_0) }_2^2 & \le & \frac{1}{p} \norm{ \bQ_B }_2^2 \norm{ \hat \bZ'_t -\bZ'_t }_2^2 \norm{ \bq_a(\bs_0) }_2^2 \nonumber \\
& = & \frac{1}{p} \cdot \Op{n p \Delta_{npT} + n p^{1+\gamma} \Delta_{npT}^2} \cdot \Op{n^{-1}} \nonumber \\
& = & \Op{ \Delta_{npT} +  p^{\gamma} \Delta_{npT}^2}. \nonumber
\end{eqnarray}

\begin{eqnarray}
\frac{1}{p} \norm{ \paran{ \hat \bQ_B - \bQ_B } \bZ'_t \bq_a(\bs_0) }_2^2  & = & \frac{1}{p}  \norm{ \paran{ \hat \bQ_B - \bQ_B } }_2^2 \norm{ \bZ'_t }_2^2 \norm{ \bq_a(\bs_0) }_2^2  \nonumber  \\
& = & \frac{1}{p} O_p(  p^{2\gamma} T^{-1} )  \cdot \Op{n p^{1-\gamma}} \cdot \Op{n^{-1}}  \nonumber \\
& = & \Op{  p^{\gamma} T^{-1} }
\end{eqnarray}

Thus, we have

\begin{eqnarray}
\frac{1}{p} \norm{\hat \bxi_t(\bs_0) - \bxi_t(\bs_0)}_2^2 & = &  \Op{ J_n^{-2\kappa}  p^{-\gamma} + \Delta_{npT} +  p^{\gamma} \Delta_{npT}^2 +  p^{\gamma} T^{-1} }. \nonumber
\end{eqnarray}

\end{proof}

\section{Tables and Plots}  \label{appendix:tableplots}

\begin{table}[htpb!]
\centering
\caption{Mean and standard deviations (in parentheses) of the estimated accuracy measured by $\calD(\hat{\cdot}, \cdot)$ for spatial and variable loading matrices. All numbers in the table are 10 times the true numbers for clear representation. The results are based on 200 simulations.}
\label{table:spdist_msd_table}
\resizebox{\textwidth}{!}{%
\begin{tabular}{ccc|ccccc|ccccl}
\hline
\multicolumn{3}{c|}{} & \multicolumn{5}{c|}{$\gamma = 0$} & \multicolumn{5}{c}{$\gamma=0.5$} \\ \hline
T & p & n & $\calD(\hat{\bA}_1, \bA_1)$ & $\calD(\hat{\bA}_2, \bA_2)$ & Average & $\calD(\hat{\bA}, \bA)$ & $\calD(\hat{\bB}, \bB)$ & $\calD(\hat{\bA}_1, \bA_1)$ & $\calD(\hat{\bA}_2, \bA_2)$ & Average & $\calD(\hat{\bA}, \bA)$ & $\calD(\hat{\bB}, \bB)$ \\ \hline
60 & 10 & 50 & 0.68(0.1) & 0.67(0.1) & 0.68(0.08) & 0.67(0.07) & 0.53(0.11) & 1.27(0.19) & 1.25(0.21) & 1.26(0.16) & 1.25(0.15) & 0.69(0.14) \\
120 & 10 & 50 & 0.45(0.06) & 0.46(0.06) & 0.45(0.05) & 0.45(0.04) & 0.5(0.12) & 0.83(0.12) & 0.84(0.12) & 0.84(0.09) & 0.84(0.08) & 0.63(0.13) \\
240 & 10 & 50 & 0.31(0.04) & 0.31(0.04) & 0.31(0.03) & 0.31(0.02) & 0.49(0.11) & 0.57(0.07) & 0.57(0.08) & 0.57(0.05) & 0.57(0.04) & 0.6(0.13) \\ \hdashline
60 & 20 & 50 & 0.5(0.07) & 0.5(0.09) & 0.5(0.06) & 0.5(0.06) & 0.52(0.08) & 1.18(0.21) & 1.18(0.24) & 1.18(0.17) & 1.17(0.15) & 0.69(0.1) \\
120 & 20 & 50 & 0.34(0.05) & 0.34(0.05) & 0.34(0.03) & 0.34(0.03) & 0.5(0.07) & 0.79(0.12) & 0.79(0.12) & 0.79(0.09) & 0.78(0.08) & 0.6(0.08) \\
240 & 20 & 50 & 0.23(0.03) & 0.23(0.03) & 0.23(0.02) & 0.23(0.02) & 0.47(0.06) & 0.52(0.07) & 0.52(0.07) & 0.52(0.05) & 0.52(0.05) & 0.54(0.06) \\ \hdashline
60 & 40 & 50 & 0.32(0.06) & 0.32(0.05) & 0.32(0.04) & 0.32(0.04) & 0.49(0.07) & 0.98(0.21) & 0.95(0.19) & 0.96(0.15) & 0.95(0.13) & 0.67(0.07) \\
120 & 40 & 50 & 0.21(0.03) & 0.21(0.03) & 0.21(0.02) & 0.21(0.02) & 0.48(0.05) & 0.63(0.1) & 0.62(0.1) & 0.63(0.08) & 0.62(0.07) & 0.58(0.06) \\
240 & 40 & 50 & 0.15(0.02) & 0.14(0.02) & 0.14(0.01) & 0.14(0.01) & 0.46(0.05) & 0.42(0.06) & 0.41(0.06) & 0.41(0.04) & 0.41(0.03) & 0.53(0.06) \\ \hline
60 & 10 & 100 & 0.63(0.06) & 0.63(0.07) & 0.63(0.05) & 0.63(0.05) & 0.36(0.07) & 1.13(0.12) & 1.13(0.13) & 1.13(0.1) & 1.13(0.09) & 0.48(0.09) \\
120 & 10 & 100 & 0.43(0.04) & 0.43(0.04) & 0.43(0.03) & 0.43(0.03) & 0.35(0.07) & 0.77(0.08) & 0.77(0.07) & 0.77(0.05) & 0.77(0.05) & 0.44(0.08) \\
240 & 10 & 100 & 0.3(0.03) & 0.3(0.03) & 0.3(0.02) & 0.3(0.02) & 0.34(0.07) & 0.54(0.05) & 0.53(0.05) & 0.54(0.03) & 0.54(0.03) & 0.41(0.08) \\ \hdashline
60 & 20 & 100 & 0.47(0.05) & 0.47(0.05) & 0.47(0.04) & 0.47(0.04) & 0.35(0.05) & 1.01(0.11) & 1.02(0.11) & 1.01(0.08) & 1.01(0.08) & 0.47(0.06) \\
120 & 20 & 100 & 0.32(0.03) & 0.32(0.03) & 0.32(0.02) & 0.32(0.02) & 0.34(0.05) & 0.68(0.07) & 0.68(0.07) & 0.68(0.05) & 0.68(0.05) & 0.41(0.05) \\
240 & 20 & 100 & 0.22(0.02) & 0.22(0.02) & 0.22(0.01) & 0.22(0.01) & 0.32(0.05) & 0.47(0.04) & 0.47(0.04) & 0.47(0.03) & 0.47(0.03) & 0.37(0.05) \\ \hdashline
60 & 40 & 100 & 0.29(0.03) & 0.29(0.03) & 0.29(0.02) & 0.29(0.02) & 0.34(0.04) & 0.77(0.1) & 0.77(0.1) & 0.77(0.07) & 0.77(0.07) & 0.47(0.04) \\
120 & 40 & 100 & 0.2(0.02) & 0.2(0.02) & 0.2(0.01) & 0.2(0.01) & 0.32(0.04) & 0.52(0.05) & 0.51(0.05) & 0.52(0.04) & 0.52(0.04) & 0.4(0.04) \\
240 & 40 & 100 & 0.14(0.01) & 0.14(0.01) & 0.14(0.01) & 0.14(0.01) & 0.32(0.03) & 0.35(0.03) & 0.36(0.03) & 0.35(0.02) & 0.35(0.02) & 0.35(0.04) \\ \hline
60 & 10 & 200 & 0.63(0.05) & 0.62(0.05) & 0.63(0.04) & 0.63(0.04) & 0.26(0.06) & 1.11(0.08) & 1.1(0.08) & 1.1(0.07) & 1.1(0.07) & 0.33(0.07) \\
120 & 10 & 200 & 0.43(0.03) & 0.43(0.03) & 0.43(0.02) & 0.43(0.02) & 0.25(0.05) & 0.77(0.05) & 0.76(0.05) & 0.77(0.04) & 0.77(0.04) & 0.31(0.06) \\
240 & 10 & 200 & 0.3(0.02) & 0.3(0.02) & 0.3(0.01) & 0.3(0.01) & 0.24(0.05) & 0.54(0.03) & 0.54(0.03) & 0.54(0.02) & 0.54(0.02) & 0.29(0.06) \\ \hdashline
60 & 20 & 200 & 0.47(0.04) & 0.47(0.04) & 0.47(0.03) & 0.47(0.03) & 0.25(0.03) & 0.99(0.07) & 0.98(0.07) & 0.98(0.06) & 0.98(0.06) & 0.34(0.05) \\
120 & 20 & 200 & 0.32(0.02) & 0.32(0.02) & 0.32(0.02) & 0.32(0.02) & 0.24(0.04) & 0.68(0.05) & 0.67(0.04) & 0.67(0.04) & 0.67(0.03) & 0.29(0.04) \\
240 & 20 & 200 & 0.22(0.01) & 0.22(0.01) & 0.22(0.01) & 0.22(0.01) & 0.23(0.03) & 0.47(0.03) & 0.47(0.03) & 0.47(0.02) & 0.47(0.02) & 0.26(0.04) \\ \hdashline
60 & 40 & 200 & 0.29(0.03) & 0.29(0.02) & 0.29(0.02) & 0.29(0.02) & 0.24(0.03) & 0.73(0.06) & 0.73(0.05) & 0.73(0.05) & 0.73(0.05) & 0.33(0.04) \\
120 & 40 & 200 & 0.2(0.01) & 0.2(0.01) & 0.2(0.01) & 0.2(0.01) & 0.23(0.02) & 0.5(0.03) & 0.5(0.03) & 0.5(0.03) & 0.5(0.03) & 0.28(0.03) \\
240 & 40 & 200 & 0.14(0.01) & 0.14(0.01) & 0.14(0.01) & 0.14(0.01) & 0.22(0.02) & 0.35(0.02) & 0.35(0.02) & 0.35(0.01) & 0.35(0.01) & 0.25(0.03) \\ \hline
60 & 10 & 400 & 0.61(0.04) & 0.61(0.04) & 0.61(0.04) & 0.61(0.04) & 0.18(0.04) & 1.08(0.07) & 1.08(0.07) & 1.08(0.06) & 1.08(0.06) & 0.24(0.05) \\
120 & 10 & 400 & 0.42(0.02) & 0.42(0.02) & 0.42(0.02) & 0.42(0.02) & 0.17(0.04) & 0.75(0.04) & 0.75(0.04) & 0.75(0.03) & 0.75(0.03) & 0.22(0.05) \\
240 & 10 & 400 & 0.3(0.01) & 0.3(0.01) & 0.3(0.01) & 0.3(0.01) & 0.17(0.04) & 0.52(0.02) & 0.53(0.02) & 0.53(0.02) & 0.53(0.02) & 0.2(0.04) \\ \hdashline
60 & 20 & 400 & 0.46(0.03) & 0.46(0.03) & 0.46(0.03) & 0.46(0.03) & 0.18(0.03) & 0.95(0.05) & 0.95(0.06) & 0.95(0.05) & 0.95(0.05) & 0.24(0.04) \\
120 & 20 & 400 & 0.31(0.02) & 0.31(0.02) & 0.31(0.01) & 0.31(0.01) & 0.17(0.02) & 0.65(0.04) & 0.65(0.03) & 0.65(0.03) & 0.65(0.03) & 0.2(0.03) \\
240 & 20 & 400 & 0.22(0.01) & 0.22(0.01) & 0.22(0.01) & 0.22(0.01) & 0.16(0.02) & 0.46(0.02) & 0.46(0.02) & 0.46(0.01) & 0.46(0.01) & 0.18(0.03) \\ \hdashline
60 & 40 & 400 & 0.29(0.02) & 0.29(0.02) & 0.29(0.02) & 0.29(0.02) & 0.17(0.02) & 0.7(0.04) & 0.7(0.05) & 0.7(0.04) & 0.7(0.04) & 0.24(0.02) \\
120 & 40 & 400 & 0.19(0.01) & 0.19(0.01) & 0.19(0.01) & 0.19(0.01) & 0.16(0.02) & 0.49(0.02) & 0.48(0.02) & 0.48(0.02) & 0.48(0.02) & 0.2(0.02) \\
240 & 40 & 400 & 0.13(0.01) & 0.13(0.01) & 0.13(0) & 0.13(0) & 0.16(0.02) & 0.34(0.02) & 0.34(0.01) & 0.34(0.01) & 0.34(0.01) & 0.18(0.02) \\ \hline
\end{tabular}%
}
\end{table}

\begin{table}[htpb!]
\centering
\caption{Mean and standard deviations (in parentheses) of the mean squared prediction errors (MSPE).}
\label{table:STprediction}
\resizebox{\textwidth}{!}{%
\begin{tabular}{ccc|c|cc|cc}
\hline
 &  &  & Spatial & \multicolumn{2}{c|}{Temporal MAR(1)} & \multicolumn{2}{c}{Temporal VAR(1)} \\ \hline
T & p & n & $MSPE(\hat{\by}_t(\bs_0)))$ & $MSPE(\hat{\by}_{t+1}(\bs)))$ & $MSPE(\hat{\by}_{t+2}(\bs)))$ & $MSPE(\hat{\by}_{t+1}(\bs)))$ & $MSPE(\hat{\by}_{t+2}(\bs)))$ \\ \hline
60 & 10 & 50 & 0.486(0.089) & 1.716(1.064) & 1.823(1.201) & 1.825(1.075) & 2.019(1.257) \\
120 & 10 & 50 & 0.471(0.06) & 1.658(1.121) & 1.634(1.116) & 1.705(1.133) & 1.732(1.144) \\
240 & 10 & 50 & 0.47(0.041) & 1.78(1.079) & 1.588(1.244) & 1.802(1.076) & 1.624(1.229) \\ \hdashline
60 & 20 & 50 & 0.424(0.069) & 1.592(1.004) & 1.657(1.033) & 1.69(1.032) & 1.819(1.061) \\
120 & 20 & 50 & 0.424(0.048) & 1.535(0.972) & 1.547(1.111) & 1.575(0.983) & 1.634(1.128) \\
240 & 20 & 50 & 0.419(0.036) & 1.619(0.985) & 1.426(1.05) & 1.64(0.988) & 1.463(1.047) \\ \hdashline
60 & 40 & 50 & 0.537(0.085) & 2.001(1.237) & 2.101(1.353) & 2.13(1.276) & 2.308(1.39) \\
120 & 40 & 50 & 0.534(0.055) & 2.006(1.345) & 1.94(1.286) & 2.065(1.36) & 2.051(1.296) \\
240 & 40 & 50 & 0.53(0.037) & 2.141(1.434) & 1.834(1.237) & 2.162(1.432) & 1.877(1.23) \\ \hline
60 & 10 & 100 & 0.067(0.009) & 1.597(0.966) & 1.647(1.006) & 1.685(0.969) & 1.82(1.03) \\
120 & 10 & 100 & 0.066(0.006) & 1.564(0.984) & 1.502(0.95) & 1.608(0.997) & 1.593(0.973) \\
240 & 10 & 100 & 0.065(0.004) & 1.631(0.92) & 1.476(1.02) & 1.65(0.915) & 1.514(1.015) \\ \hdashline
60 & 20 & 100 & 0.058(0.008) & 1.466(0.876) & 1.508(0.901) & 1.557(0.891) & 1.663(0.926) \\
120 & 20 & 100 & 0.058(0.005) & 1.45(0.883) & 1.403(0.915) & 1.489(0.891) & 1.478(0.922) \\
240 & 20 & 100 & 0.058(0.004) & 1.491(0.856) & 1.317(0.864) & 1.51(0.854) & 1.353(0.859) \\ \hdashline
60 & 40 & 100 & 0.072(0.01) & 1.845(1.075) & 1.893(1.105) & 1.975(1.113) & 2.085(1.126) \\
120 & 40 & 100 & 0.072(0.006) & 1.889(1.229) & 1.765(1.076) & 1.939(1.247) & 1.859(1.077) \\
240 & 40 & 100 & 0.072(0.005) & 1.961(1.223) & 1.707(1.074) & 1.984(1.22) & 1.754(1.068) \\ \hline
60 & 10 & 200 & 0.015(0.002) & 1.542(0.922) & 1.597(0.972) & 1.629(0.921) & 1.766(1) \\
120 & 10 & 200 & 0.015(0.001) & 1.515(0.976) & 1.454(0.913) & 1.557(0.982) & 1.538(0.934) \\
240 & 10 & 200 & 0.015(0.001) & 1.599(0.915) & 1.42(0.988) & 1.619(0.912) & 1.458(0.988) \\ \hdashline
60 & 20 & 200 & 0.013(0.002) & 1.419(0.86) & 1.461(0.88) & 1.51(0.88) & 1.61(0.897) \\
120 & 20 & 200 & 0.013(0.001) & 1.401(0.853) & 1.358(0.88) & 1.44(0.861) & 1.429(0.883) \\
240 & 20 & 200 & 0.013(0.001) & 1.464(0.859) & 1.276(0.84) & 1.481(0.86) & 1.308(0.838) \\ \hdashline
60 & 40 & 200 & 0.015(0.002) & 1.786(1.04) & 1.836(1.099) & 1.906(1.066) & 2.02(1.122) \\
120 & 40 & 200 & 0.015(0.001) & 1.828(1.211) & 1.714(1.042) & 1.875(1.22) & 1.808(1.049) \\
240 & 40 & 200 & 0.015(0.001) & 1.92(1.214) & 1.652(1.031) & 1.941(1.213) & 1.698(1.027) \\ \hline
60 & 10 & 400 & 0.014(0.002) & 1.63(0.965) & 1.714(1.033) & 1.727(0.965) & 1.893(1.059) \\
120 & 10 & 400 & 0.014(0.001) & 1.63(1.058) & 1.556(0.975) & 1.676(1.069) & 1.647(1.009) \\
240 & 10 & 400 & 0.014(0.001) & 1.711(0.985) & 1.527(1.077) & 1.728(0.983) & 1.568(1.075) \\ \hdashline
60 & 20 & 400 & 0.012(0.002) & 1.511(0.914) & 1.561(0.926) & 1.611(0.936) & 1.719(0.949) \\
120 & 20 & 400 & 0.012(0.001) & 1.502(0.923) & 1.452(0.934) & 1.543(0.931) & 1.534(0.945) \\
240 & 20 & 400 & 0.012(0.001) & 1.569(0.929) & 1.373(0.915) & 1.589(0.931) & 1.407(0.912) \\ \hdashline
60 & 40 & 400 & 0.015(0.002) & 1.907(1.108) & 1.964(1.166) & 2.033(1.14) & 2.159(1.181) \\
120 & 40 & 400 & 0.015(0.001) & 1.967(1.319) & 1.831(1.107) & 2.021(1.334) & 1.937(1.117) \\
240 & 40 & 400 & 0.015(0.001) & 2.062(1.314) & 1.775(1.118) & 2.086(1.31) & 1.823(1.111) \\ \hline
\end{tabular}%
}
\end{table}
\end{appendices}

\end{document}